\setlist[itemize]{noitemsep}
\setlist[itemize]{nolistsep}
\setlist[enumerate]{noitemsep}
\setlist[enumerate]{nolistsep}
\newcommand{\var}{\operatorname{Var}}
\newcommand{\poly}{\operatorname{poly}}
 \newcommand{\E}{\mathbb{E}}
\newcommand{\loglog}[1]{\log\log{#1}}
\newcommand{\polyloglog}{\operatorname{polyloglog}}
\newcommand{\leftchild}[1]{L(#1)}
\newcommand{\rightchild}[1]{R(#1)}
\newcommand{\arr}[1]{\mathcal{A}_{#1}}
\newcommand{\set}[1]{\mathcal{S}_{#1}^0}
\newcommand{\tempset}[1]{\mathcal{S}_{#1}}
\newcommand{\pivot}[1]{\tau_{#1}}
\newcommand{\window}[1]{w_{#1}}
\newcommand{\arrayskew}[1]{Q_{{#1}}}
\newcommand{\insertskew}[1]{\delta_{#1}}
\newcommand{\prob}{\pi}
\newcommand{\otherprob}{\rho}
\newcommand{\insertcount}[1]{\sigma_{#1}}
\newcommand{\insertnumber}[1]{\sigma_{#1}}
\newcommand{\lifetimeinsertcount}[1]{\upsilon_{#1}}
\newcommand{\windowcount}[1]{\mu_{#1}}
\newcommand{\failure}{expensive\xspace}
\newcommand{\inserts}[1]{\mathcal{I}_{{#1}}}
\newcommand{\free}[1]{F_{#1}}
\newcommand{\imax}{k_{\max}}
\newcommand{\calpha}{C_{\alpha}}
\newcommand{\cbeta}{C_{\beta}}
\newcommand{\Deltahat}{\widehat{\Delta}}
\newcommand{\freestart}[1]{F^0_{#1}}
\newcommand{\freediff}[1]{\delta_{#1}}
\newcommand{\pathlength}{d}
\newcommand{\windowpar}[1]{K_{#1}}
\newcommand{\windowtotal}[1]{t_{#1}}
\newcommand{\iskewfinal}[1]{D_{#1}}
\newcommand{\iskewfinaltwo}[2]{D_{#1,#2}}
\newcommand{\arraysize}[1]{m_{#1}}
\newcommand{\numwin}[1]{t}
\newtheorem{theorem}{Theorem}
\newtheorem{lemma}[theorem]{Lemma}
\newtheorem{proposition}[theorem]{Proposition}
\newtheorem{claim}[theorem]{Claim}
\newtheorem{corollary}[theorem]{Corollary}
\theoremstyle{remark}
\newcommand{\defn}[1]{\textbf{\emph{#1}}}
\renewcommand{\paragraph}[1]{\vspace{.2 cm} \noindent \textbf{#1}}
\newcommand{\algo}[1]{\smallskip\noindent\textsc{#1}}
\renewcommand{\epsilon}{\varepsilon}
\title{Nearly Optimal List Labeling}
\date{}
\author{Michael A.~Bender\\\small{Stony Brook University} 
\and Alex Conway\\\small{Cornell Tech} 
\and Mart\'{\i}n Farach-Colton\\\small{New York University}
\and Hanna Koml\'os \\\small{New York University} 
\and Michal Kouck\'y \\\small{Computer Science Institute} \\\small{of Charles University} 
\and William Kuszmaul \\\small{Harvard University} 
\and Michael Saks \\\small{Rutgers University}}
\begin{document}

    \maketitle

\thispagestyle{empty}
\setcounter{page}{0} 

\begin{abstract}
The list-labeling problem captures the basic task of storing a dynamically changing set of up to $n$ elements in sorted order in an array of size $m = (1 + \Theta(1))n$. The goal is to support insertions and deletions while moving around elements within the array as little as possible.

Until recently, the best known upper bound stood at $O(\log^2 n)$ amortized cost. This bound, which was first established in 1981, was finally improved two years ago, when a randomized $O(\log^{3/2} n)$ expected-cost algorithm was discovered. The best randomized lower bound for this problem remains $\Omega(\log n)$, and closing this gap is considered to be a major open problem in data structures. 
    
In this paper, we present the See-Saw Algorithm, a randomized list-labeling solution that achieves a nearly optimal bound of $O(\log n \polyloglog n)$ amortized expected cost. This bound is achieved despite at least three lower bounds showing that this type of result is impossible for large classes of solutions. 
\end{abstract} 

\section{Introduction}\label{sec:intro}

In this paper, we revisit one of the most basic problems in data structures: maintaining a sorted array, as elements are inserted and deleted over time \cite{ItaiKoRo81}. Suppose we are given an array of size $m = (1 + \Theta(1))n$, and a sequence of insertions and deletions, where up to $n$ elements can be present at a time. As the set of elements changes over time, we must keep the elements in sorted order within the array. Sometimes, to support an insertion, we may need to move around elements that are already in the array. The \defn{cost} of an insertion or deletion is the number of elements that we move, and the goal is to achieve as small a cost as possible.\footnote{One might prefer to simply analyze time complexity rather than cost. It turns out that, for the algorithms in this paper, these two metrics will be asymptotically equivalent.}

Since it was introduced in 1981 \cite{ItaiKoRo81}, this problem has been rediscovered in many different contexts \cite{Willard81, Andersson89, GalperinR93, Raman99}, and has gone by many different names (e.g., a sparse-array priority queue \cite{ItaiKoRo81}, the file-maintenance problem \cite{Willard81, Willard82, Willard86, Willard92, BenderFiGi17}, the dynamic sorting problem \cite{kuszmaul2023}, etc). In recent decades, it has become most popularly known as the \defn{list-labeling problem} \cite{dietz1990lower,dietz1994tight, BulanekKoSa12, BenderCFKKW22, Saks18}.

In the decades since it was introduced, the list-labeling problem has amassed a large literature on algorithms \cite{ItaiKoRo81,Willard82, Willard86, Willard92,BenderCoDe02twosimplified, BenderFiGi17,ItaiKa07,GalperinR93, BenderHu07, Katriel02,BenderFiGi05, BenderDeFa05, BenderCFKKW22, BenderBeJo16}, lower bounds \cite{dietz1990lower,dietz1994tight,dietz2004tight,zhang1993density,Saks18,BulanekKoSa12, BulanekKoSa13, BenderCFKKW22}, applications to both theory and practice \cite{Dietz82, BenderCoDe02twosimplified, BenderFiGi17, BenderCoDe02a, BenderDeFa05, BrodalFaJa02, BenderDuIa04, BenderFiGi05, BenderFaKu06, Willard81,Willard86,Willard92,Willard82, WheatmanX21, WheatmanX18, WheatmanB21, PandeyWXB21, LeoB21, LeoB19fastconcurrent}, other parameter regimes for $m$ and $n$ \cite{BabkaBCKS19, AnderssonLa90, zhang1993density, BirdSa07, BulanekKoSa13}, and open problems \cite{Saks18, gal2021computational}. We focus here on some of the major milestones and defer a more in-depth discussion of related work to Section \ref{sec:related}.

\paragraph{Past upper and lower bounds.} The list-labeling problem was introduced in 1981 by Itai, Kohheim, and Rodeh \cite{ItaiKoRo81}, who gave a simple deterministic solution with amortized cost $O(\log^2 n)$. Despite a great deal of interest \cite{ItaiKoRo81,Willard82, Willard86, Willard92,BenderCoDe02twosimplified, BenderFiGi17,ItaiKa07,GalperinR93, BenderHu07, Katriel02,BenderFiGi05, BenderDeFa05}, this bound would remain the state of the art for four decades.

Starting in the early 1990s, much of the theoretical progress was on lower bounds. The first breakthrough came from Dietz and Zhang \cite{dietz1990lower,dietz1994tight}, who showed $\Theta(\log^2 n)$ to be optimal for any \defn{smooth} algorithm, that is, any algorithm that spreads elements out evenly whenever it rebuilds some subarray. Later work by Bul\'anek, Kouck\'y, and Saks \cite{BulanekKoSa12} established an even more compelling claim---that the $\Theta(\log^2 n)$ bound is optimal for any deterministic algorithm. At this point it seemed likely that $\Theta(\log^2 n)$ should be optimal across all algorithms, including randomized ones, but the best lower bound known for randomized solutions, also due to Bul\'anek, Kouck\'y, and Saks \cite{BulanekKoSa13},  remained $\Omega(\log n)$.

Recent work by Bender et al.~\cite{BenderCFKKW22} showed that there is, in fact, a surprising separation between deterministic and randomized solutions. They construct a list-labeling algorithm with $O(\log^{3/2} n)$ expected cost per operation. Their algorithm satisfies a notion of \emph{history independence}, in which the set of array-slot positions occupied at any given moment reveals nothing about the input sequence except for the current number of elements. This history independence property ends up being crucial to the algorithm design and analysis \cite{BenderCFKKW22},\footnote{Roughly speaking, the authors use history independence as a mechanism to avoid the possibility of a clever input sequence somehow ``degrading'' the state of the data structure over time.} but the property also comes with a limitation: Bender et al.~show that any algorithm satisfying this type of history independence \emph{must} incur amortized expected cost $\Omega(\log^{3/2} n)$ \cite{BenderCFKKW22}. 

It remained an open question whether there might exist a list-labeling algorithm that achieves $o(\log^{3/2} n)$ cost, or even $O(\log n)$ cost. Such an algorithm would necessarily need to be non-smooth, randomized, \emph{and} history dependent---and it would need to employ these properties in algorithmically novel ways. 

\paragraph{This paper: nearly optimal list labeling.} In this paper, we present a list-labeling algorithm that achieves amortized expected cost $\tilde{O}(\log n)$ per operation. This matches the known $\Omega(\log n)$ lower bound \cite{BulanekKoSa13} up to a $\poly \log \log n$ factor. We refer to our list-labeling algorithm as the \emph{See-Saw Algorithm}.

Perhaps the most surprising aspect of the See-Saw Algorithm is how it employs history dependence. The algorithm breaks the array into a recursive tree, and attempts (with the help of randomization) to predict which parts of the tree it thinks more insertions will go to. It then gives more slots to the subproblems that it thinks are more likely to get more insertions. 

The idea that such predictions could be helpful would be very natural if we were to assume that our input were either \emph{stochastic} \cite{BenderHu06, BenderHu07} or came with some sort of \emph{prediction oracle} \cite{mccauley2024online}. What is remarkable about the See-Saw Algorithm is that the predictions it makes, and the ways in which it uses them, end up leading to near-optimal behavior on \emph{all possible input sequences}. In fact, to the best of our knowledge, the See-Saw Algorithm is the first example of a dynamic data structure using \emph{adaptivity} to improve the best worst-case (amortized expected) bound on cost for a problem.\footnote{Indeed, one can formalize this claim---it has remained an open question whether there exists \emph{any} data structural problem for which history dependence is necessary to achieve optimal edit-cost bounds \cite{NaorTe01}. Our paper does not quite resolve this problem for the following technical reason: the \emph{lower bound} for history-independent list labeling \cite{BenderCFKKW22} applies only to a weaker notion of history independence than the one in \cite{NaorTe01}.}

Of course, randomization is also important. If the See-Saw Algorithm were deterministic, then the input sequence could easily trick it into making bad decisions.  Thus, it is not just the fact that the algorithm makes predictions based on the past, but also the way in which those predictions interact with the randomness of the algorithm that together make the result possible.\footnote{Interestingly, despite the importance of randomness in our algorithm, the actual \emph{amount} of randomness is relatively small. In fact, one can straightforwardly implement the algorithm using $O((\log \log n)\log n)$ random bits, where $O(\log \log n)$ random bits are used to generate the randomness used within each level of the recursion tree.}

Our result puts the complexity of maintaining a sorted array almost on par with the complexities of other classical sorting problems \cite{meguellati2021survey, alon1996matching, brodal2013survey}. Whether or not the See-Saw Algorithm has a practical real-world counterpart remains to be seen. Such a result could have extensive applications \cite{WheatmanX21, WheatmanX18, WheatmanB21, PandeyWXB21, LeoB21, LeoB19fastconcurrent} to systems that use list labeling as a locality-friendly alternative to binary search trees.

\paragraph{A remark on other parameter regimes. }In addition to the setting where $m = (1 + \Theta(1))n$, list labeling has also been studied in other parameter regimes, both where $m = (1 + \delta)n$ for some $\delta = o(1)$ (our results naturally extend to this regime with cost $\tilde{O}(\delta^{-1} \log n)$), and where $m \gg n$. An interesting feature of the $m \gg n$ regime is that, when $m = n^{1 + \Theta(1)}$, the optimal cost becomes $\Theta(\log n)$, even for randomized solutions \cite{BulanekKoSa13}. Thus, a surprising interpretation of our result is that there is \emph{almost no complexity gap} between the setting where $m = (1 + \Theta(1))n$ and the setting where $m = \poly(n)$. In both cases, the optimal amortized expected complexity is $\tilde{\Theta}(\log n)$. 

\paragraph{Implications to other algorithmic problems. }We remark that there are several algorithmic problems whose best known solutions rely directly on list labeling, and for which list-labeling improvements immediately imply stronger results. 

One significant application is to cache-oblivious B-trees \cite{BenderDeFa05, BrodalFaJa02, BenderDuIa04, BenderFiGi05, BenderFaKu06}, where our list-labeling algorithm can be used to reduce the best known I/O complexity from $O(\log_B N + (\log^{3/2} N)/B)$  \cite{BenderCFKKW22} to $O(\log_B N) + \tilde{O}((\log N)/B)$, which, in turn, reduces to the optimal bound of $O(\log_B N)$ so long as $B = (\log \log n)^{\omega(1)}$.\footnote{Cache-oblivious B-trees make use of so-called packed-memory arrays \cite{BenderDeFa05,BenderDeFa00,BenderFiGi05}, which are list-labeling solutions with the additional property that the array never contains more than $O(1)$ free slots between consecutive elements. As discussed in Section \ref{sec:alg}, our results can be extended to also offer this additional property.}

Another application is to the variation of list labeling in which $n = m$ elements are inserted~\cite{AnderssonLa90,zhang1993density,BirdSa07, BenderCFKKW22} without deletion, that is, an array is filled all the way from empty to full. Here, our results imply an overall amortized expected bound of $\tilde{O}(\log^2 n)$ cost per insertion (see Corollary \ref{cor:fillup}), improving over the previous state-of-the-art of $\tilde{O}(\log^{5/2} n)$ \cite{BenderCFKKW22}.

\paragraph{Paper outline. }The rest of the paper proceeds as follows. We begin in Sections \ref{sec:prelim-salted} and \ref{sec:simplifying} with preliminaries and statements of our main results. We present the See-Saw Algorithm in Section \ref{sec:alg}. We then present the analysis of the algorithm, modulo two central technical claims, in Section \ref{sec:analysis}---these technical claims are proven in Sections \ref{sec:main} and \ref{sec:claim}. Finally, Section \ref{sec:related} gives a detailed review of related work.

\section{Preliminaries}
\label{sec:prelim-salted}

\paragraph{Defining the list-labeling problem. }In the list-labeling problem, there are two parameters, the array size $m$, and the maximum number of elements $n$. We will be most interested in the setting where $m = (1 + \Theta(1))n$, but to be fully general, we will also allow $m = (1 + \delta) n$ for $\delta = o(1)$.

We are given an (online) sequence of insertions and deletions, where at any given moment there are up to $n$ elements present. The elements are assumed to have a total order, and our job is to keep the current set of elements in sorted order within the size-$m$ array. As insertions and deletions occur, we may choose (or need) to move elements around within the array. The \defn{cost} of an insertion or deletion is defined to be the number of elements that get moved.

When discussing randomized solutions, one assumes that the input sequence is generated by an oblivious adversary. In other words, the input sequence is independent of the random bits used by the list-labeling algorithm.

\paragraph{Conventions.} To simplify discussion throughout the paper, we will generally ignore rounding issues. Quantities that are fractional but should be integral can be rounded to the closest integer without affecting the overall analysis by more than a negligible error.

We will always be interested in bounding \emph{amortized expected cost}. A bound of $C$ on this quantity means that, for all $i$, the expected total cost of the first $i$ operations is $O(i C)$.

\section{Main Results}\label{sec:simplifying}

Formally, the main result of this paper is that:

\begin{restatable}{theorem}{actualmain}
For $\delta \in (0, 1)$, and $m = (1 + \delta)n$, there is a solution to the list-labeling problem on an array of size $m$, and with up to $n$ elements present at a time, that supports amortized expected cost $O(\delta^{-1} (\log n) (\log \log n)^3)$ per insertion and deletion.
\label{thm:actualmain}
\end{restatable}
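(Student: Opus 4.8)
The plan is to build the See-Saw Algorithm as a recursive decomposition of the size-$m$ array into a balanced tree, and to analyze the amortized expected cost level-by-level. At the top level, we split the array into a constant number of contiguous subarrays, each with its own slack proportional to $\delta$; recursively, each subarray of capacity roughly $n_v$ gets an array of size $(1+\delta)n_v$ and is itself split. A rebuild at a node $v$ redistributes $v$'s elements among its children's subarrays, and the cost of that rebuild is $O(n_v)$. The first step is to set up this recursion carefully: I would fix a branching factor (polylogarithmic in $n$, so that the recursion has depth $O(\log n / \log\log n)$), and track for each node $v$ a capacity budget $m_v$, the current element count, and the slack $\delta_v$ allotted to it. The key novelty is that the split of a node's slack among its children is \emph{not} uniform: the algorithm maintains, from the history of insertions, a prediction of how future insertions will distribute among the children, and it allocates more slack to children predicted to receive more insertions. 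Randomization enters by salting these predictions (the footnote in the excerpt indicates $O(\log\log n)$ random bits per level) so that an oblivious adversary cannot force the predictions to be systematically wrong.

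Next I would set up the amortization. Charge the cost $O(n_v)$ of a rebuild at $v$ to the insertions and deletions that have occurred ``under'' $v$ since its last rebuild. The naive smooth analysis gives $O(\log^2 n)$ because each of the $O(\log n)$ levels can be forced to rebuild after only $\Theta(n_v/\log n)$-ish operations in the worst case; the whole point of the non-uniform, randomized slack allocation is to show that, in expectation over the algorithm's coins and \emph{against any oblivious input}, a node $v$ survives $\Omega(n_v / \polyloglog n)$ operations between rebuilds. This would give each level an amortized expected cost of $\polyloglog n$, and summing over the $O(\log n/\log\log n)$ levels yields $\tilde O(\log n)$, with the extra $\delta^{-1}$ coming from the fact that each unit of slack is a $\delta_v$ fraction, so a node with slack $\delta_v$ absorbs only $\Theta(\delta_v m_v)$ insertions before overflow — and one has to argue the $\delta_v$'s telescope correctly across levels to produce the single global $\delta^{-1}$ factor rather than a $\delta^{-1}$ per level. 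The reduction to Theorem~\ref{thm:actualmain} from the $m=(1+\Theta(1))n$ case is then a matter of running the whole construction with $\delta_v$ scaled down by $\delta$ at the root.

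The technical heart — and the main obstacle — is proving the survival bound: that the predicted slack allocation beats the adversary in expectation. This is exactly what the excerpt flags as the ``two central technical claims'' deferred to later sections. The difficulty is that the input is adaptive across time (though oblivious to the coins), so a sequence that is ``hard'' for the current prediction can be followed by a sequence that is hard for the updated prediction; one must show the algorithm's predictions track the empirical insertion distribution fast enough, and that the randomized salting prevents the adversary from exploiting the deterministic skeleton of the prediction rule. I would handle this with a potential-function argument on each node's subproblem: define a potential measuring the discrepancy between the allocated slack vector and the realized insertion counts, show that each operation changes it by $O(1)$ in expectation while a rebuild is only triggered when the potential is large (of order the branching factor times a $\polyloglog n$ margin), and conclude a rebuild happens only every $\Omega(\delta_v m_v/\polyloglog n)$ operations in expectation. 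Getting the constants and the interaction between levels to line up — so that conditioning on a child's behavior does not blow up the parent's analysis — is where the bulk of the work in Sections~\ref{sec:main} and~\ref{sec:claim} will go.
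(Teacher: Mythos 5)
Your proposal captures the correct high-level silhouette---a recursive tree decomposition, non-uniform allocation of slack to children based on predictions from the insertion history, randomization (``salting'') so an oblivious adversary cannot systematically defeat the predictions, and a reduction from general $\delta$ to the balanced case---but it misidentifies both the source of the $\log^2 n \to \tilde{O}(\log n)$ improvement and the technical mechanism that delivers it.

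First, a structural difference: you propose a polylogarithmic branching factor so the tree has depth $O(\log n / \log\log n)$, hoping to shave a $\log\log n$ factor off the level count. The paper uses plain binary branching with depth $\Theta(\log n)$. More fundamentally, you locate the win in ``survival time'': you claim that with predicted slack, a node at level $\ell$ with capacity $n_v$ survives $\Omega(n_v / \polyloglog n)$ operations before a forced rebuild, so each level contributes only $\polyloglog n$ amortized. That is not how the See-Saw Algorithm works. In the paper, rebuilds are \emph{deliberately frequent}: each non-leaf subproblem $\prob$ picks a random window size $w_\prob = m_\prob / (\alpha 2^{K_\prob})$ and rebuilds its children at the end of every window, regardless of how ``well'' the predictions performed. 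The rebuild cost is controlled purely by the choice of the distribution over $K_\prob$ (Lemma~\ref{lem:simplecosts}), giving $O((\log\log n)^3)$ amortized per insertion per level, hence $O((\log n)(\log\log n)^3)$ over all levels. There is no ``survival time'' argument and no rebuild trigger driven by a discrepancy potential.

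The actual technical heart---which your sketch does not reach---is bounding the cost incurred in \emph{expensive leaves}: subproblems whose initial density exceeds $3/4$, which fall back to the classical $O(\log^2 n)$ algorithm. The paper tracks, along a random insertion's root-to-leaf path, the free-slot density $F$ and the per-level density increment $\Delta_\prob(v)$. The See-Saw Lemma (Lemma~\ref{lemma:main}) shows the algorithm's adaptive array-skew choice creates a see-saw tradeoff: $\E[\Delta_\prob(v)^2] \le \frac{100\imax}{\beta}\E[\Delta_\prob(v)] + 2^{-\imax}$, i.e., either the variance of the density increment is small, or its expectation is usefully positive. The proof of this lemma hinges on a telescoping identity ($S^h - S^\ell = 2\sum R^k$, Proposition~\ref{prop:tree}) relating the adversary's behavior at different window scales, which is the reason for the specific form of the probability distribution over $K_\prob$ and for using array skews only in even-numbered windows. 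This variance-versus-mean relation is then fed into a Freedman-type martingale concentration bound (Claim~\ref{claim:conc}) to show the density along a random insertion's path drops below $1/4$ with probability $O(1/\log n)$, so expensive leaves contribute only $O(\log n)$ amortized. Your proposed ``potential measuring discrepancy between allocated slack and realized insertion counts, changing $O(1)$ per operation and triggering rebuilds at a large threshold'' is a qualitatively different argument; there is no reason the adversary could not keep this discrepancy potential bounded while still steering density upward along one particular root-to-leaf path, which is precisely the failure mode the paper's density-drift analysis is designed to rule out.
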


\begin{corollary}
    If $m = (1 + \Theta(1))n$, then there is a solution to the list-labeling problem with amortized expected cost $O((\log n) (\log \log n)^3)$ per operation.
\end{corollary}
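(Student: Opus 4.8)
The plan is to derive this corollary immediately from Theorem~\ref{thm:actualmain} by specializing the density parameter $\delta$ to a constant. The hypothesis $m = (1+\Theta(1))n$ means there is an absolute constant $c > 0$ with $m \ge (1+c)n$ for all sufficiently large $n$; I would set $\delta := \min\{c, 1/2\}$, which lies in $(0,1)$ and is a constant.

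First I would observe that having more than $(1+\delta)n$ array slots available can only help. Given an array of size $m \ge (1+\delta)n$, the algorithm can simply restrict itself to the prefix consisting of the first $\lceil (1+\delta)n \rceil$ slots and never place an element in the remaining slots. Keeping the current set of elements sorted within this prefix keeps them sorted within the whole array, and the cost of each operation (the number of elements moved) is unaffected. Hence any solution for array size $(1+\delta)n$ yields a solution for array size $m$ with the same amortized expected cost.

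Then I would invoke Theorem~\ref{thm:actualmain} with this constant $\delta$, obtaining a solution of amortized expected cost
\[
O\big(\delta^{-1}(\log n)(\log\log n)^3\big) = O\big((\log n)(\log\log n)^3\big),
\]
since $\delta^{-1} = O(1)$. This matches the claimed bound and completes the proof.

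There is essentially no real obstacle here; the only points requiring a moment's care are the reduction from a general array of size $m = (1+\Theta(1))n$ down to the exact size $(1+\delta)n$ that Theorem~\ref{thm:actualmain} addresses, and making sure the chosen $\delta$ stays inside the half-open interval $(0,1)$ demanded by the theorem (handled by capping at $1/2$ when the hidden constant $c$ is large, and by taking $\delta = c$ when it is small).
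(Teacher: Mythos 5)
Your proof is correct and is exactly the routine specialization the paper has in mind (the paper states this as an immediate corollary of Theorem~\ref{thm:actualmain} without writing out a proof). The only non-automatic point is the reduction from a general $m \ge (1+\delta)n$ to exactly $(1+\delta)n$ via restricting to a prefix of the array, and you handle that cleanly, along with the capping of $\delta$ to stay within $(0,1)$.
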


In Appendix \ref{app:reductions}, we present a series of standard w.l.o.g.~reductions that together reduce the task of proving Theorem \ref{thm:actualmain} to the task of proving the following equivalent but simpler-to-discuss result:

\begin{restatable}{theorem}{thmnice}
Let $m$ be sufficiently large, and $n = m / 2$. The See-Saw Algorithm is a list-labeling algorithm that, starting with $m / 4$ elements, can support $m/4$ insertions with amortized expected cost $O((\log n) (\log \log n)^3)$.
    \label{thm:nicetoprove}
\end{restatable}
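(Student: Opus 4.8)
The plan is to analyze the See-Saw Algorithm separately on each level of its recursive decomposition of the array, showing that each level contributes only $\polyloglog n$ amortized expected cost per insertion; since the decomposition has $O(\log n / \log\log n)$ levels, this yields the claimed $O((\log n)(\log\log n)^3)$ bound.

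First I would fix the recursive structure. Partition the size-$m$ array into a balanced tree of branching factor $b$, with $b$ polylogarithmic in $n$, so that the depth is $L = O(\log_b m) = O(\log n/\log\log n)$ and the recursion bottoms out at blocks of constant size. Every node owns a contiguous block of slots and holds the elements routed into that block; each internal node $v$ maintains a partition of its slot block among its $b$ children together with the current number of elements in each child. An insertion descends a root-to-leaf path; at each node $v$ on the path the algorithm checks whether the child counts have drifted too far from $v$'s slot partition and, if so, \emph{rebalances} $v$ --- recomputing a partition of $v$'s slots among its children and physically relocating elements. The cost of the insertion is the total number of elements moved in all the rebalances it triggers (the constant-size leaves are trivial). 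Crucially, the recomputed partition is not the ``fair share'' dictated by the current counts but a \emph{randomly perturbed} allocation: the algorithm effectively guesses how many future insertions each child will get and hands out slots according to that noisy guess --- this is the see-saw, tipping slots toward the children it predicts will grow.

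Next I would carry out an amortized analysis one node at a time. For a fixed node $v$ I would introduce a potential measuring the misalignment between $v$'s current slot partition and the current element distribution among its children, and argue that (i) each insertion through $v$ raises this potential by $O(1)$ in expectation, (ii) a rebalance of $v$ is triggered only once the potential has grown past a threshold, and (iii) the cost of a rebalance of $v$ is at most the number of slots of $v$ and is charged against the potential it releases, up to a $\polyloglog n$ slack coming from the fact that the new partition is itself randomly perturbed rather than optimal. Summing (iii) over all rebalances at a given level and over the $m/4$ insertions reduces the per-level bound to showing that rebalances are \emph{rare}: many insertions into the subtree of $v$ must occur before the next rebalance of $v$ is forced. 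This rarity statement, together with the per-rebalance cost bound, is exactly what the two central technical claims proved in Sections~\ref{sec:main} and~\ref{sec:claim} supply.

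The crux --- and what I expect to be the main obstacle --- is proving that the randomized partitioning rule genuinely makes rebalances rare against an \emph{oblivious} adversary. Deterministically this is hopeless: the adversary would repeatedly insert just enough elements into whichever child sits closest to its drift threshold, force a rebalance, then switch children, defeating any see-saw. The randomized rule blunts this because the perturbations are independent of the input, so the adversary cannot align its insertion pattern with the (hidden) thresholds; consequently, in expectation, a large number of insertions is needed to drive $v$ past its threshold again. Turning this intuition into a proof requires an anti-concentration-flavored argument that is robust to the adversary choosing insertion targets adaptively (but obliviously), and --- more delicately --- that composes both across the $L$ levels and across sibling subproblems that may draw on shared randomness within a level. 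Getting this composition right, rather than the local potential bookkeeping, is the heart of the argument. Once it is in place, multiplying the $O(\polyloglog n)$ per-level amortized expected cost by the $O(\log n/\log\log n)$ levels gives amortized expected cost $O((\log n)(\log\log n)^3)$ over the $m/4$ insertions, proving the theorem.
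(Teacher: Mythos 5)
Your proposal diverges from the See-Saw Algorithm as actually defined, and the divergences matter. The paper's subproblem tree is \emph{binary}, giving depth $\Theta(\log n)$ rather than $\Theta(\log n/\log\log n)$ with polylogarithmic branching. More importantly, the randomness sits in a different place than you put it: in the See-Saw Algorithm, the array skew given to a node's children at the start of a rebuild window is a \emph{deterministic} function of the insertion skew observed in the \emph{previous} window, while the only randomness is in the choice of window length $w_\prob$. You describe the algorithm as "handing out slots according to a noisy guess" of future growth; the actual algorithm makes a history-dependent (not noisy) prediction from the observed past, and the random window length is what prevents an oblivious adversary from systematically fooling that prediction. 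This distinction is load-bearing: Lemma~\ref{lemma:main} is proved by the telescoping identity of Proposition~\ref{prop:tree}, which relates the insertion-skew sums $S^k$ and $R^k$ across different possible window lengths, and that identity depends on the skew being read off the actual insertion history, not drawn as an independent perturbation.

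Your amortized-analysis framing also misses the central difficulty. You propose to show "rebalances are rare" via a misalignment potential, but in the See-Saw Algorithm rebuilds occur at a controlled, essentially fixed rate by design (once per rebuild window), and their cost is handled directly and routinely in Lemma~\ref{lemma:cost}. The genuinely hard part is bounding the cost of \emph{expensive leaves}: subproblems created with density above $3/4$, which fall back to the classical $O(\log^2 n)$ algorithm. The paper bounds the probability that a uniformly random insertion's root-to-leaf path enters such a region by tracking the free-slot density along the path as a quasi-random-walk with increments $\Delta_i(v)$, showing via the See-Saw Lemma that each increment's conditional variance is dominated (up to a $\polyloglog n$ factor) by its conditional mean, and finishing with a Freedman-type martingale bound (Claim~\ref{claim:conc}). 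Your proposal does not engage with this expensive-leaf cost at all, so the "per-level cost times number of levels" reduction you outline would not close the argument: the expensive-leaf contribution is not a per-level quantity in the way the rebuild cost is, and a potential argument at a single node does not capture how density accumulates along a root-to-leaf path.
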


Note that, compared to Theorem \ref{thm:actualmain}, Theorem \ref{thm:nicetoprove} is able to assume an insertion-only workload, a relationship of $m = 2n$, and a starting-state of $m / 4 = n / 2$ elements. The rest of the paper will be spent proving Theorem \ref{thm:nicetoprove}.

Finally, we remark that, in Appendix \ref{app:reductions}, we also arrive at the following corollary:
\begin{restatable}{corollary}{fillup}
    There is a list-labeling algorithm that inserts $n$ items into an initially empty array of size $n$ with amortized expected cost $O((\log^2 n) (\log \log n)^3)$. 
    \label{cor:fillup}
\end{restatable}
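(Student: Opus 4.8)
The plan is to obtain Corollary~\ref{cor:fillup} as a black‑box consequence of Theorem~\ref{thm:actualmain} via the standard ``geometrically shrinking slack'' reduction: filling a size‑$n$ array from empty is the same as running ordinary list labeling in which the slack parameter $\delta$ decreases over time, and Theorem~\ref{thm:actualmain} already gives the per‑operation cost as a function of $\delta$.

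Concretely, I would partition the $n$ insertions into phases $i = 0, 1, \dots, \lceil \log n\rceil$, where phase $i$ begins when the array holds $n(1 - 2^{-i})$ elements and ends when it holds $n(1 - 2^{-i-1})$ elements; thus phase $i$ performs $n/2^{i+1}$ insertions, and throughout it the array (of fixed size $m=n$) holds at most $n_i := n(1 - 2^{-i-1})$ elements, so $m = (1+\delta_i)n_i$ with $\delta_i = \Theta(2^{-i})$. At the start of each phase I re‑spread the current elements evenly across the array at cost $O(n)$, then run the See‑Saw data structure of Theorem~\ref{thm:actualmain} instantiated with parameters $n_i$ and $\delta_i$ for the remainder of the phase. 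By Theorem~\ref{thm:actualmain}, phase $i$ incurs amortized expected cost $O(\delta_i^{-1}(\log n_i)(\log\log n_i)^3) = O(2^i (\log n)(\log\log n)^3)$ per insertion, hence total expected cost $O\big(2^i (\log n)(\log\log n)^3 \cdot n/2^{i+1}\big) = O\big(n(\log n)(\log\log n)^3\big)$, which is independent of $i$. Summing over the $O(\log n)$ phases, and adding the $O(n\log n)$ total re‑spreading cost, gives expected total cost $O\big(n(\log^2 n)(\log\log n)^3\big)$, i.e.\ amortized expected cost $O\big((\log^2 n)(\log\log n)^3\big)$ per insertion. The phase $i=0$, where $\delta_0 = \Theta(1)$ sits at the boundary of the hypothesis $\delta\in(0,1)$, is the cheapest regime and can be handled directly (e.g.\ by any rebuild‑based scheme with $O(\log n\,\polyloglog n)$ amortized cost, or by observing the See‑Saw algorithm works for $\delta\ge 1$ as well); its contribution $O(n\log n\,\polyloglog n)$ is absorbed.

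There is no real obstacle here; the only points needing care are bookkeeping. First, Theorem~\ref{thm:actualmain}'s guarantee is a statement about every operation‑prefix, and I should check it composes across phases: any prefix ending inside phase $i\ge 1$ has already made $>n/2$ insertions, so the $O(n\log^2 n\,\polyloglog n)$ expected cost of the completed earlier phases and re‑spreads is $O(j\log^2 n\,\polyloglog n)$ for a length‑$j$ prefix, while a prefix inside phase $0$ is controlled by that phase's own amortized bound. Second, restarting the structure afresh at each phase boundary is legitimate, since Theorem~\ref{thm:actualmain} only requires that the structure begin in a valid (evenly spread) configuration with at most $n_i$ elements, which the re‑spread step guarantees. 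Third, the usual rounding of the $2^{-i}$‑type quantities is harmless under the conventions of Section~\ref{sec:prelim-salted}. All the genuine difficulty lives in Theorem~\ref{thm:actualmain} (equivalently Theorem~\ref{thm:nicetoprove}); Corollary~\ref{cor:fillup} follows from it mechanically.
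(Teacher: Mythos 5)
Your high-level structure — geometric phases with $\delta_i \approx 2^{-i}$, one re-spread per phase, $O(\log n)$ phases — matches the paper's own reduction. But there is a genuine gap at the crux of the argument, and it is precisely the step you wave off as ``bookkeeping.''

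You claim that ``restarting the structure afresh at each phase boundary is legitimate, since Theorem~\ref{thm:actualmain} only requires that the structure begin in a valid (evenly spread) configuration with at most $n_i$ elements.'' Theorem~\ref{thm:actualmain} does not say this. As defined in Section~\ref{sec:prelim-salted}, the list-labeling problem starts from an empty array, and Theorem~\ref{thm:actualmain}'s amortized expected bound is a bound on operation prefixes of such an execution. The See-Saw algorithm's own per-instance guarantee, Theorem~\ref{thm:nicetoprove}, is stated and proven only for an initial configuration of $m/4$ elements in an array of size $m$; the expensive-leaf analysis crucially uses that the root subproblem has free-slot density at least $1/2$, which would be badly violated if you re-spread $n(1-2^{-i-1})$ elements in an array of size $n$ and then let the algorithm run. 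And if instead you tried to invoke Theorem~\ref{thm:actualmain} honestly by ``replaying'' all $n(1-2^{-i})$ prior insertions through a fresh instance with slack $\delta_i$, the replay alone would cost $\Theta(n \cdot 2^{i}\log n\,\polyloglog n)$ per phase, which blows the bound.

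The paper bridges exactly this gap with Lemma~\ref{lemma:increment}: it converts a black-box ``insert $n'$ items into an empty array of size $2n'$'' algorithm into one that can insert $\lceil \delta m/3\rceil$ items into an array that already contains $\lfloor(1-\delta)m\rfloor$ items, at cost $O(t(m)/\delta + 1/\delta)$ per insertion. Its proof is not trivial: it maintains a \emph{virtual} array in which only a $\Theta(\delta)$ fraction of the real items are ``visible,'' simulates the black-box algorithm on the virtual array, and shows each virtual move can be mirrored by $O(1/\delta)$ real moves while preserving invariants about the placement of invisible items. The unnamed corollary in Appendix~\ref{app:reductions} then runs exactly your geometric-phase sum, but with Lemma~\ref{lemma:increment} as the per-phase workhorse. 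So your decomposition is right, but the per-phase claim you cite as a consequence of Theorem~\ref{thm:actualmain} is not a consequence of it; it is the content of Lemma~\ref{lemma:increment}, and that lemma needs its own argument.
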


\section{The See-Saw Algorithm}\label{sec:alg}

In this section, we present the \defn{See-Saw Algorithm}, which we subsequently prove achieves $\tilde{O}(\log n)$ amortized expected cost per insertion on an array $\arr{}$ of size $m$. We also present detailed pseudocode for the algorithm in Appendix \ref{app:pseudocode}. 

As discussed in Section \ref{sec:simplifying}, we will consider, without loss of generality, that we have an insertion-only workload, that the initial number of elements is $m/4$, and that we are handling $m/4$ total insertions. The algorithm will make use of parameters $\alpha = \calpha (\log \log n)^2$ and $\beta = \cbeta (\log \log n)^2$, where $\calpha$ and $\cbeta$ are positive constants selected so that $\calpha$, $\cbeta$, and $\calpha / \cbeta$ are all sufficiently large.

\paragraph{Defining a subproblem tree.} At any given moment, we will break the array into a recursive \defn{subproblem tree}. Each subproblem $\prob$ in the tree is associated with a subarray $\arr{\prob}$ whose size is denoted by $m_\prob = |\arr{\prob}|$. For the subproblem $\prob$ at the root of the tree, $\arr{\prob}$ is the entire array. Each non-leaf node $\prob$ has left and right children, $\leftchild{\prob}$ and $\rightchild{\prob}$ respectively, such that $\arr{\prob} = \arr{\leftchild{\prob}} \oplus \arr{\rightchild{\prob}}$ (the concatenation of 
$\arr{\leftchild{\prob}}$ and $\arr{\rightchild{\prob}}$). In contrast with the classical $O(\log^2 n)$ algorithm (and, indeed, all previous algorithms that we are aware of), the structure of the subproblem tree used by the See-Saw Algorithm will be \emph{non-uniform}, meaning that sibling subproblems $\leftchild{\prob}$ and $\rightchild{\prob}$ will not necessarily satisfy $\arraysize{\leftchild{\prob}} = \arraysize{\rightchild{\prob}}$.

As we shall see, the structure of the tree will evolve over time, with subproblems getting terminated and then replaced by new ones. When a subproblem $\prob$ is first created, we will use $\set{\prob}$ to refer to the set of elements stored in $\arr{\prob}$ when $\prob$ is created.

Because subproblems are created and destroyed over time, the children $L(\prob)$ and $R(\prob)$ of a given subproblem $\prob$ may get replaced many times during $\prob$'s own lifetime. Thus one should think of $L(\pi)$ and $R(\pi)$ as time-dependent variables, referring to $\pi$'s \emph{current} left and right children at any given moment.

\paragraph{How an insertion decides its root-to-leaf path.} Given an insertion $x$ that goes to a subproblem $\prob$, the protocol for determining which child $\leftchild{\prob}$ or $\rightchild{\prob}$ the insertion $x$ goes to can be described as follows: 
if $\leftchild{\prob}$ contains at least one element, and if $\max_{y \in L(\prob)} y > x$,
then $x$ is placed in $\leftchild{\prob}$; otherwise, it goes to $\rightchild{\prob}$.
\footnote{Assuming we start with $m/4$ elements in the array, it will turn out that the left children are never empty---we will not need to formally prove this fact, however, as it does not end up being necessary for our analysis.} This rule determines the root-to-leaf path that a given insertion takes. 

\paragraph{Implementing leaves. } When a subproblem is created, there are two conditions under which it is declared to be a leaf: subproblems $\prob$ whose initial density $|\set{\prob}| / \arraysize{\prob}$ is greater than $3/4$ are \defn{\failure leaves}; and subproblems $\prob$ whose subarray satisfies $\arraysize{\prob} \le 2^{\sqrt{\log n}}$ are \defn{tiny leaves}.

In both cases, leaf subproblems $\prob$ are implemented using the classical algorithm of Itai, Konheim and Rodeh\cite{ItaiKoRo81}, whose cost per operation is $O(\log^2 \arraysize{\prob})$. 
For tiny leaves, this results in $O(\log n)$ amortized cost per operation. For \failure leaves, this could result in as much as $O(\log^2 m) = O(\log^2 n)$ cost per operation. One of the major tasks in analyzing the algorithm will be to bound the total cost incurred in \failure leaves over all operations.

\paragraph{Initializing a subtree.}
When a subproblem $\prob$ is first initialized, it is always initialized to be \defn{balanced}. This means: (1) that the elements in $\prob$ are evenly distributed across $\arr{\prob}$; and (2) that, within each level of the subtree rooted at $\prob$, all of the subproblems within that level have arrays that are the same sizes as each other. 

Thus we define the \texttt{CreateSubtree}$(\mathcal{A}', \mathcal{S}')$ procedure as follows. The procedure takes as input a subarray $\mathcal{A}'$ and a set $\mathcal{S}'$ of elements, and it produces a tree of balanced subproblems, where the root of the tree $\prob$ satisfies $\arr{\prob} = \arr{}'$ and $\set{\prob} = \mathcal{S}'$. 
This can be accomplished by first spreading the elements $\mathcal{S}'$ evenly across the array $\arr{}$, and then creating a subproblem $\prob$ satisfying $\arr{\prob} = \mathcal{A}$. 
If $\prob$ is a leaf, then this is the entire procedure. Otherwise, if $\prob$ is not a leaf, then we create children for $\prob$ with sub-arrays of size $m_\prob / 2$; and if those are not leaves, we create grandchildren for $\prob$ with subarrays of size $m_{\prob} / 4$; and so on.

\paragraph{Implementing non-leaf subproblems. }
Now consider a non-leaf subproblem $\prob$ and let $\inserts{\prob}$ denote the sequence of insertions that $\prob$ receives. 

The first thing that $\prob$ does is select a \defn{rebuild window size} $w_\prob$---this window size is selected from a carefully constructed probability distribution that we will describe later on. The subproblem $\prob$ then treats the insertion sequence $\inserts{\prob}$ as being broken into equal-sized \defn{rebuild windows} $\inserts{\prob, 1}, \inserts{\prob, 2}, \ldots$, where each rebuild window $\inserts{\prob, i}$ consists of up to $w_\pi$ insertions. (Only the final rebuild window may be smaller).

Whenever one rebuild window $\inserts{\prob, i}$ ends and another $\inserts{\prob, i + 1}$ begins, $\prob$ performs a \defn{rebuild}. The rebuild terminates all of $\pi$'s descendant subproblems, spends $O(m_\prob)$ cost on rearranging the elements within $\arr{\pi}$, and then creates new descendant subproblems for $\prob$. 

To describe this rebuild process, let us refer to $\prob$'s children before the rebuild as $\overline{L}(\prob), \overline{R}(\prob)$ and to $\prob$'s new children after the rebuild as $L(\prob), R(\prob)$. 
The most interesting step in the rebuild is to select the sizes $m_{L(\prob)}$ and $m_{R(\prob)}$ for $\mathcal{A}_{L(\prob)}$ and $\mathcal{A}_{R(\prob)}$---we will describe this step later in the section. After selecting the size $m_{L(\prob)}$, the new subproblem $L(\prob)$, along with its descendants, are created by calling \texttt{CreateSubtree}$(\arr{L(\prob)}, \set{L(\prob)})$, where $\set{L(\prob)}$ is the same set of elements that were stored in $\overline{L}(\prob)$ prior to the rebuild. Similarly, $R(\prob)$ and its descendants are created by calling \texttt{CreateSubtree}$(\arr{R(\prob)}, \set{R(\prob)})$, where $\set{R(\prob)}$ is the set of elements that was stored in $\overline{R}(\prob)$ prior to the rebuild. 

It is worth emphasizing that the rebuild changes the \emph{sizes} of the subarrays used to implement each of $\pi$'s children, but does not change the \emph{sets} of elements stored within the two children. Furthermore, although $\prob$'s new child subtrees are initialized to be balanced, the subtree rooted at $\prob$ need not be balanced: $\arraysize{\leftchild{\prob}}$ need not be equal to $\arraysize{\rightchild{\prob}}$, nor does $|\set{\leftchild{\prob}}|$ need to equal $|\set{\rightchild{\prob}}|$.  The cost of such a rebuild is $O(\arraysize{\prob})$.

It remains to specify how to choose $\window{\prob}$ (the rebuild-window size), and how to choose $\arraysize{\leftchild{\prob}}$ and $\arraysize{\rightchild{\prob}}$ during each rebuild.  For this second point, rather than setting $\arraysize{\leftchild{\prob}} = \arraysize{\rightchild{\prob}} = \arraysize{\prob}/2$, $\prob$ will (sometimes) try to \emph{predict} which of $\leftchild{\prob}$ or $\rightchild{\prob}$ will need more free slots in the future, and will potentially give a different number of slots to each of them. Successfully predicting which subproblem will get more insertions is key to our algorithm and is described more fully below.

\paragraph{A key component: selecting window sizes.} We now describe how $\prob$ selects its rebuild-window size $\window{\prob}$. This turns out to be the \emph{only} place in the algorithm where randomization is used. We start by generating a random variable $\windowpar{\prob}$, taking values in $[0,\imax]$ where $\imax=2\loglog{n}$ and where
$p_k=\Pr[\windowpar{\prob}=k]$ is given by:
\begin{eqnarray*}
p_k & = & 2^{-(k+1)}\left(1+\frac{k}{\imax}\right)\hspace{.5 in}\text{for $k\in [1,\imax]$}\\
p_0 & = & 1-\sum_{k=1}^{\imax} p_k \le 1/2.
\end{eqnarray*}
Having drawn $\windowpar{\prob}$ from this distribution, we then set
$$\window{\prob} = \frac{\arraysize{\prob}}{\alpha 2^{\windowpar{\prob}}},$$
where $\alpha =\Theta((\loglog{n})^2)$ is the parameter defined at the beginning of the section.  As we shall see, the specific details of this probability distribution will end up being central to the analysis of the algorithm.

\paragraph{A key component: selecting array skews.}
Next we describe how $\prob$ selects the sizes of the subarrays $\arraysize{\leftchild{\prob}}$ and $\arraysize{\rightchild{\prob}}$ to be used by its children $\leftchild{\prob}$ and $\rightchild{\prob}$ within a given rebuild window. Here, critically, $\prob$ \emph{adapts} to the history of how insertions in past rebuild windows behaved. 

The rebuilds of $\prob$ will behave differently at the beginning of odd-numbered windows than even-numbered ones. At the beginning of odd-numbered rebuild windows, $\prob$ will not make any attempt to adapt; it will simply set $\arraysize{\leftchild{\prob}} = \arraysize{\rightchild{\prob}} = \arraysize{\prob}/2$. (This, incidentally, is why $\prob$ need not do any rebuilding at the beginning of the first rebuild window). To adapt at the beginning of an even-numbered rebuild window $j$, $\prob$ will count the total number of insertions that went right minus the total number that went left during rebuild window $(j-1)$---call this quantity the \defn{insertion skew}  $\iskewfinaltwo{\prob}{j-1}$. Then, at the beginning of rebuild window $j$, $\prob$ will set rebuild-window $j$'s \defn{array skew}
$$\arrayskew{\prob, j} = \arraysize{\prob} \cdot \frac{\iskewfinaltwo{\prob}{j - 1}}{\beta \window{\prob}},$$
where $\beta = \Theta((\log \log n)^2)$ is the parameter defined at the beginning of the section. Finally, using this array skew, $\prob$ sets $\arraysize{\leftchild{\prob}} = \arraysize{\prob}/2 - \arrayskew{\prob, j}$ and $\arraysize{\rightchild{\prob}} = \arraysize{\prob}/2 + \arrayskew{\prob, j}$. 

In other words, $\prob$ uses odd-numbered rebuild windows for \emph{learning}, and then uses even-numbered rebuild windows for \emph{making use of what it has learned}. Within the even-numbered rebuild windows, $\prob$ is essentially trying to predict which subproblem will get more insertions, and then giving that subproblem more slots. 

In the same way that the window-size selection is the only place the in the algorithm that makes use of randomization, the selection of array skews is the only place that \emph{adapts} to the historical behavior of the input. Of course, if the See-Saw Algorithm were deterministic, it would be easy to construct an insertion sequence that would defeat this type of adaptivity. Thus, it is not just the fact that the See-Saw Algorithm tries to make predictions based on the past, but also the way in which this interacts with the (randomly chosen) window-size $\window{\prob}$, that together make the algorithm work.
The analysis of how adaptivity and randomization work together to minimize \failure leaves will require a number of deep technical ideas, and will constitute the main technical contribution of the paper. 

\paragraph{One final source of cost: subproblem resets. } So far, we have encountered one way in which a subproblem $\prob$'s life can end, namely, that one of its ancestors begins a new rebuild window. There will also be another way in which $\prob$'s life can end: If $\prob$ receives a total of $\arraysize{\prob} / \alpha = \Theta(\arraysize{\prob} / (\loglog{n})^2)$ insertions, then $\prob$ will be \defn{reset}.\footnote{We remark that, in our pseudocode in Appendix~\ref{app:pseudocode}, the \emph{parent} of $\prob$ is responsible for implementing resets (with the exception of the case where $\prob$ is the root, which is handled separately).} This threshold $\arraysize{\prob} / \alpha$ for the maximum number of insertions that $\prob$ can handle before being reset is referred to as its \defn{quota}.
What it means for a subproblem $\prob$ to be reset is that it (and its descendants) are terminated, and that a new balanced subproblem tree is created in $\prob$'s place using the \emph{same} subarray and the \emph{same} set of elements as $\prob$ did. The new subtree is created using the \texttt{CreateSubtree} protocol. 

One should think of resets as, in some sense, being a technical detail. They are just there to ensure that each subproblem has a bounded number of insertions. The real engine of the algorithm, however, is in the implementation of rebuilds.

\paragraph{A remark on non-smoothness, randomness, and history dependence.} As discussed in the introduction, there are three properties that past work has already shown to be necessary if one is to achieve $o(\log^{1.5} n)$ overall amortized expected cost.  
These properties are \emph{non-smoothness}~\cite{dietz1990lower, dietz2004tight}, \emph{randomness}~\cite{BulanekKoSa12}, and \emph{history dependence}~\cite{BenderCFKKW22}. It is therefore worth remarking on their roles in the See-Saw Algorithm.

The randomness in the algorithm is used to select the rebuild window size $\window{\prob}$ for each subproblem. We will see that, although the input sequence can \emph{attack} the algorithm for one specific choice of $\window{\prob}$, there is no way for it to systematically attack $\window{\prob}$ across the entire distribution from which it is selected.  

The fact that our algorithm is non-history-independent, and the fact that the rebuilds it performs are non-smooth, are both due to the same step in the algorithm: the step where, at the beginning of each even-numbered rebuild window $j$, $\prob$ selects the array skew $\arrayskew{\prob,j}$ adaptively based on what occurred during the previous rebuild window. This adaptivity is fundamentally history dependent, and then the rebuild that it performs on $\arr{\prob}$ is fundamentally non-smooth (since, for a given rebuild, there is only one possible value for the array skew that would result in the rebuild being smooth). 

\paragraph{A remark on how to think about the range of values for array skews.} It is worth taking a moment to understand intuitively the range of possible values for the array skew $\arrayskew{\prob, j}$. Since $|\iskewfinaltwo{\prob}{j - 1}| \le \window{\prob}$, the array skew will always satisfy $|\arrayskew{\prob, j}| \le \arraysize{\prob} / \beta = O(\arraysize{\prob} / (\log \log n)^2)$. So, perhaps surprisingly, there is a sense in which the array skew is always a \emph{low-order term} compared to the size of the array $\arraysize{\prob}$. On the other hand, the window size $\window{\prob}$ is also at most $\arraysize{\prob} / \alpha = O(\arraysize{\prob} / (\log \log n)^2)$, so one should think of the maximum possible window size $\window{\prob}$ as being comparable to the maximum possible array skew $\arrayskew{\prob, j}$ (and, in fact, the former quantity is the smaller because $\alpha > \beta$).

\paragraph{A remark on packed-memory arrays.} Many data-structural applications of list-labeling require the additional property that there are at most $O(1)$ free slots between any two consecutive elements in the array. A list-labeling solution with this property is typically referred to as a \defn{packed-memory array} \cite{BenderDeFa05,BenderDeFa00,BenderFiGi05}. We remark that the See-Saw Algorithm can be turned into a packed-memory array with the following modification: Whenever the initial density of a non-leaf subproblem $\prob$ is less than, say, $0.25$, we automatically set all of the array skews $\arrayskew{\prob, 1}, \arrayskew{\prob, 2}, \ldots$ to $0$. This turns out to not interfere with the analysis of the See-Saw Algorithm in any way, since as we shall see, the analysis only cares about the array skews in cases where the initial density is at least $0.5$ (Lemma \ref{lemma:main}). On the other hand, with this modification in place, no subproblem $\prob$ is ever given fewer than $(0.25 - o(1)) m_{\prob}$ elements, which implies that we have a packed-memory array.

\section{Algorithm Analysis}
\label{sec:analysis}

In this section, we prove Theorem \ref{thm:nicetoprove}, which, as discussed in Section \ref{sec:simplifying}, implies the main result of the paper, Theorem \ref{thm:actualmain}. We begin by restating Theorem \ref{thm:nicetoprove} below.

\thmnice*

The proof of Theorem \ref{thm:nicetoprove} occupies
this section and the next two. In this section we prove the
theorem assuming two results,  Lemma~\ref{lemma:main}
and Claim~\ref{claim:conc}. These are proved in the following
two sections.

In the algorithm description, at any
point in time there is a binary tree of subproblems. It is
important to keep in mind that 
the tree is dynamic; subproblems are terminated and new ones
are created in their place. As a convention, we will refer to the subproblems that, over time, serve as the roots of the tree as the \defn{global subproblems}.

\vspace{.2 cm}

\noindent Throughout the section, we will make use of the following notation for discussing a subproblem $\prob$, some of which were also defined in Section \ref{sec:alg}:
\begin{itemize}
\item $\arraysize{\prob} = |\arr{\prob}|$.
\item $s_{\prob} = |\set{\prob}|$, where $\set{\prob}$ is the set of elements in $\prob$ at the \emph{beginning} of its lifetime.
\item $\inserts{\prob}$ is the full sequence of inserts that arrive to subproblem $\prob$ during its lifetime.
\item $\inserts{\prob,j} \subseteq \inserts{\prob}$ is the subsequence of inserts that arrive to $\prob$ during its $j$-th rebuild window.
\item $\arrayskew{\prob,j}$ is the value of the array skew used for $\prob$'s $j$-th rebuild window.
\item $\iskewfinal{\prob}(v)$ where $v \in \inserts{\prob}$ is equal to 1 if $\prob$
sends $v$ right and $-1$ if $\prob$ sends $v$ left.
\item $\iskewfinal{\prob}(J)$, where $J \subseteq \inserts{\prob}$ is equal
to $\sum_{v \in J} \iskewfinal{\prob}(v)$, which is the number
of elements of $J$ that $\prob$ sent right minus the number that were sent left.
\item $\iskewfinal{\prob} = \iskewfinal{\prob}(\inserts{\prob})$
is the total number of inserts to $\prob$ that went right minus the number that went left.
\item $\iskewfinal{\prob,j}=\iskewfinal{\prob}(\inserts{\prob,j})$.
\item $\insertnumber{\prob} = |\inserts{\prob}|$ is the total number of elements that are inserted into $\prob$ during its lifetime (not including the $s_{\prob}$ elements initially present).  Note
that, by design, $\insertnumber{\prob}\leq \arraysize{\prob}/\alpha$. 
\item $\free{\prob} = 1-\frac{\insertnumber{\prob}+s_{\prob}} {\arraysize{\prob}}$ is the density of free slots in $\arr{\prob}$ at the \emph{end} of $\prob$.
\item $\freestart{\prob}=1- \frac{s_{\prob}}{\arraysize{\prob}}$ is the density of free slots in $\arr{\prob}$ at the \emph{beginning} of $\prob$.
\item $\windowpar{\prob}$ is the value of the random integer that determines the rebuild window size $\window{\prob} = m_{\prob}/(\alpha 2^{\windowpar{\prob}})$. 
\item $\windowtotal{\prob}$ is the total number of rebuild windows that $\prob$ starts over its
lifetime.
\end{itemize}

\vspace{.2 cm}

We will often drop the subscript $\prob$, e.g., on $\arr{\prob}, \set{\prob}, \inserts{\prob,j}$ and $\arrayskew{\prob,j}$,
when the subproblem is clear from context. 
However, when we write $m$, we always mean the full array size.

We organize the set of all subproblems
that exist throughout the algorithm into a nonbinary tree called the \defn{history tree}.
For a given subproblem $\prob$, its children will be
all left and right subproblems that it ever creates.  (The root
of the tree is a fictitious root subproblem, and the children of
the root are the global subproblems.)
A subproblem $\prob$ will have at least $\windowtotal{\prob}$ 
different left subproblems and $\windowtotal{\prob}$ right subproblems, since it starts a new left and right subproblem
at the beginning of each rebuild window.  (Recall that $\windowtotal{\prob}$ is the total number of rebuild windows of $\prob$.) A subproblem may have more than
one left or right child subproblem per rebuild window because a child subproblem may reach
its quota, which causes it to reset, causing it to get replaced by a new subproblem.
The leaves of the history tree are the {\failure}-leaf and tiny-leaf subproblems.

Note that, for a given subproblem $\prob$ in the history tree,
the number of children $\prob$ has is not fixed in advance
but depends both on the random choices of window sizes by $\prob$ and its ancestors,
and also on the specific insertion sequence (the set of which subproblems get terminated because they reach their quotas may depend on the specific insertion sequence).

\subsection{The Basics: Proving Correctness, and Bounding the Costs of Rebuilds, Resets, and Tiny Leaves} 

We start with some basic observations:

\begin{proposition}
\label{prop:basic}
For any subproblem $\prob$:
\begin{enumerate}
\item If $\prob$ is non-global, then $\arraysize{\prob} \in [0.49 \arraysize{\otherprob},0.51\arraysize{\otherprob}]$, where $\otherprob$ is the parent of $\prob$.
\item The total number of items $s_{\prob}+\insertnumber{\prob}$ that $\prob$ must store in its subarray
is at most $0.8\arraysize{\prob}$.
\end{enumerate}
\end{proposition}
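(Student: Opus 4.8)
For part (1), the plan is to track how subarray sizes change as we descend the history tree. A non-global subproblem $\prob$ with parent $\otherprob$ is created either by a rebuild of $\otherprob$ or by a reset (in which case it is the root of a freshly-created subtree and its size relationship to $\otherprob$ comes from \texttt{CreateSubtree}, which halves sizes at each level). If $\prob$ is a child created during a rebuild window $j$ of $\otherprob$, then $\arraysize{\prob} \in \{\arraysize{\otherprob}/2 - \arrayskew{\otherprob,j},\ \arraysize{\otherprob}/2 + \arrayskew{\otherprob,j}\}$, so it suffices to bound $|\arrayskew{\otherprob,j}|$. By the remark in Section \ref{sec:alg} (which I would re-derive): $|\iskewfinaltwo{\otherprob}{j-1}| \le |\inserts{\otherprob,j-1}| \le \window{\otherprob}$, and hence $|\arrayskew{\otherprob,j}| = \arraysize{\otherprob} \cdot |\iskewfinaltwo{\otherprob}{j-1}| / (\beta \window{\otherprob}) \le \arraysize{\otherprob}/\beta$. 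Since $\beta = \cbeta(\log\log n)^2$ is sufficiently large (in particular $\ge 100$ for $n$ large), we get $|\arrayskew{\otherprob,j}| \le 0.01\arraysize{\otherprob}$, which gives $\arraysize{\prob} \in [0.49\arraysize{\otherprob}, 0.51\arraysize{\otherprob}]$. If $\prob$ is created by a reset of $\otherprob$, the bound is even cleaner ($\arraysize{\prob} = \arraysize{\otherprob}/2$ exactly, or whatever the \texttt{CreateSubtree} level-halving gives), so it also lies in the interval. For odd-numbered windows the skew is $0$ and the bound is immediate.

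For part (2), I want to bound $s_\prob + \insertnumber{\prob}$ in terms of $\arraysize{\prob}$. We always have $\insertnumber{\prob} \le \arraysize{\prob}/\alpha$ by the quota rule, and $\alpha$ is large, so $\insertnumber{\prob}$ contributes only an $o(1) \cdot \arraysize{\prob}$ term. The main work is to show $s_\prob \le (0.8 - o(1))\arraysize{\prob}$, i.e.\ that the initial density of any subproblem is bounded below $0.8$. The plan is induction down the history tree. The global subproblems start with $m/4$ elements in an array that, by the reduction in Theorem \ref{thm:nicetoprove}, eventually holds up to $m/2$ elements; but a global subproblem is created with whatever set is currently present, which is at most $m/2 = 2 \cdot (m/4)$, giving initial density at most $1/2 < 0.8$. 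For the inductive step: a child $\prob = L(\otherprob)$ or $R(\otherprob)$ created during a rebuild of $\otherprob$ receives the set $\set{L(\otherprob)}$ (resp.\ $\set{R(\otherprob)}$) that was stored in $\otherprob$'s corresponding old child at rebuild time. The number of such elements is at most $s_\otherprob + \insertnumber{\otherprob} \le 0.8\arraysize{\otherprob}$ by the inductive hypothesis, placed into an array of size $\ge 0.49\arraysize{\otherprob}$ by part (1) — but this ratio $0.8/0.49 > 1$ is too weak, so I need a sharper argument.

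The sharper argument: I should instead carry as the inductive invariant that \emph{every} subproblem $\prob$ satisfies $s_\prob + \insertnumber{\prob} \le 0.8 \arraysize{\prob}$, and crucially that a child is never created with high density — because subproblems with initial density $> 3/4$ are declared \failure leaves and are implemented via the classical algorithm, so for \emph{non-leaf} subproblems we automatically have $s_\prob \le (3/4)\arraysize{\prob}$, and then $s_\prob + \insertnumber{\prob} \le (3/4 + 1/\alpha)\arraysize{\prob} \le 0.8\arraysize{\prob}$ for $\alpha$ large. For leaf subproblems the statement is about what they must \emph{store}, and a leaf still only receives $\insertnumber{\prob} \le \arraysize{\prob}/\alpha$ insertions on top of its initial set; but a \failure leaf can have $s_\prob$ as large as $\arraysize{\prob}$ in principle — so I need to check that this cannot happen. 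Here I would use part (1) inductively: a subproblem created by a rebuild/reset of a non-leaf parent $\otherprob$ inherits a subset of the $\le 0.8\arraysize{\otherprob}$ items of $\otherprob$ (with equality of the set, split between the two children), into a subarray of size $\approx \arraysize{\otherprob}/2$; the worst case is that one child gets all the items, density $\approx 0.8\arraysize{\otherprob}/(0.49\arraysize{\otherprob}) \approx 1.63 > 1$, which is impossible. This contradiction means I have mis-stated the inductive setup — the correct resolution is that the total number of items in $\otherprob$ at the moment of a rebuild, not over $\otherprob$'s whole lifetime, is what gets split, and this is bounded by the density constraint; I expect the cleanest route is to prove by induction that at \emph{every moment} a subproblem $\prob$ holds at most $0.8\arraysize{\prob}$ items, using that (a) the root satisfies this since $m/2 \le 0.8 m$, wait $m/2 = 0.5m$; fine; (b) when $\prob$ rebuilds or resets, \texttt{CreateSubtree} redistributes the same item set it held before, so density is preserved and halving into children keeps density at most $\approx 2 \times$ — no. I think the honest resolution, and the one I would write, is that the quota $\insertnumber{\prob} \le \arraysize{\prob}/\alpha$ applies to \emph{each} subproblem individually, so over a subproblem's lifetime its item count grows by at most a $(1 + 1/\alpha)$ factor, and inductively the \emph{non-leaf} invariant "initial density $\le 3/4$, hence total density $\le 3/4 + 1/\alpha < 0.8$" propagates to children because a child of a non-leaf gets at most the parent's $\le 0.8\arraysize{\otherprob} \le (0.8/0.49)\arraysize{\prob}$ — and this really is $>1$, so the statement as I'm reading it must implicitly use that the two children's item sets are roughly balanced, which follows because the parent was itself initialized \emph{balanced} and the number of insertions per window is small. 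This balancing subtlety is the main obstacle: I would need to show that at a rebuild, $|\set{L(\otherprob)}| \le (1/2 + o(1))(s_\otherprob + \insertnumber{\otherprob}) \le (1/2 + o(1)) \cdot 0.8 \arraysize{\otherprob} \le 0.41\arraysize{\otherprob}$, placed in an array of size $\ge 0.49\arraysize{\otherprob}$, giving density $\le 0.84$ — still slightly too big, so the real constants must be tighter (the $1/\beta$ and $1/\alpha$ losses are genuinely $o(1)$, and $0.5 \cdot 0.5 / 0.49$-type bounds need $s_\otherprob \le (1/2)\arraysize{\otherprob}$ rather than $3/4$); I would chase the exact inequality chain $s_\prob + \insertnumber{\prob} \le (1/2 + o(1)) \cdot (1/2 + o(1)) \cdot \arraysize{(\cdot)} / (1/2 - o(1)) + o(1)$ and verify it closes below $0.8$, which it does with room to spare once one tracks that the $o(1)$ terms are $O(1/(\log\log n)^2)$ and the base case gives $1/2$ not $3/4$. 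The delicate part, and where I would spend the most care, is this propagation-of-density induction and getting the balancing of child item sets right.
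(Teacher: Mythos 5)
Part (1) of your proposal is correct and matches the paper's argument: bound $|\arrayskew{\otherprob,j}| \le \arraysize{\otherprob}/\beta \le 0.01\arraysize{\otherprob}$ and conclude. Part (2), however, has a genuine gap, and you are right to be suspicious of your own inequality chain: as you note, you end up at roughly $0.84 > 0.8$, and the induction does not close.

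There are two ideas you are missing, and you came close to both without committing to either. First, do \emph{not} try to propagate the $0.8$ bound inductively from parent to child. Instead, observe that $\otherprob$ is necessarily a non-leaf subproblem (it has a child), and non-leaf subproblems are created only when the initial density is at most $3/4$; so $s_\otherprob \le 0.75\arraysize{\otherprob}$, which is a strictly better starting point than $0.8\arraysize{\otherprob}$. Second, use the structural fact that a rebuild of $\otherprob$ changes only the \emph{array sizes} of its children, never the \emph{sets} of elements stored in them: it never moves an element across the pivot. Consequently, every left child of $\otherprob$ holds exactly $s_\otherprob/2$ of the elements in $\set{\otherprob}$ (the pivot was chosen to split $\set{\otherprob}$ in half when $\otherprob$ was created), plus some number of insertions, and the total insertions to $\otherprob$ over its entire lifetime are capped by the quota at $\arraysize{\otherprob}/\alpha$. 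This gives
$$ s_\prob + \insertnumber{\prob} \;\le\; \tfrac{1}{2}s_\otherprob + \insertnumber{\otherprob} \;\le\; 0.375\,\arraysize{\otherprob} + \tfrac{1}{\alpha}\arraysize{\otherprob} \;\le\; 0.39\,\arraysize{\otherprob}. $$
Now applying part (1), $\arraysize{\otherprob} \le \arraysize{\prob}/0.49$, so $s_\prob + \insertnumber{\prob} \le (0.39/0.49)\arraysize{\prob} \le 0.8\arraysize{\prob}$. The global case is handled separately (total array contents never exceed $m/2 = \arraysize{\prob}/2$). Note that no induction down the tree is actually needed: the bound on $\prob$ follows directly from the one-step relationship with its parent, plus the expensive-leaf threshold applied to the parent. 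Your formulation $|\set{L(\otherprob)}| \le (1/2 + o(1))(s_\otherprob + \insertnumber{\otherprob})$ is not quite the right statement; the correct decomposition is $\tfrac{1}{2}s_\otherprob + \insertnumber{\otherprob}$, where the first term is exact (not an approximation) because of the never-move-across-the-pivot property, and the second term is small because of the quota.
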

\begin{proof}
For the first part, if $\prob$ is inside the $j$-th
rebuild window of $\otherprob$ then the size of $\prob$'s array is
$\frac{1}{2}\arraysize{\otherprob}\pm |\arrayskew{\otherprob,j}|$
and  $|\arrayskew{\otherprob,j}| \leq \frac{|\iskewfinal{\otherprob,j-1}|\arraysize{\otherprob}}{\window{\otherprob}\beta} \leq {\arraysize{\otherprob}/\beta} \leq 0.01\arraysize{\otherprob}$
 (since
$\beta \geq 100$).

For the second part, the assertion is true for any global subproblem since the total number of elements in the array never exceeds $\arraysize{\prob}/2$.
For a non-global subproblem $\prob$ with parent $\otherprob$, it must be that $\otherprob$ is not an \failure leaf (since \failure leaves don't initiate subproblems), so $s_{\otherprob} \leq 0.75 \arraysize{\otherprob}$. Assume without loss of generality that $\prob$ is a left subproblem of $\otherprob$. Recall that, when $\otherprob$ is created, it gives half of the elements in $\set{\otherprob}$ to its left child, and that whenever $\otherprob$ rebuilds its children, it does not move any elements between them (it just changes the sizes of their arrays); thus, the number of elements from $\set{\otherprob}$ that $\prob$ contains is just $|\set{\otherprob}|/2 = s_{\otherprob}/2$. So the number of elements $s_\prob + \insertnumber{\prob}$ in $\prob$ at the end of $\prob$'s lifetime is at most
$$s_\otherprob / 2 + \insertnumber{\otherprob} \le 0.38 \arraysize{\otherprob} + \arraysize{\otherprob} / \alpha \le 0.39 \arraysize{\otherprob}.$$
By the first part of the proposition, we have $\arraysize{\otherprob} \le \arraysize{\prob} / 0.49$, so our bound on $s_\prob + \insertnumber{\prob}$ is at most
$$0.39 (\arraysize{\prob} / 0.49) \le 0.8 \arraysize{\prob}.$$
\end{proof}

As a corollary, we can establish the correctness of the See-Saw Algorithm.

\begin{corollary}
    The See-Saw Algorithm is a valid list-labeling algorithm.
\end{corollary}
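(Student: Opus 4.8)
The plan is to verify the two things that "valid list-labeling algorithm" demands: (i) at every moment the elements currently present are stored one-per-slot in sorted order within the size-$m$ array, and (ii) the algorithm never attempts an operation it cannot carry out — concretely, no subarray is ever asked to hold more elements than it has slots, and every invoked subroutine (\texttt{CreateSubtree}, rebuilds, resets, and the leaf-level Itai–Konheim–Rodeh algorithm) is invoked in a regime where it is legal.

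For (ii), the essential fact is already available: Proposition~\ref{prop:basic}(2) says that every subproblem $\prob$ satisfies $s_\prob + \insertnumber{\prob} \le 0.8\,\arraysize{\prob}$, so throughout $\prob$'s lifetime the subarray $\arr{\prob}$ has density at most $0.8 < 1$. I would draw out two consequences. First, the \texttt{CreateSubtree} call that initializes (or re-initializes, after a reset) any subproblem can always spread $\set{\prob}$ evenly across $\arr{\prob}$, since $|\set{\prob}| \le 0.8\,\arraysize{\prob} < \arraysize{\prob}$; likewise a rebuild can always re-spread $\set{\leftchild{\prob}}$ and $\set{\rightchild{\prob}}$ across the newly sized children, whose densities are again at most $0.8$ by the same proposition. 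Second, every leaf — whether a tiny leaf or an \failure leaf — runs ITR on a subarray whose density stays below $1$, and ITR is a valid list-labeling algorithm in that regime (it never runs out of room). Since by construction the subarrays partition the global array at every internal node ($\arr{\prob} = \arr{\leftchild{\prob}} \oplus \arr{\rightchild{\prob}}$), and in particular the leaf subarrays form a partition of $\arr{}$, this certifies that the whole array is always used legally.

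For (i), I would argue by induction down the subproblem tree, maintaining the invariant: the elements currently routed to and stored in any subproblem $\prob$ form a contiguous block in the sorted order of all present elements, and if $\prob$ is internal then every element in $\leftchild{\prob}$ precedes every element in $\rightchild{\prob}$. The routing rule — send $x$ left iff $\leftchild{\prob}$ is nonempty and $\max_{y \in \leftchild{\prob}} y > x$ — is precisely the rule that preserves this separation: if $x$ is sent right, it exceeds every element currently in $\leftchild{\prob}$, and by the inductive separation it is therefore consistent with the right block; symmetrically if sent left. The key point is that rebuilds and resets only change subarray \emph{sizes} and re-spread elements — they never move an element from $\leftchild{\prob}$ to $\rightchild{\prob}$ or back — so the separation invariant survives them intact. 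At the leaves ITR keeps the leaf's block sorted, and concatenating sorted blocks in left-to-right tree order yields a globally sorted array, completing the induction.

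The step I expect to require the most care is the interaction of the routing rule with the fact that $\leftchild{\prob}$ is a time-varying object that is repeatedly torn down and rebuilt: one must check that when $\leftchild{\prob}$ is replaced (by a rebuild of $\prob$, or by a reset of the old left child), the new left subtree holds exactly the set $\set{\leftchild{\prob}}$ of elements that the old one held — which the algorithm description guarantees, since a rebuild "does not change the sets of elements stored within the two children" and a reset uses "the same set of elements." Given this, $\max_{y \in \leftchild{\prob}} y$ always refers to the correct routing threshold, no already-inserted element is ever stranded on the wrong side, and the separation invariant is genuinely preserved across the dynamic structure. Everything else is bookkeeping on top of Proposition~\ref{prop:basic}, so validity follows.
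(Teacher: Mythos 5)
Your proposal is correct and follows essentially the same route as the paper: both verify (a) no subarray ever overflows, via Proposition~\ref{prop:basic}(2), and (b) sorted order is preserved, via the routing rule together with the fact that rebuilds and resets never move elements between siblings. Your treatment of (b) is somewhat more explicit than the paper's (you spell out the left/right separation invariant and its preservation as an induction, where the paper says ``it is easy to see from the definition''), but the underlying argument is the same.
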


\begin{proof} 
In the algorithm, each successive insert is passed down the current subproblem tree to a leaf subproblem
which inserts the item into its subarray using the classical algorithm.  The classical algorithm at
leaf subproblem $\prob$ will fail
to carry out an insertion only if the total number of items assigned to $\prob$ exceeds
$\arraysize{\prob}$, but the final part of Proposition ~\ref{prop:basic} ensures that this
does not happen. The only other times that items are moved in the array are when a non-leaf problem
does a rebuild of one or both of its subproblems.  Such a rebuild will fail only
if for a created subproblem $\otherprob$ the number of items $s_{\prob}$ initially 
assigned to $\otherprob$ exceeds $\arraysize{\prob}$, which again is impossible by the last
part of Proposition ~\ref{prop:basic}.

The above guarantees that after each insertion, all  items inserted so far are placed in the array.
It remains to verify that the ordering of the items in the array is consistent with the intrinsic
ordering on items. At any point in the execution, if $\prob$ is an active leaf subproblem then
the items in the subarray of $\prob$ are in order by the correctness of the classical algorithm.
If two items are assigned to different leaves $\prob$ and $\otherprob$
with $\prob$ to the left of $\otherprob$ (under the uusal left-to-right ordering of leaves)
then it is easy to see from the definition of the algorithm that  the subarray of $\prob$ is entirely
to the left of the subarray of $\otherprob$ and the items assigned to $\prob$ are all less than
the items assigned to $\otherprob$, so the two items will be in correct order.
\end{proof}

It remains to bound the cost of the algorithm.
Define the \defn{level} of a subproblem to be its depth in the history tree, where global subproblems are said to have level $1$. The first part of Proposition~\ref{prop:basic} implies that the maximum level of any subproblem is at most $1.5\log m \le 2 \log n$.

The cost incurred by the data structure can be broken into four groups: (1) The cost of rebuilds, which occur every time that a non-leaf subproblem $\prob$ finishes a rebuild window and begins a new one; (2) the cost of resets, which which occur whenever a subproblem reaches its quota for the total number of insertions it can process; (3) the cost of tiny subproblems; and (4) the cost of \failure leaf subproblems.

We can bound the first three of these with the following lemma:
\begin{lemma}
\label{lemma:cost}
     The total expected amortized cost (across all subproblems) from rebuilds, resets, and tiny subproblems is $O((\log n) (\log\log n)^3)$ per insertion.
    \label{lem:simplecosts}
\end{lemma}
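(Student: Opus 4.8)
The plan is to bound each of the three cost sources separately, show that each contributes $O((\log n)(\log\log n)^3)$ amortized per insertion, and then sum. Throughout, I will use the fact (from Proposition~\ref{prop:basic}) that the history tree has depth at most $2\log n$, together with the quota constraint $\insertnumber{\prob} \le \arraysize{\prob}/\alpha$ and the size-halving relation $\arraysize{\prob} \in [0.49\arraysize{\otherprob}, 0.51\arraysize{\otherprob}]$ for a child $\prob$ of $\otherprob$.

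\textbf{Resets.} When a subproblem $\prob$ is reset, the cost is $O(\arraysize{\prob})$, and a reset only happens after $\prob$ has absorbed $\arraysize{\prob}/\alpha = \Theta(\arraysize{\prob}/(\log\log n)^2)$ insertions. So the reset cost charged to $\prob$'s lifetime is $O(\arraysize{\prob}) = O(\alpha) \cdot (\arraysize{\prob}/\alpha) = O((\log\log n)^2)$ per insertion that went to $\prob$. Since each actual insertion into the array passes through one subproblem at each of the $\le 2\log n$ levels, the total reset cost is $O((\log n)(\log\log n)^2)$ amortized per insertion. This is the easy case and needs no randomness.

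\textbf{Tiny leaves.} A tiny-leaf subproblem $\prob$ has $\arraysize{\prob} \le 2^{\sqrt{\log n}}$, so the classical algorithm costs $O(\log^2 \arraysize{\prob}) = O(\log n)$ per operation inside it. Every insertion lands in exactly one leaf, so the tiny-leaf cost is $O(\log n)$ amortized per insertion directly — actually even a bit cheaper than the target bound. (One should also observe that each insertion visits at most one tiny leaf because tiny leaves are leaves; no summation over levels is needed here.)

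\textbf{Rebuilds.} This is the step I expect to be the main obstacle, because it requires reasoning about the randomly-chosen window sizes. A rebuild at subproblem $\prob$ costs $O(\arraysize{\prob})$, and $\prob$ performs one rebuild at the start of each of its $\windowtotal{\prob}$ rebuild windows (after the first). Since $\prob$ receives $\insertnumber{\prob}$ insertions and each window holds up to $\window{\prob} = \arraysize{\prob}/(\alpha 2^{\windowpar{\prob}})$ of them, the number of windows is $\windowtotal{\prob} \le 1 + \insertnumber{\prob}/\window{\prob} = 1 + \insertnumber{\prob} \alpha 2^{\windowpar{\prob}}/\arraysize{\prob} \le 1 + 2^{\windowpar{\prob}}$ using the quota bound. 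Hence the total rebuild cost at $\prob$ is $O(\arraysize{\prob} \cdot 2^{\windowpar{\prob}})$, i.e. $O(\arraysize{\prob} 2^{\windowpar{\prob}}/\insertnumber{\prob})$ amortized per insertion into $\prob$ when $\insertnumber{\prob} = \Omega(\arraysize{\prob}/(\alpha 2^{\windowpar{\prob}}))$ (the regime where more than one window occurs). Taking expectations over $\windowpar{\prob}$: $\E[2^{\windowpar{\prob}}] = \sum_{k=0}^{\imax} p_k 2^k$, and from $p_k = 2^{-(k+1)}(1 + k/\imax)$ we get $p_k 2^k = \tfrac12(1 + k/\imax) = O(1)$, so $\E[2^{\windowpar{\prob}}] = O(\imax) = O(\log\log n)$. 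Therefore the expected amortized rebuild cost per insertion into $\prob$ — amortizing the $O(\arraysize{\prob})$ charge for a single always-performed rebuild against the $\arraysize{\prob}/\alpha$ quota-many insertions — is $O(\alpha \cdot \E[2^{\windowpar{\prob}}]) = O((\log\log n)^2 \cdot \log\log n) = O((\log\log n)^3)$. Summing over the $\le 2\log n$ levels each real insertion traverses gives $O((\log n)(\log\log n)^3)$ amortized expected rebuild cost per insertion.

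\textbf{Combining and the subtle point.} Adding the three bounds gives $O((\log n)(\log\log n)^3)$. The part that needs care — and where I'd slow down in the full proof — is the amortization argument at a subproblem that finishes with fewer than one full window of insertions, and, more globally, the bookkeeping that each of the (up to $m/4$) real insertions is charged to exactly one subproblem per level, so that summing per-subproblem amortized costs across the history tree really does yield a per-insertion bound scaled by the number of levels rather than the number of subproblems. This requires noting that a subproblem created with $s_\prob$ initial elements but few subsequent insertions still has its $O(\arraysize{\prob})$ creation/\texttt{CreateSubtree} cost, which can be charged either upward to the parent's rebuild (already counted) or to the reset/rebuild cost of an ancestor; one must verify no double-charging, but since creation of $\prob$ is triggered by exactly one ancestor event (a rebuild or reset), the charge is well-defined. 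The window-size expectation computation $\E[2^{\windowpar{\prob}}] = O(\log\log n)$ is where the specific shape of the distribution $p_k$ first earns its keep, though its deeper role (bounding \failure leaves) is deferred to Lemma~\ref{lemma:main}.
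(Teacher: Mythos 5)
Your decomposition into resets, tiny leaves, and rebuilds is exactly the paper's, and the reset and tiny-leaf cases are handled the same way (and correctly). The rebuild case, however, has a gap in the amortization chain.

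You correctly write $\windowtotal{\prob} \le 1 + \insertnumber{\prob}/\window{\prob}$, but then discard this in favor of the weakened $\windowtotal{\prob} \le 1 + 2^{\windowpar{\prob}}$ and conclude that the total rebuild cost at $\prob$ is $O(\arraysize{\prob}\cdot 2^{\windowpar{\prob}})$. The trouble is that this bound is not proportional to $\insertnumber{\prob}$, so dividing by $\insertnumber{\prob}$ does not give $O(\alpha\, 2^{\windowpar{\prob}})$ per insertion. In the worst case within the regime you allow (e.g., $\insertnumber{\prob} = \Theta(\window{\prob}) = \Theta(\arraysize{\prob}/(\alpha 2^{\windowpar{\prob}}))$), the quantity $\arraysize{\prob} 2^{\windowpar{\prob}}/\insertnumber{\prob}$ is $\Theta(\alpha\, 2^{2\windowpar{\prob}})$, whose expectation is $\alpha\sum_k p_k 4^k = \Theta(\alpha\, 2^{\imax}) = \Theta(\alpha \log^2 n)$ -- far too large. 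So the step from ``total cost $O(\arraysize{\prob} 2^{\windowpar{\prob}})$'' to ``amortized $O(\alpha\, \E[2^{\windowpar{\prob}}])$ per insertion'' does not follow.

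The fix is to keep the bound you already wrote: the number of rebuilds is at most $\insertnumber{\prob}/\window{\prob}$ (and is $0$ when $\windowpar{\prob}=0$, since then $\window{\prob}$ equals the quota), so the total rebuild cost is at most $\arraysize{\prob}\cdot\insertnumber{\prob}/\window{\prob} = \alpha\, 2^{\windowpar{\prob}} \insertnumber{\prob}$, which is $\alpha\, 2^{\windowpar{\prob}}$ per insertion into $\prob$ with no case-splitting on $\insertnumber{\prob}$. Taking expectations then gives $O(\alpha\, \imax)=O((\log\log n)^3)$ per insertion, and summing over the $O(\log n)$ levels (using that insertion sets at a fixed level are disjoint) gives the lemma. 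This is the paper's computation. Your final arithmetic on $\E[2^{\windowpar{\prob}}]$ and the summation over levels are both fine; the only issue is the intermediate weakening of the rebuild count.
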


\begin{proof}
First we bound the cost of all resets.  A reset is done
when a subproblem  $\prob$ has reached its
quota of $\arraysize{\prob}/\alpha$ insertions, and the cost of the reset is $\arraysize{\prob}$. We can bound the sum of these reset costs by charging $\alpha$ to each insertion that went through $\prob$. Overall, each insertion travels through $O(\log n)$ total subproblems, and therefore gets charged $O(\alpha \log n) = O((\log n) (\log \log n)^2)$. Thus the amortized expected cost of resets is $O((\log n) (\log \log n)^2)$.

Next we bound the cost of rebuilds. The number of rebuilds that a subproblem $\prob$ performs is $\windowtotal{\prob} -1 = \lfloor (\insertnumber{\prob} - 1) / \window{\prob} \rfloor$ where $\windowtotal{\prob}$ is its number of rebuild windows.  (Remember that, crucially, a subproblem does not perform a rebuild at the beginning of its first window, as the elements in $\arr{\prob}$ are already evenly spread out at that point in time, which is the state that $\prob$ initially wants.) Recall that
$\window{\prob}=\arraysize{\prob}/(\alpha 2^{\windowpar{}})$ where $\windowpar{}=\windowpar{\prob}$.
If $\windowpar{}=0$ then $\window{\prob}=\arraysize{\prob}/\alpha$ which is precisely $\prob$'s quota, so the number of rebuilds is $\lfloor (\insertnumber{\prob} - 1) / \window{\prob} \rfloor \le \lfloor (\arraysize{\prob}/\alpha - 1) / \window{\prob}\rfloor = 0$.
For $\windowpar{}\geq 1$ the number of rebuilds performed by $\prob$ is at most $\insertnumber{\prob}/(\arraysize{\prob}/(\alpha 2^{\windowpar{}}))=
\frac{2^{\windowpar{}}\alpha \insertnumber{\prob}}{\arraysize{\prob}}$ and each rebuild has cost $\arraysize{\prob}$
so the total  cost is at most $2^{\windowpar{}}\alpha \insertnumber{\prob}$.

Recalling that, for $k \in [1, \imax]$, we have $\Pr[\windowpar{} = k] = p_k = 2^{-(k + 1)} (1 + k / \imax) \le 2^{-k}$, we can take the expected value over all choices for $\windowpar{}$ to bound the expected total  cost of $\prob$'s rebuilds by
\[
\sum_{k=1}^{\imax} p_k 2^k\alpha \insertnumber{\prob}\leq 
\sum_{k=1}^{\imax} 2^{-k}2^k\alpha \insertnumber{\prob}= \imax \alpha \insertnumber{\prob} =O((\loglog n)^3\insertnumber{\prob}).
\]

Since the insertion sets for the subproblems at any fixed level of recursion are
disjoint, the total expected cost of rebuilds at each level
is $O(n(\log\log{n})^3)$.  Summing over the  $O(\log n)$ levels
yields an amortized expected cost of $O((\log n) (\loglog{n})^3)$ per operation.

Finally, because tiny subproblems have size at most $2^{O(\sqrt{\log n})}$, they incur amortized expected cost at most $O(\log n)$ per insertion.
\end{proof}

\subsection{Bounding the Costs of Expensive Leaves}

It remains to bound the cost of 
\failure leaves. These leaves may incur amortized cost as large as $O(\log^2 n)$ per insertion. So we want to show that the expected number of insertions that reach \failure leaves is $O(n/\log n)$. This is indeed true, and the proof occupies most of the rest of the paper.

Each insertion follows a unique root-to-leaf path in the history tree. We now define some notation for how to think about this path for a specific insertion $v$:
\begin{itemize}
    \item $\pathlength(v)$  is the number of subproblems in $v$'s path.
    \item  $\prob_1(v),\ldots,\prob_{\pathlength(v)}(v)$ is the path
of subproblems that $v$ follows.
\item $\freestart{j}(v)=\freestart{\prob_j(v)}$.
\item  $\free{j}(v)=\free{\prob_j(v)}(v)$.
\item For $\prob = \prob_j(v)$, $j < d(v)$, define $\freediff{\prob}(v)=\free{j+1}(v)-\free{j}(v)$, so that $\free{j}(v)-\free{1}(v)=\sum_{i=1}^{j-1} \freediff{\prob_i(v)}(v)$. 
\end{itemize}

\vspace{.2 cm}

All of the above are random variables that depend on both the sequence of insertions that has occurred prior to $v$ and the random choices of the algorithm, i.e., the parameters $\windowpar{\otherprob}$ for all subproblems $\otherprob$.

By definition,
the leaf subproblem $\prob_{\pathlength(v)}(v)$ is an \failure leaf if and only if $\freestart{\pathlength(v)}(v)\leq 1/4$ which implies $\free{\pathlength(v)}(v) \leq 1/4$.  On the other hand, since $\prob_1(v)$ is a global subproblem
and the total number of elements ever present is at most $m/2$,  $\free{\prob_1(v)} \geq 1/2$.
Thus, we obtain the following necessary condition for $v$ to reach an \failure leaf:

\begin{equation}
\label{eqn:failure1}
\sum_{i=1}^{\pathlength(v)-1}\freediff{\prob_i(v)}(v)  \leq -1/4.
\end{equation}

For a non-leaf subproblem $\prob$ and
insertion $v \in \inserts{\prob,i}$ (recall that $\inserts{\prob,i}$ denotes the insertions in the $i$-th rebuild window of $\prob$), define:

\[\Delta_{\prob}(v)= \begin{dcases}
     \frac{\iskewfinal{\prob}-2(1-\free{\prob})\arrayskew{i}}{\arraysize{\prob}-2\arrayskew{i}} & \text{if $v$ is sent left by $\prob$}\\
     \frac{-\iskewfinal{\prob}+2(1-\free{\prob})\arrayskew{i}}{\arraysize{\prob}+2\arrayskew{i}} & \text{if $v$ is sent right by $\prob$.}
     \end{dcases}
     \]
This definition comes out of the following
lemma, which shows that $\Delta_{\prob}(v)$ lower bounds $\freediff{\prob}(v)$.

 \begin{lemma}
 Let $\prob$ be a non-leaf subproblem and let $v \in \inserts{\prob}$. Then,
 \[
 \freediff{\prob}(v) \geq \Delta_{\prob}(v).
 \]
 
     \label{lem:delta}
 \end{lemma}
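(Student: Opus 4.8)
\textbf{Proof proposal for Lemma~\ref{lem:delta}.}

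The plan is to directly compute $\free{j+1}(v) = \free{\prob'}$ where $\prob'$ is the child of $\prob = \prob_j(v)$ that $v$ is sent to, and compare it to $\free{j}(v) = \free{\prob}$. Recall that $\free{\prob} = 1 - (s_\prob + \insertnumber{\prob})/\arraysize{\prob}$ is the free-slot density of $\prob$ at the \emph{end} of its lifetime, so $s_\prob + \insertnumber{\prob} = (1-\free{\prob})\arraysize{\prob}$ is the total number of items $\prob$ ever holds. I would first express the analogous quantity for the child: if $v \in \inserts{\prob,i}$ and $v$ is sent left, then $v$ is inserted into the left child $L(\prob)$ that was created at the start of rebuild window $i$, whose array has size $\arraysize{\prob}/2 - \arrayskew{i}$ (writing $\arrayskew{i}$ for $\arrayskew{\prob,i}$), and which starts with $s_\prob/2$ elements (by Proposition~\ref{prop:basic}'s observation that rebuilds split the element set exactly in half and never move elements between children). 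The number of insertions this particular left subproblem receives during window $i$ (up to and including $v$) is at most the number of left-going insertions in $\inserts{\prob,i}$, which is $(\,|\inserts{\prob,i}| - \iskewfinal{\prob,i}\,)/2 \le (\window{\prob} - \iskewfinal{\prob,i})/2$; but actually I want a bound in terms of the \emph{global} skew $\iskewfinal{\prob}$, so the cleaner route is to bound the free density at the end of the child's life directly.

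Concretely, for the left case I would argue $\free{j+1}(v) \ge 1 - \dfrac{s_\prob/2 + (\text{left-insertions to this child})}{\arraysize{\prob}/2 - \arrayskew{i}}$, and then bound the number of left-insertions to this child by the total left-insertions $\prob$ ever makes, namely $(\insertnumber{\prob} - \iskewfinal{\prob})/2$. Substituting $s_\prob + \insertnumber{\prob} = (1 - \free{\prob})\arraysize{\prob}$ gives numerator $\tfrac{1}{2}\big((1-\free{\prob})\arraysize{\prob} - \iskewfinal{\prob}\big)$, hence
\[
\free{j+1}(v) \ge 1 - \frac{(1-\free{\prob})\arraysize{\prob} - \iskewfinal{\prob}}{\arraysize{\prob} - 2\arrayskew{i}}.
\]
Subtracting $\free{j}(v) = \free{\prob} = 1 - (1-\free{\prob})$ and simplifying the difference of the two fractions over the common denominator $\arraysize{\prob} - 2\arrayskew{i}$ yields exactly $\freediff{\prob}(v) \ge \dfrac{\iskewfinal{\prob} - 2(1-\free{\prob})\arrayskew{i}}{\arraysize{\prob} - 2\arrayskew{i}} = \Delta_\prob(v)$, matching the left-hand branch of the definition. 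The right-going case is entirely symmetric: the right child has array size $\arraysize{\prob}/2 + \arrayskew{i}$, receives at most $(\insertnumber{\prob} + \iskewfinal{\prob})/2$ insertions over its life, and the same algebra produces the right-hand branch with $-\iskewfinal{\prob}$ and $+2\arrayskew{i}$ in the appropriate places.

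I expect the main obstacle to be bookkeeping rather than any real mathematical difficulty: I must be careful that $\free{j+1}(v)$ refers to the free density \emph{at the end of the child subproblem's lifetime}, that the child in question is the one alive when $v$ arrives (not an earlier incarnation killed by a reset), and that ``number of left-insertions ever made by $\prob$'' correctly upper-bounds the insertions received by this child over its life even accounting for resets — here the key point is that a reset of a child replaces it with a fresh copy holding the \emph{same} element set $s_\prob/2$, so each incarnation independently satisfies the bound, and in particular the incarnation containing $v$ does. I would also need to confirm the denominators are positive, which follows from $|\arrayskew{i}| \le \arraysize{\prob}/\beta \le 0.01\,\arraysize{\prob}$ via Proposition~\ref{prop:basic}. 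A subtlety worth flagging explicitly in the writeup: the inequality (rather than equality) in the lemma comes precisely from the fact that this child may receive \emph{fewer} than the total left-insertions of $\prob$ (since $\prob$'s lifetime may be cut short by an ancestor's rebuild, and since later left-insertions may go to a post-reset incarnation), so using the full left-insertion count of $\prob$ only makes the numerator larger and hence $\free{j+1}$ smaller, which is the direction we want for a lower bound on $\freediff{\prob}(v)$.
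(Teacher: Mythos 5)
Your proposal is essentially the paper's own proof: both compute the child's array size $\arraysize{\prob}/2 \mp \arrayskew{\prob,j}$, both bound the total elements ever stored in the child by $s_\prob/2$ plus the total number of left-going (resp.\ right-going) insertions $(\insertnumber{\prob} \mp \iskewfinal{\prob})/2$ over $\prob$'s whole lifetime, and both then do the same algebraic simplification using $s_\prob + \insertnumber{\prob} = (1-\free{\prob})\arraysize{\prob}$ to land on $\Delta_\prob(v)$. The paper first computes a lower bound on the absolute number of free slots before dividing, while you jump straight to the density; this is a cosmetic difference.

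One small imprecision worth fixing: you say a reset of a child ``replaces it with a fresh copy holding the same element set $s_\prob/2$,'' but a post-reset incarnation actually starts with the \emph{current} element set, which is $s_\prob/2$ \emph{plus} whatever left-going insertions arrived before the reset. The argument is still fine --- the correct statement is simply that any incarnation's initial set is contained in (the left half of $\set{\prob}$) $\cup$ (left-going insertions so far), so its initial-plus-inserted element count is always bounded by $s_\prob/2 + (\insertnumber{\prob} - \iskewfinal{\prob})/2$ --- but you should phrase it that way rather than claiming each incarnation restarts at exactly $s_\prob/2$.
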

 \begin{proof}
Suppose $v$ occurs during the $j$-th rebuild window of $\prob$. 
Let $\otherprob$ be the child subproblem of $\prob$ (active during $\inserts{\prob,j}$) that $v$ is assigned to.  
 We will assume that $\otherprob$ is a left subproblem of $\prob$; the other case follows by a symmetric argument with the appropriate changes of sign.

 By definition, $\arraysize{\otherprob} = \arraysize{\prob}/2 - \arrayskew{\prob,j}$.   At the beginning
 of the first rebuild window of $\prob$, the left child of $\prob$ starts with $s_{\prob}/2$ items, where $s_{\prob}$ is the number of items
 initially stored in $\arr{\prob}$, and these items will be assigned to every left subproblem created in subsequent windows of $\prob$. The total number of inserts received by $\prob$ that go left is $(|\inserts{\prob}|-\iskewfinal{\prob})/2$, so the total number of elements ever stored in $\rho$ is at most
 $$s_{\prob}/2 + (|\inserts{\prob}|-\iskewfinal{\prob})/2,$$
 which implies that the total number of free slots in $\arr{\otherprob}$ is always at least
    \begin{eqnarray*}
    \arraysize{\otherprob} - s_{\prob}/2 - 
(|\inserts{\prob}|-\iskewfinal{\prob})/2 
    &=& 
    \arraysize{\prob}/2 - \arrayskew{\prob,j} - s_{\prob}/2 - 
(|\inserts{\prob}|-\iskewfinal{\prob})/2 \\
    &=& (\arraysize{\prob}-s_{\prob}-|\inserts{\prob}|)/2  +\iskewfinal{\prob}/2  - \arrayskew{\prob,j} \\
    & \geq  & \free{\prob} \arraysize{\prob}/2 + \iskewfinal{\prob}/2 - \arrayskew{\prob,j}.
    \end{eqnarray*}
    The free-slot density in $\otherprob$ at the end of its lifetime  therefore satisfies
    \begin{eqnarray*}
\free{\otherprob}  &\ge &\frac{\free{\prob}\arraysize{\prob}/2 + \iskewfinal{\prob}/2 -  \arrayskew{\prob,j}}{\arraysize{\otherprob}} \\
    & = &  \frac{\free{\prob} (\arraysize{\prob}/2 -  \arrayskew{\prob,j}) + \iskewfinal{\prob}/2 - (1-\free{\prob}) \arrayskew{\prob,j}}{\arraysize{\prob}/2 - \arrayskew{\prob,j}} \\
   &   = &  \free{\prob} + \frac{\iskewfinal{\prob}/2 - (1-\free{\prob}) \arrayskew{\prob,j}}{\arraysize{\prob}/2 -  \arrayskew{\prob,j}} \\
   &   = &  \free{\prob} + \frac{\iskewfinal{\prob} - 2(1-\free{\prob}) \arrayskew{\prob,j}}{\arraysize{\prob} - 2 \arrayskew{\prob,j}} = \free{\prob} + \Delta_{\prob(v)}. 
\end{eqnarray*}
 \end{proof}

Before continuing, it is worth remarking on two features of $\Delta_\prob(v)$, for $v \in \inserts{\prob,i}$, that make it nice to work with (and that, at least in part, shape its definition). 
     
The first property is that, if $D_\prob$ and $\arrayskew{i}$ were both zero (which would, happen, for example, if the insertions in $\prob$ alternated evenly between $\prob$'s left and right children), then $\Delta_\prob(v)$ would also be zero. This means that one should think of $\Delta_\prob(v)$ as having a ``default'' value of zero, which is why later on (in Lemma \ref{lemma:main}), when we want to bound $\operatorname{Var}(\Delta_\prob(v))$, we will be able to get away with bounding $\E[(\Delta_\prob(v))^2]$ instead.\footnote{Note that, no matter what, we have $\operatorname{Var}(\Delta_\prob(v)) \le \E[(\Delta_\prob(v))^2]$, so one can always use the latter as an upper bound for the former. What is important here, is that the latter quantity is actually a \emph{good} upper bound for the former.}

The second property is that $\Delta_\prob(v)$ is the \emph{same} for all $v$ in a given rebuild window $\inserts{\prob, i}$. In fact, if $\prob = \prob_j(v)$ for some $j$, then all $u \in \inserts{\prob}$ agree on the values of $\Delta_{\prob_1(u)}(u), \ldots, \Delta_{\prob_{j - 1}(u)}(u)$. This property will be critical for our analysis (in Lemma \ref{lemma:failure}), and later on, of how the sequence $\Delta_{\prob_1(v)}(v), \Delta_{\prob_2(v)}(v), \ldots$ behaves. This property is also the reason why all of the quantities used to define $\Delta_\prob(v)$ (i.e., $D_\prob(v), F_j(v),\arrayskew{\prob, i}$) are based only on the window $\inserts{\prob, i}$ that contains $v$, rather than on anything more specific about the insertion $v$.

For an insert $v$, define $\Delta_1(v),\ldots,\Delta_{\pathlength(v)-1}(v)$
by  $\Delta_i(v)=\Delta_{\prob_i(v)}(v)$ for $i<\pathlength(v)$. 
Combining Lemma \ref{lem:delta} with  \eqref{eqn:failure1}, we get a new necessary condition for $v$ to reach an \failure leaf:

\[
\sum_{i=1}^{\pathlength(v)-1} \Delta_i(v)\leq -1/4.
\]

We will bound the fraction of $v$'s that arrive at an \failure leaf by showing that at most an expected $O(1/\log n)$ fraction of insertions $v$ satisfy the above condition.  To analyze this fraction,
we  adopt a probabilistic point of view with regard
to the insertions themselves. In particular, rather than analyzing the probability that any \emph{specific} insertion reaches an \failure leaf, we will select a uniformly random insertion $v$ from the entire insertion sequence (this randomness is for the sake of analysis, only, and is not coming from the randomness in the algorithm), and we will analyze the probability that this randomly selected insertion reaches an \failure leaf. Thus, the underlying
probability space will be over both
randomly chosen $v$ and the randomness of the algorithm.

To analyze a random insertion $v$, our main task will be to analyze the (random) sequence $\Delta_{1}(v),\Delta_{2}(v),\ldots$. We first make the observation
(Proposition~\ref{prop:Delta UB})  that each $\Delta_i(v)$ is
bounded in absolute value by $\frac{3}{\beta}$.  The more significant
result is then Lemma~\ref{lemma:main}, which says that under fairly general conditions on a subproblem $\prob$, the variance of
$\Delta_{\prob}(v)$, for a uniformly random $v \in \inserts{\prob}$, is (deterministically) bounded above by a small (sub-constant) multiple of its expectation (up to a negligible additive term). Once we have these properties, we will argue that they force $\Delta_1(v), \Delta_2(v), \ldots$ to evolve according to a well-behaved process, which we will then analyze using (mostly) standard results from the theory of random walks.

 \begin{proposition}
     \label{prop:Delta UB}
    For any non-leaf subproblem $\prob$ and any $v \in \inserts{\prob}$, we have $|\Delta_\prob(v)| \leq \frac{3}{\beta}$.
 \end{proposition}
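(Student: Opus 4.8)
The plan is a direct estimate of the two quantities --- numerator and denominator --- appearing in the definition of $\Delta_\prob(v)$. Fix a non-leaf subproblem $\prob$, an insertion $v$, and let $i$ be the index of the rebuild window with $v \in \inserts{\prob, i}$. By symmetry of the two cases in the definition, it suffices to treat the case where $v$ is sent left, so that
\[
\Delta_\prob(v) = \frac{\iskewfinal{\prob} - 2(1 - \free{\prob})\arrayskew{\prob, i}}{\arraysize{\prob} - 2\arrayskew{\prob, i}}.
\]

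First I would record three elementary bounds. (i) $|\iskewfinal{\prob}| \le \insertnumber{\prob} \le \arraysize{\prob}/\alpha$: the first inequality holds because $\iskewfinal{\prob}$ is the difference of the (nonnegative) counts of right- and left-bound inserts, whose sum is $\insertnumber{\prob}$, and the second is the quota constraint built into the algorithm. (ii) $|\arrayskew{\prob, i}| \le \arraysize{\prob}/\beta$, which is exactly the observation in the ``range of array skews'' remark preceding Section~\ref{sec:analysis} (it follows from $\arrayskew{\prob,i} = \arraysize{\prob}\,\iskewfinaltwo{\prob}{i-1}/(\beta \window{\prob})$ together with $|\iskewfinaltwo{\prob}{i-1}| \le \window{\prob}$). (iii) $0 \le 1 - \free{\prob} \le 0.8$, which is immediate from the second part of Proposition~\ref{prop:basic} since $1 - \free{\prob} = (s_\prob + \insertnumber{\prob})/\arraysize{\prob}$.

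Combining (i)--(iii), the numerator is at most $\arraysize{\prob}/\alpha + 1.6\,\arraysize{\prob}/\beta$ in absolute value; since $\alpha \ge \beta$ (as $\calpha/\cbeta$ is sufficiently large), this is at most $2.6\,\arraysize{\prob}/\beta$. For the denominator, $|\arraysize{\prob} - 2\arrayskew{\prob,i}| \ge \arraysize{\prob}(1 - 2/\beta) \ge 0.98\,\arraysize{\prob}$, using $\beta \ge 100$. Hence $|\Delta_\prob(v)| \le 2.6/(0.98\,\beta) < 3/\beta$, and the case where $v$ is sent right is identical after negating numerator and denominator. The proposition is essentially bookkeeping; the only step meriting care is converting the $\arraysize{\prob}/\alpha$-scale contribution of $\iskewfinal{\prob}$ into a $\arraysize{\prob}/\beta$-scale bound, which is where the relation $\alpha \ge \beta$ is used, with the remaining numeric slack (from the constants $0.8$ and $\beta \ge 100$) comfortably absorbing the factor $3$.
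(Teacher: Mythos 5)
Your proof is correct and follows essentially the same route as the paper: bound $|\iskewfinal{\prob}|$ by $\arraysize{\prob}/\alpha$ via the quota, bound $|\arrayskew{\prob,i}|$ by $\arraysize{\prob}/\beta$, lower-bound the denominator by $(1-2/\beta)\arraysize{\prob}$, and use $\alpha \gg \beta$. The only (immaterial) difference is that you invoke Proposition~\ref{prop:basic} to get $1-\free{\prob}\le 0.8$ where the paper just uses the trivial $1-\free{\prob}\le 1$, and your phrase ``identical after negating numerator and denominator'' is slightly imprecise (negating the left-case denominator does not give the right-case denominator), though the symmetric estimate on absolute values still goes through.
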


\begin{proof}
Let $\inserts{i}$ be the rebuild window of $\inserts{\prob}$ such that $v \in \inserts{i}$. Note that the array skew $\arrayskew{i}$ satisfies $|\arrayskew{i}| \le  \arraysize{\prob} \cdot \frac{|\iskewfinal{\prob,i - 1}|}{\beta \window{\prob}} \le \arraysize{\prob}/\beta$. Using this, we can conclude that

\begin{align*}
|\Delta_{\prob}(v)| & \leq \frac{|\iskewfinal{\prob}|+2|\arrayskew{i}|}{\arraysize{\prob} - 2|\arrayskew{i}|} \\ & \le (1 + 3/\beta) \cdot \left(
\frac{|\iskewfinal{\prob}|+2|\arrayskew{i}|}{\arraysize{\prob}}\right) \\
& \le (1 + 3/\beta) \cdot \left(
\frac{|\iskewfinal{\prob}|}{\arraysize{\prob}}\right) + (1 + 3/\beta) \cdot 2 / \beta \\
& \le (1 + 3/\beta)/\alpha + (1 + 3/\beta) \cdot 2 / \beta \tag{since $|\iskewfinal{\prob}| \le |\inserts{\prob}| \le \arraysize{\prob} / \alpha$} \\
& \le 3/\beta,
\end{align*}
where the final inequality uses that $\beta = \omega(1)$ and $\alpha > 100 \beta$.
\end{proof}
We now come to the  main technical lemma, which we will prove in Section~\ref{sec:main}. 

\begin{restatable}{lemma}{mainlemma}\hspace{-.1 cm}(The See-Saw Lemma) \hspace{.1 cm}
\label{lemma:main}
Let $\prob$ be a non-leaf subproblem with insertion set $\inserts{\prob}$, and suppose that $\freestart{\prob}$, the free-slot density of $\prob$ when it starts,
satisfies $\freestart{\prob} \le 0.5$. Then, for a uniformly random $v \in \inserts{\prob}$,
 $$\E[\Delta_\prob(v)^2]   \leq  \frac{100 \imax}{\beta} \E[\Delta_\prob(v)] + 2^{-\imax},$$
    where the expectations are taken over both the random choice of $v$ and the algorithm's random choice of $\window{\prob}$.
\end{restatable}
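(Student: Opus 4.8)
The plan is to fix all of the algorithm's randomness and all of the input \emph{except} the single random choice $\windowpar{\prob}$ made by $\prob$ itself, and to average over $\windowpar{\prob}$ only at the end. After this conditioning the sequence $\inserts{\prob}$ is fixed, and so is the left/right label of each of its insertions: in an insertion-only workload an insertion that $\prob$ sends left can never later become the maximum of $\prob$'s left child, so $\prob$'s left/right decision is governed by a threshold frozen when $\prob$ was created, which is unaffected by $\windowpar{\prob}$. Hence $\arraysize{\prob},s_\prob,\free{\prob},\freestart{\prob}$ and $\iskewfinal{\prob}$ are all fixed; write $M:=\arraysize{\prob}$, $N:=\insertnumber{\prob}$, $D:=\iskewfinal{\prob}$ and $\theta:=2(1-\free{\prob})$. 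The hypothesis $\freestart{\prob}\le 1/2$ forces $\free{\prob}\le 1/2$, hence $\theta\ge 1$, while Proposition~\ref{prop:basic} gives $1-\free{\prob}\le 0.8$, hence $\theta\le 1.6$; also $N\le M/\alpha$ and $|D|\le N$.

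For a fixed value $k$ of $\windowpar{\prob}$ the rebuild window size is $w_k=M/(\alpha 2^k)$, the insertions are cut into consecutive blocks of length $w_k$, and I write $\delta^{(k)}_i$ and $n^{(k)}_i$ for the skew and the length of the $i$-th block (so $\delta^{(k)}_i=\iskewfinaltwo{\prob}{i}$ and $n^{(k)}_i=|\inserts{\prob,i}|$ under this value of $\windowpar{\prob}$). The first step is to evaluate $g(k):=\E_v[\Delta_\prob(v)\mid\windowpar{\prob}=k]$ and $f(k):=\E_v[\Delta_\prob(v)^2\mid\windowpar{\prob}=k]$ over a uniformly random $v\in\inserts{\prob}$. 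Both branches of the definition of $\Delta_\prob(v)$ share, up to sign, the numerator $D-\theta\arrayskew{i}$, and $|\arrayskew{i}|\le M/\beta$ (as in the proof of Proposition~\ref{prop:Delta UB}); plugging in $\arrayskew{\prob,j}=M\delta^{(k)}_{j-1}/(\beta w_k)$ for even $j$ and $\arrayskew{\prob,j}=0$ for odd $j$, and collecting terms, one gets
$$g(k)=-\frac{D^2}{NM}+\frac{\theta}{2\beta N w_k}\bigl(\Phi_{k-1}-\Phi_k\bigr)+R_g(k),\qquad \Phi_k:=\sum_i\bigl(\delta^{(k)}_i\bigr)^2,$$
where $2\sum_{j\text{ even}}\delta^{(k)}_{j-1}\delta^{(k)}_j=\Phi_{k-1}-\Phi_k$ by the dyadic identity $\delta^{(k-1)}_m=\delta^{(k)}_{2m-1}+\delta^{(k)}_{2m}$ (the $m$-th scale-$(k-1)$ block is exactly the union of the scale-$k$ blocks $2m-1$ and $2m$), and $R_g(k)$ bundles lower-order terms of size $O\!\bigl(\tfrac{1}{\beta^2}+\tfrac{|D|}{\beta M}\bigr)$. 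Likewise $f(k)\le \frac{\theta^2}{\beta^2 N w_k}\Phi_k+\frac{D^2}{M^2}+R_f(k)$, using that $\Delta_\prob(v)^2$ has numerator $(D-\theta\arrayskew{i})^2$ together with the crude bounds $n^{(k)}_j\le w_k$ and $\sum_{j\text{ even}}(\delta^{(k)}_{j-1})^2\le\Phi_k$; Proposition~\ref{prop:Delta UB} is also convenient for estimating $R_f(k)$.

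The second step averages over $\windowpar{\prob}$, and it is here that the precise shape of the distribution $(p_k)$ is used. Since $p_k/w_k=\tfrac{\alpha}{2M}(1+k/\imax)$ is arithmetic in $k$, summation by parts turns the learning-gain telescope into a uniform average of the $\Phi_k$'s:
$$\sum_{k=1}^{\imax}\frac{p_k}{w_k}\bigl(\Phi_{k-1}-\Phi_k\bigr)=\frac{\alpha}{2M}\Bigl[(1+\tfrac1\imax)\Phi_0+\tfrac1\imax\sum_{k=1}^{\imax-1}\Phi_k-2\Phi_{\imax}\Bigr],$$
with $\Phi_0=D^2$ (the scale-$0$ block is all of $\inserts{\prob}$, as $N\le M/\alpha=w_0$). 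Consequently $\tfrac{100\imax}{\beta}\sum_k p_k g(k)$ contains the non-negative quantity $\tfrac{25\imax\theta\alpha}{\beta^2 NM}(1+\tfrac1\imax)D^2$, which dominates both the $f$-side term $\sum_k p_k\tfrac{D^2}{M^2}\le\tfrac{D^2}{M^2}$ and the $\tfrac{100\imax}{\beta}\cdot\tfrac{D^2}{NM}$ coming from $g$'s $-D^2/(NM)$ term (this is exactly where $\alpha\gg\beta$ and $N\le M/\alpha$ are invoked), plus the non-negative quantity $\tfrac{25\theta\alpha}{\beta^2 NM}\sum_{k=1}^{\imax-1}\Phi_k$, which dominates the $f$-side term $\lesssim\tfrac{\theta^2\alpha}{\beta^2 NM}\sum_{k=1}^{\imax-1}\Phi_k$ since $25$ comfortably exceeds $\theta^2$ — and with enough room left over to also absorb the $R_g,R_f$ contributions at these scales, provided $\beta$ is large relative to $\imax$ (here $\beta=\Theta((\loglog{n})^2)\gg\imax$). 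Everything still unaccounted for — the scale-$\imax$ pieces $\Phi_{\imax}$ and the residual lower-order terms — is controlled through the trivial estimate $\Phi_{\imax}\le N w_{\imax}=NM/(\alpha 2^{\imax})$ and the fact that $p_{\imax}=2^{-\imax}$; these total $O\!\bigl(2^{-\imax}\,\text{poly}(\imax)/\beta^2\bigr)\le 2^{-\imax}$ because $\beta^2\gg\imax$. Assembling the pieces yields $\sum_k p_k f(k)\le\tfrac{100\imax}{\beta}\sum_k p_k g(k)+2^{-\imax}$, which is the claim.

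I expect the main obstacle to lie in the honest bookkeeping behind the two displayed identities. Because $\Delta_\prob(v)$ is a ratio, $g(k)$ and $f(k)$ are not literally their leading terms but carry a cloud of lower-order pieces ($R_g,R_f$), and several of those — for instance ones proportional to $\tfrac{|D|}{\beta M}$ or to $\tfrac{1}{\alpha\beta}$ — are individually far larger than the target slack $2^{-\imax}$. So they cannot simply be bounded by $2^{-\imax}$: each must be shown to be absorbed either into the dominant learning-gain term $\sum_k p_k g(k)$ (leaning on the parameter gaps $\alpha/\beta\to\infty$ and $\beta/\imax\to\infty$) or into the $2^{-\imax}$ slack via the $\Phi_{\imax}$ estimate. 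A secondary obstacle is reconciling the idealized dyadic picture with edge cases — the last block at a scale may be shorter than $w_k$, the number of blocks at a scale may be odd, and when $N$ is very small the coarse scales contain a single block (so $\Phi_k=D^2=\Phi_0$ there and the telescope degenerates harmlessly) — each of which must be checked not to damage the summation by parts.
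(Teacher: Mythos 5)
Your proposal follows essentially the same strategy as the paper: condition on $\windowpar{\prob}=k$, introduce the block-skew squares $\Phi_k$ (the paper's $S^k$), use the dyadic telescope $\Phi_{k-1}-\Phi_k = 2\sum_{\text{even }j}\delta^{(k)}_{j-1}\delta^{(k)}_j$ (the paper's Proposition~\ref{prop:tree}), and sum by parts against the arithmetic weight $p_k/\window{}^k \propto 1+k/\imax$, with the tail controlled by $\Phi_{\imax}\le\insertnumber{\prob}\window{}^{\imax}$ (Proposition~\ref{prop:S^h}) and $p_{\imax}=2^{-\imax}$. All of this is correct and mirrors \eqref{eqn:key} and the surrounding manipulations.

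The one substantive gap is in the ratio-correction term $R_g(k)$. Your stated bound $|R_g(k)|=O(1/\beta^2 + |D|/(\beta M))$ is a valid \emph{absolute} estimate, but it does not scale with the $\Phi_k$'s. Consider the regime $D=0$ and $\delta^{(k)}_i=0$ for all $i,k$: then every $\arrayskew{j}=0$, so $\Delta_\prob(v)\equiv 0$ and both sides of the lemma vanish; yet your bound still only asserts $|R_g|\lesssim 1/\beta^2$, which is far larger than $2^{-\imax}=(\log n)^{-2}$ and has no learning-gain term to be absorbed into, so the absorption step as written cannot close. The paper avoids this by isolating $\Lambda(v)$ (your leading term, with denominator $\arraysize{\prob}$) and proving the \emph{pointwise multiplicative} estimate $|\Delta(v)-\Lambda(v)|\le\frac{8|\arrayskew{j}|}{3\arraysize{\prob}}|\Lambda(v)|\le|\Lambda(v)|/3$ (Claim~\ref{clm:lambda}); this lets them bound $\E[|\varepsilon|]$ by the \emph{same} expression $\frac{8\alpha}{\arraysize{\prob}\insertnumber{\prob}\beta^2}\sum_k S^k$ that bounds $\E[\Lambda^2]$ (Lemma~\ref{lemma:expectation}), i.e.\ the ratio-correction is proportional to $\sum_k\Phi_k$ and therefore scales down together with the learning gain. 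With that sharper estimate the comparison reduces cleanly to the single parameter gap $\beta/\imax\to\infty$ (and $\E[\Delta^2]\le 2\E[\Lambda^2]$), without any separate appeal to $\alpha\gg\beta$ at the absorption step. In short: the missing idea is to bound the error \emph{relative} to $|\Lambda(v)|$ rather than absolutely.
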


Lemma \ref{lemma:main} is the part of the analysis that captures the role of \emph{adaptivity} in our algorithm. If the algorithm were not adaptive (i.e., always set $\arrayskew{i} = 0$), then $\Delta_\prob(v)$ would simply be $D_\prob / m_\prob$. The insertion sequence would then be able to force $\Delta_\prob(v)^2 = (D_\prob / m_\prob)^2$ (and, more importantly, $\operatorname{Var}(\Delta_\prob(v))$) to be large by sending more insertions to one child of $\prob$ than to the other. (This would also cause $\E[\Delta_\prob(v)]$ to be slightly negative, which would also be bad for us.) The key insight in Lemma \ref{lemma:main} is that we cannot hope to prevent $\Delta_{\prob}(v)^2$ from being large---but we \emph{can hope} to use adaptivity in order to create a ``see-saw'' relationship between $\E[\Delta_{\prob}(v)^2]$ and $\E[\Delta_{\prob}(v)]$. In particular, if the insertion sequence chooses to send far more insertions to one child than the other, then this creates an opportunity for us to employ adaptivity, which we can then use to put more free slots on the side that receives more insertions, which allows for us to create a positive expected value for $\E[\Delta(v)]$. This, in turn, is a good thing, since $\Delta(v)$ being positive means that, on average, insertions experience a free-slot density \emph{increase} when traveling from $\prob$ to $\prob$'s child. Thus, we create a situation where, no matter what, we win: either $\E[\Delta(v)]$ is large (which is good), or $\operatorname{Var}(\Delta_\prob(v))$ is small (which is also good!). 

Thus, the ``magic'' of the See-Saw Algorithm will be in how it uses adaptivity to guarantee the See-Saw Lemma. A priori, the adaptive behavior of the algorithm (i.e, the way in which it selects $\arrayskew{\prob, i}$ based on the insertion behavior in the previous rebuild window) would seem to be quite difficult to analyze. Intuitively, the algorithm is attempting to observe when there are ``trends'' in the insertion-sequence's behavior. However, if we are not careful, the insertion sequence may be able to trick us into observing a ``trend'' in one rebuild window, even though the next rebuild window will behave in the opposite way. The main contribution of Section \ref{sec:main}, where we prove Lemma \ref{lemma:main}, is that if the window size $\window{\prob}$ and the array skews $\arrayskew{\prob, 1}, \arrayskew{\prob, 2}, \ldots$ are selected in just the right way (as in the See-Saw Algorithm), then it is possible to perform a telescoping argument that holds for \emph{any input}. The argument shows that, even if the insertion sequence causes the algorithm to perform badly for some choices of $\window{\prob}$, this creates ``opportunities'' for the algorithm to perform better on other choices of $\window{\prob}$, so that on average the algorithm always does well.

Since Lemma \ref{lemma:main} only considers subproblems $\prob$ satisfying $\freestart{\prob} \le 0.5$, it will be useful to  define a modified version of $\Delta_\prob$, where for any non-leaf subproblem $\prob$, we have:

\[
\Deltahat_{\prob}(v)=\begin{cases}
\Delta_{\prob}(v) & \text{if }\freestart{\prob} \le 0.5\\
0 & \text{ otherwise}.
\end{cases}
\]

Trivially, we then have:
\begin{corollary}
\label{cor:Deltahat}
For any non-leaf subproblem $\prob$, and for a uniformly random $v \in \inserts{\prob}$, we have
$$\E[\Deltahat_\prob(v)^2]   \leq  \frac{100 \imax}{\beta} \E[\Deltahat_\prob(v)] + 2^{-\imax}.$$
\end{corollary}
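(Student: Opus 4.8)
The plan is to argue by a two-case split according to the defining cases of $\Deltahat_\prob$, with the entire content being deferred to Lemma~\ref{lemma:main}. First I would handle the case $\freestart{\prob}\le 0.5$: here, by definition, $\Deltahat_\prob(v)=\Delta_\prob(v)$ for every $v\in\inserts{\prob}$, so $\E[\Deltahat_\prob(v)^2]=\E[\Delta_\prob(v)^2]$ and $\E[\Deltahat_\prob(v)]=\E[\Delta_\prob(v)]$, and the desired inequality is literally the statement of Lemma~\ref{lemma:main}. (Note that the expectation is still over both the random choice of $v\in\inserts{\prob}$ and the algorithm's random choice of $\window{\prob}$, exactly as in the See-Saw Lemma, so no re-derivation is needed.)

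The remaining case is $\freestart{\prob}>0.5$. Here $\Deltahat_\prob(v)=0$ for all $v\in\inserts{\prob}$, so $\E[\Deltahat_\prob(v)^2]=0$ and $\E[\Deltahat_\prob(v)]=0$, and the claimed bound reads $0\le \frac{100\imax}{\beta}\cdot 0 + 2^{-\imax}=2^{-\imax}$, which holds since $2^{-\imax}\ge 0$. I would remark that this is precisely the role of the additive $2^{-\imax}$ slack term in Lemma~\ref{lemma:main}: it makes the bound robust to the degenerate truncation without any loss elsewhere. There is no real obstacle in this corollary—it is a bookkeeping step whose sole purpose is to let later arguments (which manipulate the full path sum $\sum_i \Deltahat_i(v)$ without worrying about which subproblems along the path have low initial density) invoke a single uniform inequality.
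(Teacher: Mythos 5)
Your proof is correct and matches the paper's intent: the paper simply says ``Trivially, we then have'' and omits the argument, precisely because the two-case split you describe (invoke Lemma~\ref{lemma:main} when $\freestart{\prob}\le 0.5$, and note $0\le 2^{-\imax}$ when $\freestart{\prob}>0.5$) is the whole proof.
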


We define $\Deltahat_1(v),\Deltahat_2(v),\ldots$ by $\Deltahat_i(v)=\Deltahat_{\prob_i(v)}(v)$.  Earlier we gave a necessary condition for reaching an \failure leaf, based on summations
of $\{\Delta_i(v)\}$.  We now give a similar condition based on the sequence $\{\Deltahat_i(v)\}$:

\begin{proposition}
\label{prop:failure}
For any insert $v$, if  the path of $v$ ends at an \failure leaf, then
there is an interval $[a,b] \subseteq [1, d(v) - 1]$ such that 
$$\Deltahat_a(v)+\cdots+\Deltahat_b(v) \leq -0.23.$$
\end{proposition}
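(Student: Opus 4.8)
The plan is to isolate a \emph{suffix} of $v$'s root-to-leaf path on which the truncated quantities $\Deltahat$ agree with the untruncated $\Delta$, and then run a telescoping argument analogous to the one used to derive \eqref{eqn:failure1}. Recall that $v$ reaching an \failure leaf already forces $\sum_{i=1}^{d(v)-1}\Delta_i(v)\le -1/4$. First I would take $b=d(v)-1$ and let $a-1$ be the \emph{largest} index $i\in[1,d(v)-1]$ with $\freestart{i}(v)>1/2$, setting $a=1$ if there is no such index. By this choice, every $i\in[a,b]$ has $\freestart{i}(v)\le 1/2$, so $\Deltahat_i(v)=\Delta_i(v)$ throughout the interval, and hence $\sum_{i=a}^{b}\Deltahat_i(v)=\sum_{i=a}^{b}\Delta_i(v)$.

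Next I would telescope using Lemma~\ref{lem:delta}, which gives $\free{i+1}(v)-\free{i}(v)=\freediff{\prob_i(v)}(v)\ge\Delta_i(v)$, so that
$$\sum_{i=a}^{b}\Deltahat_i(v)=\sum_{i=a}^{b}\Delta_i(v)\le \free{d(v)}(v)-\free{a}(v).$$
Since $\prob_{d(v)}(v)$ is an \failure leaf we have $\free{d(v)}(v)\le 1/4$, so it suffices to show $\free{a}(v)\ge 1/2-1/\alpha-3/\beta$: this makes the right-hand side at most $-1/4+1/\alpha+3/\beta\le -0.23$ once $n$ is large, since $\alpha,\beta=\omega(1)$.

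To bound $\free{a}(v)$ I would split on whether $a=1$. If $a=1$, then $\prob_1(v)$ is a global subproblem and the array holds at most $m/2$ elements, so $\free{1}(v)\ge 1/2$. If $a\ge 2$, then by the choice of $a$ we have $\freestart{a-1}(v)>1/2$ with $\prob_{a-1}(v)$ a non-leaf subproblem, so $\insertnumber{\prob_{a-1}(v)}\le\arraysize{\prob_{a-1}(v)}/\alpha$ gives $\free{a-1}(v)\ge\freestart{a-1}(v)-1/\alpha>1/2-1/\alpha$; applying Lemma~\ref{lem:delta} once more together with $|\Delta_{a-1}(v)|\le 3/\beta$ (Proposition~\ref{prop:Delta UB}) gives $\free{a}(v)\ge\free{a-1}(v)+\Delta_{a-1}(v)\ge 1/2-1/\alpha-3/\beta$, as needed. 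In either case $\free{a}(v)\ge 1/2-1/\alpha-3/\beta$.

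The one remaining point is that $[a,b]$ must be nonempty, i.e., $a\le d(v)-1$; the only failure mode is $a=d(v)$, which would mean $\freestart{d(v)-1}(v)>1/2$. But then the estimate of the previous paragraph gives $\free{d(v)}(v)\ge\free{d(v)-1}(v)+\Delta_{d(v)-1}(v)\ge 1/2-1/\alpha-3/\beta>1/4$, contradicting that $\prob_{d(v)}(v)$ is an \failure leaf; hence $a\le d(v)-1$ and $[a,d(v)-1]\subseteq[1,d(v)-1]$ is a valid nonempty interval. I expect the fiddliest part of the write-up to be precisely this bookkeeping around the definition of $a$, the non-emptiness of the interval, and keeping straight which ``$F$''-quantities are beginning-of-lifetime versus end-of-lifetime free-slot densities; everything else is the telescoping already set up in the text together with the crude bounds $\insertnumber{\prob}\le\arraysize{\prob}/\alpha$ and Proposition~\ref{prop:Delta UB}.
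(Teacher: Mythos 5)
Your proof is correct and takes essentially the same route as the paper's: fix $b=d(v)-1$, pick the split point to be the last index where the initial free-slot density is large, telescope $\Delta_i\le\delta_i$ (Lemma~\ref{lem:delta}) over the suffix where $\Deltahat_i=\Delta_i$, and absorb the one boundary term via $|\Delta_{a-1}|\le 3/\beta$ and the quota bound $\insertnumber{\prob}\le\arraysize{\prob}/\alpha$. The only cosmetic differences are that the paper defines $\ell$ via $\freestart{\ell}\ge 1/2$ and sums from $\ell$ with a correction term $-\Delta_\ell$, whereas you define $a-1$ via $\freestart{a-1}>1/2$ and fold the boundary correction into the lower bound on $\free{a}$; and you spell out the non-emptiness of $[a,b]$, which the paper leaves implicit.
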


\begin{proof}
Suppose that the leaf $\prob_{\pathlength(v)}(v)$ is an \failure leaf. 
So $\freestart{\pathlength(v)}(v) \leq 1/4$, which
implies $\free{\pathlength(v)}(v) \leq 1/4$.
Let $b=d(v)-1$ and let $\ell$ be the largest index such that
$\freestart{\ell}(v) \ge 1/2$; this is well-defined because $\prob_1(v)$ is global and so $\freestart{1}(v) \geq 1/2$.
Since $\alpha \ge 100$, and since $\prob_\ell(v)$ gets at most $\arraysize{\prob_{\ell}(v)} / \alpha$ insertions, we have $\free{\ell}(v) \ge \freestart{\ell}(v)-0.01 \geq 0.49$.
By the definition of $\ell$, 
$\Deltahat_i(v)=\Delta_i(v)$ for all  $i \in [\ell+1,b]$.
Letting $a=\ell+1$, and letting $\delta_i(v)$ denote $\delta_{\prob_i(v)}$,  we have by Lemma \ref{lem:delta} that

$$\sum_{i=a}^b \Deltahat_i(v) = - \Delta_\ell(v) + \sum_{i=\ell}^b \Delta_i(v) \le - \Delta_\ell(v) + \sum_{i=\ell}^b \delta_i(v) =   - \Delta_\ell(v) + \free{b+1}(v)-\free{\ell}(v) \leq - \Delta_\ell(v) + 1/4 - 0.49.$$
Finally, by Proposition~\ref{prop:Delta UB}, this is at most $3/\beta + 1/4-0.49 \leq 0.01 + 1/4-0.49$, since $\beta \ge 300$.
\end{proof}

We will now show how to bound the probability that a random insertion $v$ (out of the \emph{entire} input stream) encounters an \failure leaf.

\begin{lemma}
\label{lemma:failure}
    Consider a uniformly random insertion $v$ out the entire insertion stream. The probability that $v$ reaches an \failure leaf is $O(1 /\log n)$.
\end{lemma}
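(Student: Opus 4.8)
The plan is to reduce the event ``$v$ reaches an \failure leaf'' to the existence of an interval $[a,b]$ with $\sum_{i=a}^b \Deltahat_i(v) \le -0.23$ (Proposition~\ref{prop:failure}), and then bound the probability of this event by analyzing the partial sums $M_j(v) := \sum_{i=1}^{j} \Deltahat_i(v)$ as a random process. First I would set up the right probability space: $v$ is drawn uniformly from the entire insertion stream, and the algorithm's randomness is the collection $\{\windowpar{\otherprob}\}$. The key structural fact (noted after Lemma~\ref{lem:delta}) is that $\Deltahat_i(v)$ depends on $v$ only through which rebuild window of $\prob_i(v)$ contains $v$ — so conditioned on the prefix path $\prob_1(v),\ldots,\prob_i(v)$ and on which windows $v$ falls in up to level $i$, the value $\Deltahat_i(v)$ is determined, and I want to understand the conditional distribution of the \emph{next} increment $\Deltahat_{i+1}(v)$. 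Here is where Corollary~\ref{cor:Deltahat} enters: for a subproblem $\prob$, if we further condition so that $v$ is uniform among $\inserts{\prob}$, then $\E[\Deltahat_\prob(v)] \ge \tfrac{\beta}{100\imax}\big(\E[\Deltahat_\prob(v)^2] - 2^{-\imax}\big)$, i.e.\ the conditional drift of the process is nonnegative up to the tiny additive $2^{-\imax}$, and moreover the drift dominates a constant multiple of the conditional second moment. Combined with Proposition~\ref{prop:Delta UB} ($|\Deltahat_i(v)| \le 3/\beta$, so increments are bounded), this says $M_j(v)$ behaves like a bounded-increment submartingale whose per-step variance is controlled by its per-step drift.

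The heart of the argument is then a random-walk / submartingale maximal-inequality computation: I would show that for such a process — bounded increments of size $O(1/\beta)$, nonnegative drift at each step (modulo the $2^{-\imax}$ slack summed over $O(\log n)$ levels, which is $o(1/\log n)$ since $\imax = 2\loglog n$ makes $2^{-\imax} = 1/\log^2 n$), and the see-saw inequality $\E[\text{increment}^2 \mid \text{past}] \le \frac{100\imax}{\beta}\E[\text{increment}\mid \text{past}] + 2^{-\imax}$ — the probability that \emph{some} contiguous block of increments sums to $\le -0.23$ is $O(1/\log n)$. The cleanest route is an exponential supermartingale: consider $Z_j = \exp(-\lambda M_j(v))$ for a suitably chosen $\lambda = \Theta(\beta)$ (so that $\lambda \cdot (3/\beta) = \Theta(1)$ and one can Taylor-expand $e^{-\lambda X} \le 1 - \lambda X + \lambda^2 X^2$ for $|X| \le 3/\beta$). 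Then $\E[Z_{j+1}\mid \text{past}] \le Z_j \cdot \big(1 - \lambda \E[X] + \lambda^2 \E[X^2]\big) \le Z_j \cdot \big(1 - \lambda \E[X] + \lambda^2 (\tfrac{100\imax}{\beta}\E[X] + 2^{-\imax})\big)$, and choosing $\lambda \le \tfrac{\beta}{100\imax}$ (which is still $\omega(1)$ since $\beta = \Theta((\loglog n)^2)$ and $\imax = 2\loglog n$, so $\beta/\imax = \Theta(\loglog n)$) makes the coefficient of $\E[X]$ nonpositive, leaving $\E[Z_{j+1}\mid\text{past}] \le Z_j(1 + \lambda^2 2^{-\imax})$. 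So $Z_j / \prod(1+\lambda^2 2^{-\imax})$ is a supermartingale; since there are $O(\log n)$ levels and $\lambda^2 2^{-\imax} = O(\beta^2/\log^2 n) = \widetilde O(1/\log^2 n)$, the accumulated multiplicative factor is $1 + o(1)$. A maximal inequality (Ville / Doob) then bounds $\Pr[\max_{[a,b]} (-\!\sum_{i=a}^b \Deltahat_i) \ge 0.23]$: one splits into choosing the start index $a$ (there are $O(\log n)$ of them, or one restarts the supermartingale at each $a$) and applies $\Pr[\exists b: M_b - M_{a-1} \le -0.23] \le e^{-0.23\lambda}(1+o(1))$ to each, for a union bound of $O(\log n)\cdot e^{-0.23\lambda}$. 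To make this $O(1/\log n)$ I need $e^{-0.23\lambda} = O(1/\log^2 n)$, i.e.\ $\lambda = \Omega(\log\log n)$; and indeed $\lambda$ can be taken to be $\Theta(\beta/\imax) = \Theta(\loglog n)$, which works as long as the constants $\calpha,\cbeta$ (hence the hidden constant in $\beta$) are chosen large enough — exactly the freedom the paper reserves.

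The main obstacle I anticipate is handling the conditioning structure correctly: the path $\prob_1(v),\ldots,\prob_{d(v)}(v)$ and its length $d(v)$ are themselves random and adaptively determined, the increments $\Deltahat_i(v)$ at different levels are not independent, and Corollary~\ref{cor:Deltahat} is a statement about $v$ uniform within a \emph{single} subproblem's insertion set — so I must verify that conditioning on the level-$i$ prefix, a uniformly random insertion of the whole stream that follows that prefix is in fact uniform among $\inserts{\prob_i(v)}$ (this should follow because every insertion to $\prob$ passes through $\prob$ exactly once, so the uniform-over-stream measure restricted to $\inserts{\prob}$ is uniform on $\inserts{\prob}$). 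One also has to be careful that the supermartingale is defined with respect to the correct filtration (revealing window sizes and routing decisions level by level) and that stopping at the \failure leaf (or at $d(v)-1$) is a valid stopping time. Finally, there is bookkeeping to ensure the $2^{-\imax}$ slack, accumulated over $O(\log n)$ levels and amplified by $\lambda^2$, truly stays $o(1/\log n)$ — which it does with room to spare. Once these conditioning and filtration issues are pinned down, the exponential-supermartingale computation is routine, and combining with Proposition~\ref{prop:failure} and the $O(\log^2 n)$ cost bound per \failure leaf (already established) via linearity of expectation completes the proof that the total expected cost from \failure leaves is $O(n \log n / \log n) \cdot$ (cost per insertion) — wait, more precisely: $O(n/\log n)$ insertions reach \failure leaves in expectation, each costs $O(\log^2 n)$, giving $O(n \log n)$ total, i.e.\ $O(\log n)$ amortized, which together with Lemma~\ref{lem:simplecosts} yields Theorem~\ref{thm:nicetoprove}.
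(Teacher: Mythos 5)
Your overall strategy coincides with the paper's: reduce via Proposition~\ref{prop:failure} to bounding the probability that some interval sum $\sum_{i=a}^b \Deltahat_i(v)$ drops below $-0.23$; establish the conditional see-saw inequality on the filtration generated by $\prob_1,\window{1},\prob_2,\window{2},\ldots$ (exactly the content of Corollary~\ref{cor:Deltahat2}, which the paper proves by the same ``$v$ is uniform on $\inserts{\prob_i}$ given the prefix'' observation you make); and then apply a concentration bound. Where you genuinely diverge is in the proof of the concentration step. The paper proves Claim~\ref{claim:conc} by invoking Freedman's martingale inequality (Theorem~\ref{thm:freedman}) as a black box; since Freedman requires a \emph{deterministic} upper bound on the predictable quadratic variation while the see-saw inequality gives one that depends on the (random) total drift $\mu(Z)$, the paper has to introduce the truncation device of Proposition~\ref{prop:nice trick} and cover $\mu(Z)\in[-sC,\infty)$ by geometrically growing intervals, then sum the resulting exponentials. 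You instead run the exponential supermartingale $Z_j=\exp(-\lambda M_j)$ directly with $\lambda\approx\beta/(100\imax)$, so that the drift term $-\lambda\mu_j$ dominates the second-moment term $\lambda^2 V_j$ step-by-step via the see-saw inequality, leaving only the small multiplicative error $1+\lambda^2 2^{-\imax}$, and finish with Ville's maximal inequality plus a union bound over the $O(\log n)$ start indices $a$ (rather than the paper's union over $O(\log^2 n)$ pairs $(a,b)$). This is a cleaner and somewhat more direct argument, since it exploits the drift-dominates-variance structure pointwise rather than converting it into a fixed variance cap; it is, in spirit, re-deriving the relevant piece of Freedman's inequality rather than citing it. Two small details you gloss over but that are easily patched: (i) when $\mu_j=\E[X_j\mid\text{past}]$ is (slightly) negative, the term $\mu_j(-\lambda+\lambda^2\cdot 100\imax/\beta)$ is nonnegative rather than nonpositive, but since $\mu_j\ge -\frac{\beta}{100\imax}2^{-\imax}$ (as $V_j\ge 0$) this only contributes another $O(\lambda^2 2^{-\imax})$ of slack per step, which you already account for; and (ii) you should pad $\Deltahat_j:=0$ for $j\ge d(v)$ so the process has fixed length $2\log n$, as the paper does, to avoid stopping-time issues. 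Neither affects the validity of your argument.
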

\begin{proof}
Since $v$ is a random variable, the sequences $\prob_1(v), \prob_2(v), \ldots$ and $\Deltahat_{1}(v), \Deltahat_2(v), \ldots$ are also random variables. When discussing our randomly chosen $v$, we will use $\Deltahat_j$ as a shorthand for $\Deltahat_j(v)$.
For convenience of notation, we let $\Deltahat_j = 0$ if $j \geq \pathlength(v)$.

By Proposition~\ref{prop:failure}, it suffices to upper bound the probability
that there is a pair $a \leq b$ satisfying $\Deltahat_a+\cdots+\Deltahat_b \leq -.23$.
The
maximum depth of the history tree is $2\log n$ so there are at most
$4\log^2n$ pairs with $a\leq b <2\log n$.  So it suffices to fix $a \leq b < 2\log n$ and show that $\Pr[\sum_{i=a}^b \Deltahat_{i}(v)\leq -0.23] =
O(1/\log^3 n)$.

In Section~\ref{sec:claim} we will use standard concentration bounds for random processes to show:

\begin{restatable}{claim}{concclaim}
Let $\imax = 2 \log \log n$, and let $\beta = \cbeta (\log \log n)^2$ for some sufficiently large positive constant $\cbeta$. Let $X_1, X_2, \ldots,X_r$ be random variables with $r\leq 2 \log n$
such that for $i \in [1,r]$:
\begin{enumerate}
    \item $|X_i| \le 3/\beta$;
    \item $\E[X_i^2\mid X_1, \ldots, X_{i - 1}]   \leq  \frac{100 \imax}{\beta} \E[X_i\mid X_1, \ldots, X_{i - 1}] + 2^{-\imax}.$
\end{enumerate}
Then,  $\Pr[\sum_i X_i < -0.2] \le O\left(1 / \log^3 n\right)$.
\label{claim:conc}
\end{restatable}

Although we will defer the proof of Claim \ref{claim:conc} to Section \ref{sec:claim}, it may be worth taking a moment to explain the intuition behind the claim. For this, it is helpful to substitute $X_i$ with $X_i' := X_i \cdot \log \log n$. Under this substitution (and with a bit of algebra) one can reduce the the hypotheses of the claim to  (1) $|X_i'| \le O(1 / \log \log n) \le 1$, and (2) $\E[{X'_i}^2 \mid X'_1, \ldots, X'_{i - 1}] \le O(1) \cdot \E[X'_i \mid X'_1, \ldots, X'_{i - 1}] + \tilde{O}(1/\log^2 n)$; and the conclusion of the claim becomes that, with probability $1 - 1 / \log^3 n$, we have $\sum_i X'_i \ge -O(\log \log n)$. In other words, the essence of the claim is simply that, if a random walk has steps of size at most, say $1$, and if each step has mean at least a constant factor larger than its variance (modulo some small additive error), then the random walk will not be able to become substantially negative with any substantial probability.

We would like to apply Claim \ref{claim:conc} to $X_1, X_2, \ldots = \Deltahat_a, \Deltahat_{a + 1}, \ldots, \Deltahat_{b}$. Proposition~\ref{prop:Delta UB} implies that each $\Deltahat_i$ satisfies the first
hypothesis of the claim, and the second hypothesis \emph{almost} follows from Corollary~\ref{cor:Deltahat}. The only issue is that Corollary~\ref{cor:Deltahat} tells us how to think about $\Deltahat_\prob(v)$ for a random $v$ \emph{out of those in $\inserts{\prob}$}, but what we actually want to reason about is $\Deltahat_i(v) \mid \Deltahat_a(v), \ldots, \Deltahat_{i - 1}(v)$ for a random $v$ \emph{out of all insertions}. Fortunately, these two probability distributions end up (by design) being  closely related to one another, allowing us to establish the following variation of Corollary \ref{cor:Deltahat}:

\begin{corollary}
\label{cor:Deltahat2}
For each $i \in [1,2\log n]$,
    $$\E[\Deltahat_i^2\mid \Deltahat_1,\ldots,\Deltahat_{i-1}]   \leq  \frac{100 \imax}{\beta} \E[\Deltahat_i \mid \Deltahat_1,\ldots,\Deltahat_{i-1}] + 2^{-\imax}.$$
\end{corollary}

\begin{proof}
Because, in this proof, we will use $\prob_i$ as a formal random variable, it is helpful to think of each $\prob_i$ as formally being given by the triple $(\arr{\prob_i}, \set{\prob_i}, \inserts{\prob_i})$. 

Recall that our probability space consists of the selection of the parameters $\window{\prob}$ during the algorithm (which, along with the insertion sequence, fully determine the history tree) and
the selection of a uniformly random $v$ that determines the path  $\prob_1,\prob_2,\ldots,\prob_j$ down the tree. We will need  an alternative incremental description of the probability space.
Keep in mind that the full sequence of insertions is fixed.
First note that the global subproblems are completely determined by the insertion sequence (i.e., there is
no randomness) and the insertion sets for these subproblems partition the
full set of insertions. Select $\prob_1$ from among the global
subproblems with probability proportional to the size of
its set of insertions. Next select the parameter $\window{1}=\window{\prob_1}$ according to the algorithm specification.
The parameter $\window{1}$ determines 
the windows and the set of subproblems of $\prob_1$, and the insertion sets
for these subproblems partition the insertion set of $\prob_1$. Next we select $\prob_2$ 
from among these subproblems with probability proportional to the size of
its insert set.  We continue in this way selecting $\prob_1,\window{1},\prob_2,\window{2},\ldots$ until we arrive either at a tiny leaf
or \failure leaf. This process
gives the same distribution over paths $\prob_1, \prob_2, \ldots,$ as the distribution 
that first runs the algorithm to determine the full 
history tree and then selects a random
insert and follows its path.

Now let us consider the random variable $\Deltahat_i \mid \prob_1, w_1, \ldots, \prob_i$, for a given $i \in [1, 2 \log n]$. So that this is well defined for all $i$, we can artificially define $\prob_j(v)$ and $w_j(v)$ to be null, for $j > d(v)$ and $j \ge d(v)$, respectively. If $\prob_i$ is a leaf (or null), then $\Deltahat_i \mid \prob_1, w_1, \ldots, \prob_i$ is defined to be identically zero, so we have trivially that 
$$\E[\Deltahat_i^2\mid \prob_1,\window{1},\ldots,\prob_i]   \leq  \frac{100 \imax}{\beta} \E[\Deltahat_i \mid\prob_1,\window{1},\ldots,\prob_i] + 2^{-\imax}.$$
The interesting case is what happens if $\prob_i$ is a non-leaf subproblem.

For any given set of outcomes for $\prob_1,\window{1},\ldots,\prob_i$, where $\prob_i$ is a non-leaf subproblem, the probabilistic rule for selecting  $\window{i}$ and $\prob_{i+1}$ (which together determine $\Deltahat_i$) is completely determined by $\prob_i$. Therefore, if we fix any set of outcomes for $\prob_1,\window{1},\ldots,\prob_i$, and if we use $\Deltahat_{\prob}$  to denote the random variable $\Deltahat_{\prob}(u)$ for a uniformly random $u \in \inserts{\prob}$, then we have

\begin{eqnarray*}
    \E[\Deltahat_i \mid \prob_1,\window{1},\ldots,\prob_i] = \E[\Deltahat_i \mid \prob_i] = \E[\Deltahat_{\prob_i}],\\
    \E[\Deltahat_i^2 \mid \prob_1,\window{1},\ldots,\prob_i] =  \E[\Deltahat_i^2 \mid \prob_i] =  \E[\Deltahat_{\prob_i}^2].\\
\end{eqnarray*}

Now we can combine this with Corollary~\ref{cor:Deltahat} to obtain
$$\E[\Deltahat_i^2\mid \prob_1,\window{1},\ldots,\prob_i]   \leq  \frac{100 \imax}{\beta} \E[\Deltahat_i \mid\prob_1,\window{1},\ldots,\prob_i] + 2^{-\imax}.$$
Since, earlier in the proof, we also established this identity for the case where $\prob_i$ is a leaf (or null), we can conclude that the identity holds for all options of $\prob_1, \window{1}, \ldots, \prob_i$.

Finally note that $\prob_1,\window{1},\ldots,\prob_i$ determines
$\Deltahat_1,\ldots,\Deltahat_{i-1}$.  Therefore for any fixing $d_1,\ldots,d_{i-1}$
of $\Deltahat_1,\ldots,\Deltahat_{i-1}$, we can average the previous inequality
with respect to the conditional distribution
on $\prob_1,\window{1},\ldots,\prob_i$ given $\Deltahat_1=d_1,\ldots,\Deltahat_{i-1}=d_{i-1}$ and this gives
exactly the desired result.
\end{proof}

An immediate consequence of Corollary \ref{cor:Deltahat2} is that, for any interval $[a, b]$, and for $i \in [a, b]$, we have $\E[\Deltahat_i^2 \mid \Deltahat_a, \ldots, \Deltahat_{i - 1}] \le \frac{100\imax}{\beta}  \E[\Deltahat_i \mid \Deltahat_a, \ldots, \Deltahat_{i - 1}]  + 2^{-\imax}$. We also have by Proposition  \ref{prop:Delta UB} that $|\Deltahat_i| \le 3/\beta$, so we can apply Claim~\ref{claim:conc}, using $X_1, X_2, \ldots = \Deltahat_a, \Deltahat_{a + 1}, \ldots, \Deltahat_{b}$ to complete the proof that $\Pr[\Deltahat_a + \cdots + \Deltahat_b \le -0.23] = O(1 / \log^3 n)$, which, in turn, completes the proof of the lemma. 
\end{proof}

Given Lemma \ref{lemma:failure}, we can complete the proof of Theorem \ref{thm:nicetoprove} as follows.

\thmnice*
\begin{proof}
    Lemma \ref{lem:simplecosts} bounds the amortized expected costs of rebuilds, resets, and tiny leaves by $$O((\log n) (\log \log n)^3).$$ Lemma \ref{lemma:failure} bounds the probability of an insertion encountering an \failure leaf by $O(1 / \log n)$. If an insertion does encounter an \failure leaf, it incurs $O(\log^2 n)$ amortized expected cost within the leaf. Thus, the amortized expected cost per insertion from \failure leaves is $O(\log n)$. 
\end{proof}

It remains to prove Lemma~\ref{lemma:main} and Claim~\ref{claim:conc}.
These are given in Sections~\ref{sec:main} and~\ref{sec:claim}.

\section{Proof of The See-Saw Lemma}
\label{sec:main}

In this section, we prove Lemma \ref{lemma:main}, restated below:

\mainlemma*

Let us also recall the definition of $\Delta_{\prob}(v)$. For a subproblem $\prob$ and
insertion $v \in \inserts{\prob,j}$ (recall that $\inserts{\prob,j}$ 
denotes the insertions in the $j$-th rebuild window of $\prob$):

\[\Delta_{\prob}(v)= \begin{dcases}
     \frac{\iskewfinal{\prob}-2(1-\free{\prob})\arrayskew{j}}{\arraysize{\prob}-2\arrayskew{j}} & \text{if $v$ is sent left by $\prob$}\\
     \frac{-\iskewfinal{\prob}+2(1-\free{\prob})\arrayskew{j}}{\arraysize{\prob}+2\arrayskew{j}} & \text{if $v$ is sent right by $\prob$.}
     \end{dcases}
     \]
Since $\prob$ is the only subproblem
that will be mentioned in this proof, we'll often omit the subscript $\prob$.

 To prove the lemma, we will prove a lower bound on $\E[\Delta(v)]$ and
 an upper bound on $\E[\Delta(v)^2]$ and then compare them.
 These expectations would be easier to deal with if we could change
 the denominator in $\Delta(v)$ to $\arraysize{\prob}$.  In particular, the
 expressions for insertions $v$ that go left vs right would then be negatives of each other. Recalling that $\iskewfinal{}(v)$ is $+1$ if $v$ goes right and $-1$ if $v$ goes left, define

\[
 \Lambda(v)=\Lambda_{\prob}(v)=\iskewfinal{}(v)\frac{-\iskewfinal{\prob}+2(1-\free{\prob})\arrayskew{j}}{\arraysize{\prob}}.
\]
We will use $\Lambda(v)$ to estimate $\Delta(v)$.  Define the error function 

\[
\varepsilon(v)=\Delta(v)-\Lambda(v).
\]

The following claim will inform how we think about  $\varepsilon(v)$:

\begin{claim}
If $v \in \inserts{\prob,j}$ for some $j$, then we have
\[\varepsilon(v) =\begin{dcases}\frac{2\arrayskew{j}}{\arraysize{\prob}-2\arrayskew{j}} \Lambda(v) & \text{if $v$ is sent left by $\prob$}\\
\frac{-2\arrayskew{j}}{\arraysize{\prob}+2\arrayskew{j}} \Lambda(v) & \text{if $v$ is sent right by $\prob$,}
     \end{dcases}
 \]
and that   $$|\varepsilon(v)| \leq \frac{8|\arrayskew{j}|}{3\arraysize{\prob}}|\Lambda(v)| \le |\Lambda(v)|/3.$$
\label{clm:lambda}
 \end{claim}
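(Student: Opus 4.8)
The plan is a direct algebraic verification, with the only mild care needed being the bookkeeping of signs in the ``sent left'' and ``sent right'' cases; there is no real obstacle here, so I would present it as a short computation. First I would introduce a name for the common numerator, $N := -\iskewfinal{\prob} + 2(1-\free{\prob})\arrayskew{j}$, and record the two starting identities it gives. By the definition of $\Lambda(v) = \Lambda_\prob(v)$ we have $\Lambda(v) = \iskewfinal{}(v)\, N/\arraysize{\prob}$, where $\iskewfinal{}(v) = -1$ if $v$ is sent left and $+1$ if $v$ is sent right; and by the definition of $\Delta_\prob(v)$,
$$\Delta(v) = \begin{cases} -N/(\arraysize{\prob} - 2\arrayskew{j}) & \text{if $v$ is sent left,}\\[2pt] \phantom{-}N/(\arraysize{\prob} + 2\arrayskew{j}) & \text{if $v$ is sent right.}\end{cases}$$
In particular $N = -\arraysize{\prob}\,\Lambda(v)$ in the left case and $N = \arraysize{\prob}\,\Lambda(v)$ in the right case.

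Next I would compute $\varepsilon(v) = \Delta(v) - \Lambda(v)$ in each case. Using the elementary identity $\frac{1}{\arraysize{\prob} \mp 2\arrayskew{j}} - \frac{1}{\arraysize{\prob}} = \frac{\pm 2\arrayskew{j}}{\arraysize{\prob}(\arraysize{\prob} \mp 2\arrayskew{j})}$ and then substituting the corresponding expression for $N$ in terms of $\Lambda(v)$, everything collapses: in the left case $\varepsilon(v) = \frac{2\arrayskew{j}}{\arraysize{\prob} - 2\arrayskew{j}}\Lambda(v)$, and in the right case $\varepsilon(v) = \frac{-2\arrayskew{j}}{\arraysize{\prob} + 2\arrayskew{j}}\Lambda(v)$. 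This is exactly the first assertion of the claim.

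Finally, for the magnitude bound I would recall the estimate $|\arrayskew{j}| \le \arraysize{\prob}/\beta$ already used in the proof of Proposition~\ref{prop:Delta UB} (it follows from $|\arrayskew{j}| \le \arraysize{\prob}\,|\iskewfinal{\prob,j-1}|/(\beta\window{\prob})$ and $|\iskewfinal{\prob,j-1}| \le \window{\prob}$). Since $\beta$ is a sufficiently large multiple of $(\loglog n)^2$, in particular $2|\arrayskew{j}| \le \arraysize{\prob}/4$, so $|\arraysize{\prob} \pm 2\arrayskew{j}| \ge \tfrac{3}{4}\arraysize{\prob}$. Plugging this into the two formulas for $\varepsilon(v)$ just obtained gives, in either case,
$$|\varepsilon(v)| = \frac{2|\arrayskew{j}|}{|\arraysize{\prob} \mp 2\arrayskew{j}|}\,|\Lambda(v)| \le \frac{2|\arrayskew{j}|}{\tfrac{3}{4}\arraysize{\prob}}\,|\Lambda(v)| = \frac{8|\arrayskew{j}|}{3\arraysize{\prob}}\,|\Lambda(v)| \le \frac{8}{3\beta}\,|\Lambda(v)| \le \frac{1}{3}|\Lambda(v)|,$$
which is the second assertion. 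The whole argument is just sign-tracking plus one application of the known bound on the array skew, so I expect no step to be genuinely difficult.
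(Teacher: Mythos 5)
Your proof is correct and follows essentially the same route as the paper's: both compute $\varepsilon(v)/\Lambda(v)$ by comparing the two fractions $\Delta(v)$ and $\Lambda(v)$ (which share a numerator up to sign), and both bound the magnitude via $|\arrayskew{j}| \le \arraysize{\prob}/\beta \le \arraysize{\prob}/8$. The only cosmetic difference is that the paper unifies the two sign cases by writing the ratio as $\Delta(v)/\Lambda(v) = \arraysize{\prob}/(\arraysize{\prob} + D(v)\cdot 2\arrayskew{j})$ and subtracting $1$, whereas you spell out the left and right cases separately; the content is identical.
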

 \begin{proof}
    We have $\frac{\Delta(v)}{\Lambda(v)}  = \frac{m_\prob}{m_\prob + D(v) \cdot 2\arrayskew{j}}$, which implies
    \begin{align*}
        \frac{\Delta(v) -\Lambda(v) }{\Lambda(v)} & =   \frac{m_\prob - (m_\prob + D(v) \cdot 2\arrayskew{j}) }{m_\prob + D(v) \cdot 2\arrayskew{j}} = \frac{- D(v) \cdot 2\arrayskew{j}}{m_\prob + D(v) \cdot 2\arrayskew{j}}.
    \end{align*}
    To prove the second part of the claim, observe that  $|\arrayskew{j}|=\frac{|\iskewfinal{j-1}|\arraysize{\prob}}{\beta\window{j}} \leq \frac{\arraysize{\prob}}{\beta}\leq \frac{\arraysize{\prob}}{8}$, and so:
\begin{equation*}
    |\varepsilon(v)| \leq \frac{2|\arrayskew{j}|}{\frac{3}{4}\arraysize{\prob}}|\Lambda(v)| = \frac{8|\arrayskew{j}|}{3\arraysize{\prob}}|\Lambda(v)| \le |\Lambda(v)|/3.
\end{equation*}
\end{proof}

We can bound $\E[\Delta(v)]$ and $\E[\Delta(v)^2]$ as a function of $\Lambda(v)$ and $\varepsilon(v)$ as follows:
\begin{proposition}
\label{prop:conversion}
For a subproblem $\prob$,
\begin{eqnarray*}
    \E[\Delta(v)]& \geq &\E[\Lambda(v)] - \E[|\varepsilon(v)|]\\
    \E[\Delta(v)^2] &\leq & 2 \cdot \E[\Lambda(v)^2], 
\end{eqnarray*}
where expectations are with respect to the randomness of the algorithm and $v$ chosen uniformly from $\inserts{\prob}$.
\end{proposition}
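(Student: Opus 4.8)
The proof is a direct consequence of the decomposition $\Delta(v) = \Lambda(v) + \varepsilon(v)$ (which is simply the definition of $\varepsilon(v)$) together with the pointwise bound $|\varepsilon(v)| \le |\Lambda(v)|/3$ already established in Claim~\ref{clm:lambda}. No new estimates are needed; the whole proposition follows by elementary manipulations and linearity/monotonicity of expectation.

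\textbf{First inequality.} I would start from $\Delta(v) = \Lambda(v) + \varepsilon(v)$, which holds pointwise for every $v \in \inserts{\prob}$ and every outcome of the algorithm's randomness. Since $\varepsilon(v) \ge -|\varepsilon(v)|$ pointwise, we get $\Delta(v) \ge \Lambda(v) - |\varepsilon(v)|$ pointwise, and taking expectations over the random choice of $v$ and of $\window{\prob}$ gives $\E[\Delta(v)] \ge \E[\Lambda(v)] - \E[|\varepsilon(v)|]$. That is all the first part requires.

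\textbf{Second inequality.} Again from $\Delta(v) = \Lambda(v) + \varepsilon(v)$, the triangle inequality gives $|\Delta(v)| \le |\Lambda(v)| + |\varepsilon(v)|$ pointwise. Invoking Claim~\ref{clm:lambda}, which says $|\varepsilon(v)| \le |\Lambda(v)|/3$, we obtain $|\Delta(v)| \le \tfrac{4}{3}|\Lambda(v)|$, hence $\Delta(v)^2 \le \tfrac{16}{9}\Lambda(v)^2 \le 2\,\Lambda(v)^2$ pointwise. Taking expectations finishes the proof.

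\textbf{Main obstacle.} There is essentially none: the only nontrivial input, the pointwise inequality $|\varepsilon(v)| \le |\Lambda(v)|/3$, has already been proven in Claim~\ref{clm:lambda} (it is where the condition $\beta \ge 8$, equivalently $|\arrayskew{j}| \le \arraysize{\prob}/8$, gets used). The constant $2$ in the statement is deliberately loose — one could state $16/9$ — so no tightness argument is needed. The only thing to be careful about is that the expectations are taken over the correct joint distribution (uniform $v \in \inserts{\prob}$ together with the algorithm's random $\window{\prob}$), but since every step is a pointwise inequality that survives taking expectations, this causes no difficulty.
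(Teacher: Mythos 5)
Your proof is correct and follows exactly the paper's argument: both inequalities come from the pointwise decomposition $\Delta(v)=\Lambda(v)+\varepsilon(v)$, the triangle inequality, and the bound $|\varepsilon(v)|\le|\Lambda(v)|/3$ from Claim~\ref{clm:lambda}, after which one takes expectations. The paper is just slightly more terse, stating the first inequality is "immediate" and writing the second as $\E[(\Lambda(v)+\varepsilon(v))^2]\le\frac{16}{9}\E[\Lambda(v)^2]$ directly.
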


\begin{proof}
    The first inequality is immediate from the definition of $\varepsilon(v)$ and the triangle inequality.  For the second, using the bound $|\varepsilon(v)|\leq |\Lambda(v)|/3$ from Claim \ref{clm:lambda}, we have:
    \[
    \E[\Delta(v)^2]=\E[(\Lambda(v)+\varepsilon(v))^2] \leq \E[(|\Lambda(v)|+|\Lambda(v)|/3)^2] \leq \frac{16}{9}\E[\Lambda(v)^2].
    \]
    
    \end{proof}

In what follows, we will  compute a lower bound on $\E[\Lambda(v)]$ and
upper bounds on $\E[\Lambda(v)^2]$ and $\E[|\varepsilon(v)|]$.  We will then be able to use Proposition~\ref{prop:conversion} to complete the proof of Lemma \ref{lemma:main}.

Recall that, for the subproblem $\prob$, the algorithm chooses its rebuild window size $\window{\prob}$ based on the random variable $\windowpar{\prob}\in [0,\imax]$. It will often be helpful to condition on $\windowpar{\prob}=k$ for some $k$. Thus we use the following notation to refer to the values that variables take when $\windowpar{\prob} = k$:
\begin{itemize}
\item $\window{}^k$ is the window size $\arraysize{\prob}/(\alpha 2^k)$.
\item $\numwin{}^k$ is the number of rebuild windows. 
\item The partition of $\inserts{\prob}$ into windows is denoted $\inserts{1}^k,\inserts{2}^k,\ldots,\inserts{\numwin{}^k}^k$. \\(We have $\numwin{}^k \leq 2^k$, since $\frac{\arraysize{\prob}}{\alpha}\geq |\inserts{\prob}| =\sum_j|\inserts{j}^k| >(\numwin{}^k-1) \window{}^k \geq  (\numwin{}^k-1) \frac{\arraysize{\prob}}{\alpha 2^k}$.)
\item $\iskewfinal{j}^k$ is an abbreviation for $\iskewfinal{}(\inserts{j}^k)$.
\item $\arrayskew{j}^k$ is the value used by the algorithm for
$\arrayskew{j}$.  It is $\frac{\arraysize{\prob}\iskewfinal{j-1}^k}{\beta\window{}^k}$
if $j$ is even, and is 0 if $j$ is odd.
\end{itemize}

\vspace{.2 cm}

Observe that the rebuild windows for $\windowpar{\prob} =k+1$ are obtained
by splitting each rebuild window for  $\windowpar{\prob}=k$ into two parts $\inserts{j}^k = \inserts{2j-1}^{k+1} \cup \inserts{2j}^{k+1}$. The two sets $\inserts{2j-1}^{k+1}$ and $\inserts{2j}^{k+1}$ will both be of size $\window{}^{k+1}$, unless $j=\numwin{}^k$, in which case the sizes may be less than $\window{}^{k+1}$ or even $0$; indeed the rebuild window $\inserts{2\numwin{}^k}^{k+1}$ may not even exist, in which case we treat it as empty.

The following two sums play a key role in the computation of $\E[\Lambda(v)]$ and $\E[\Lambda(v)^2]$.
\begin{eqnarray*}
S^k&=&\sum_{j\leq t^k} (\iskewfinal{j}^k)^2.\\
R^k&=&\sum_{\text{even } j \leq \numwin{}^k}\iskewfinal{j-1}^k\iskewfinal{j}^k. 
\end{eqnarray*}
In the case that there is only one window (e.g. $k=0$), we have $S^k = (\iskewfinal{\prob})^2$ and $R^k = 0$.

The following upper bound on $S^k$ will be helpful later on in the proof, in particular, when we wish to bound $S^{\imax}$. 
Recall, $\insertnumber{\prob}= |\inserts{\prob}|$.
\begin{proposition}
    \label{prop:S^h}
    For any $k \in [1,\imax]$,
    $$S^k \leq \frac{\arraysize{\prob}\insertnumber{\prob}}{\alpha 2^k}.$$
\end{proposition}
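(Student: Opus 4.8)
Proposition~\ref{prop:S^h} states that $S^k = \sum_{j \le t^k} (D_j^k)^2 \le \frac{m_\prob \sigma_\prob}{\alpha 2^k}$ for $k \in [1, \imax]$. The key observation is that the windows at level $k$ partition $\inserts{\prob}$, and each window $\inserts{j}^k$ has size at most $\window{}^k = m_\prob / (\alpha 2^k)$; hence $|D_j^k| \le |\inserts{j}^k| \le \window{}^k$.

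\textbf{Proof plan.} The plan is to bound $(D_j^k)^2 \le |D_j^k| \cdot |\inserts{j}^k| \le \window{}^k \cdot |\inserts{j}^k|$ for each window $j$, using that $|D_j^k| = |D_\prob(\inserts{j}^k)| \le |\inserts{j}^k|$ (since each insert contributes $\pm 1$) and that $|\inserts{j}^k| \le \window{}^k$ (by the definition of rebuild windows, each window holds at most $\window{}^k$ inserts). Then summing over all windows, $S^k = \sum_{j \le t^k} (D_j^k)^2 \le \window{}^k \sum_{j \le t^k} |\inserts{j}^k| = \window{}^k \cdot \insertnumber{\prob}$, where the last equality holds because the windows partition $\inserts{\prob}$ and $|\inserts{\prob}| = \insertnumber{\prob}$. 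Substituting $\window{}^k = m_\prob / (\alpha 2^k)$ gives exactly the claimed bound $S^k \le \frac{m_\prob \insertnumber{\prob}}{\alpha 2^k}$.

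\textbf{Main obstacle.} There is essentially no obstacle here; this is a short one-line computation. The only small point to be careful about is the claim $|\inserts{j}^k| \le \window{}^k$, which is precisely the definition of how the insertion sequence is broken into equal-sized rebuild windows of size (up to) $\window{}^k$ (with only the final window possibly being smaller) — so this requires no additional argument beyond citing the algorithm's definition. I expect this proposition to be stated and dispatched in two or three lines.
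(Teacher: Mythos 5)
Your proposal is correct and is essentially identical to the paper's proof: both bound $(D_j^k)^2$ by $|\inserts{j}^k| \cdot \window{}^k$ using $|D_j^k| \le |\inserts{j}^k| \le \window{}^k$, then sum over $j$ using that the windows partition $\inserts{\prob}$. The only cosmetic difference is that the paper factors the chain as $(D_j^k)^2 \le |\inserts{j}^k|^2 \le |\inserts{j}^k|\cdot\window{}^k$ while you factor it as $(D_j^k)^2 \le |D_j^k|\cdot|\inserts{j}^k| \le \window{}^k\cdot|\inserts{j}^k|$; the content is the same.
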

\begin{proof}
We have that $$S^k=\sum_{j=1}^{\numwin{}^k} (\iskewfinal{j}^k)^2 \leq \sum_{j=1}^{\numwin{}^k} |\inserts{j}^k|^2 \leq 
\sum_{j=1}^{\numwin{}^k} |\inserts{j}^k| \frac{\arraysize{\prob}}{\alpha 2^k} = \frac{\insertnumber{\prob} \arraysize{\prob}}{\alpha 2^k}.$$
\end{proof}

We now turn to our bounds on $\E[\Lambda(v)]$,
$\E[(\Lambda(v))^2]$, and $\E[|\varepsilon(v)|]$.

\begin{lemma}
\label{lemma:expectation}
For any subproblem $\prob$, we have: 
    \begin{eqnarray*}
        \E[\Lambda(v)]& = &\frac{\alpha(1-\free{\prob})}{\arraysize{\prob}\insertnumber{\prob} \beta} \sum_{k=1}^{\imax} \left(1+\frac{k}{\imax}\right) R^k - \frac{S^0}{\insertnumber{\prob}\arraysize{\prob}}\\
    \E[(\Lambda(v))^2] & \leq & \frac{8\alpha }{\arraysize{\prob}\insertnumber{\prob}\beta^2} \sum_{k=0}^{\imax}  S^k \\
     \E[|\varepsilon(v)|] & \leq & \frac{8\alpha }{\arraysize{\prob}\insertnumber{\prob}\beta^2} \sum_{k=0}^{\imax}  S^k, 
    \end{eqnarray*}
    where the expectations are taken with respect to random $v \in \inserts{\prob}$ and the choice of $\windowpar{\prob}$.
\end{lemma}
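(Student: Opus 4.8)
The plan is to condition on the value $k$ of the random variable $\windowpar{\prob}$, which deterministically fixes the window size $\window{}^k=\arraysize{\prob}/(\alpha 2^k)$, the partition $\inserts{1}^k,\dots,\inserts{t^k}^k$, and the skews $\arrayskew{j}^k$; to compute (for $\E[\Lambda(v)]$) or bound (for the other two quantities) each expression conditioned on $k$; and finally to average over $k$ against the distribution $p_k=2^{-(k+1)}(1+k/\imax)$.

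For $\E[\Lambda(v)]$, the key observation is that, for a fixed $k$ and a window $\inserts{j}^k$, the quantity $-\iskewfinal{\prob}+2(1-\free{\prob})\arrayskew{j}^k$ does not depend on which $v\in\inserts{j}^k$ is chosen, whereas $\iskewfinal{}(v)=\pm1$ sums over the window to $\iskewfinal{j}^k$. Hence $\sum_{v\in\inserts{\prob}}\Lambda(v)\big|_{\windowpar{\prob}=k}=\arraysize{\prob}^{-1}\sum_{j}\iskewfinal{j}^k\bigl(-\iskewfinal{\prob}+2(1-\free{\prob})\arrayskew{j}^k\bigr)$. Since the window skews telescope, $\sum_j\iskewfinal{j}^k=\iskewfinal{\prob}$, the first part of this sum equals $-\iskewfinal{\prob}^2=-S^0$; substituting $\arrayskew{j}^k=\arraysize{\prob}\iskewfinal{j-1}^k/(\beta\window{}^k)$ for even $j$ (and $0$ for odd $j$), and using $\arraysize{\prob}/\window{}^k=\alpha 2^k$, turns the second part into $2(1-\free{\prob})(\alpha 2^k/\beta)R^k$. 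Dividing by $\insertnumber{\prob}$ and averaging over $k$ — using $R^0=0$, $\sum_k p_k=1$, and $p_k 2^k=\tfrac12(1+k/\imax)$ for $k\in[1,\imax]$ — yields the stated identity exactly.

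For $\E[(\Lambda(v))^2]$ and $\E[|\varepsilon(v)|]$, since $\iskewfinal{}(v)^2=1$ the quantity $\Lambda(v)^2$ is also constant on each window, equal to $\arraysize{\prob}^{-2}(-\iskewfinal{\prob}+2(1-\free{\prob})\arrayskew{j}^k)^2\le\arraysize{\prob}^{-2}\bigl(\iskewfinal{\prob}^2+4|\iskewfinal{\prob}||\arrayskew{j}^k|+4(\arrayskew{j}^k)^2\bigr)$. The dominant contribution is the pure-$\arrayskew{}$ term: using $|\arrayskew{j}^k|=(\alpha 2^k/\beta)|\iskewfinal{j-1}^k|$, $|\inserts{j}^k|\le\window{}^k=\arraysize{\prob}/(\alpha 2^k)$, and $\sum_{\text{even }j}(\iskewfinal{j-1}^k)^2\le S^k$, one gets $\sum_j|\inserts{j}^k|(\arrayskew{j}^k)^2\le(\alpha 2^k\arraysize{\prob}/\beta^2)S^k$; dividing by $\insertnumber{\prob}\arraysize{\prob}^2$ and summing over $k$ with $p_k 2^k\le1$ produces the claimed $\tfrac{8\alpha}{\arraysize{\prob}\insertnumber{\prob}\beta^2}\sum_k S^k$ form. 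The remaining terms, which all carry a factor $\iskewfinal{\prob}$, are lower order: Cauchy--Schwarz gives $\iskewfinal{\prob}^2=(\sum_j\iskewfinal{j}^k)^2\le t^kS^k\le2^kS^k$ and $\sum_j|\iskewfinal{j-1}^k|\le\sqrt{t^k}\sqrt{S^k}\le2^{k/2}\sqrt{S^k}$, and combined with $\insertnumber{\prob}\le\arraysize{\prob}/\alpha$ and $\alpha\gg\beta$ these contributions fit comfortably inside the constant. The bound on $\E[|\varepsilon(v)|]$ is handled identically: Claim~\ref{clm:lambda} gives $|\varepsilon(v)|\le\tfrac{8|\arrayskew{j}|}{3\arraysize{\prob}}|\Lambda(v)|\le\tfrac{8|\arrayskew{j}|}{3\arraysize{\prob}^2}\bigl(|\iskewfinal{\prob}|+2|\arrayskew{j}|\bigr)$, which again decomposes into a pure $(\arrayskew{j})^2$ term (bounded as above) and a cross term $|\arrayskew{j}||\iskewfinal{\prob}|/\arraysize{\prob}^2$ (bounded via the same Cauchy--Schwarz estimates).

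The identity for $\E[\Lambda(v)]$ is essentially mechanical once per-window constancy and the telescoping $\sum_j\iskewfinal{j}^k=\iskewfinal{\prob}$ are noticed; the two inequalities are also mechanical, but require some care so the constants land under $8$. I expect the real crux to be recognizing that every occurrence of $\iskewfinal{\prob}$ outside the $R^k$/$S^k$ sums is genuinely a low-order term — this hinges on $\insertnumber{\prob}\le\arraysize{\prob}/\alpha$ (so $|\iskewfinal{\prob}|$ is tiny compared with $\arraysize{\prob}/\beta$) together with the Cauchy--Schwarz bounds $\iskewfinal{\prob}^2\le t^kS^k$ and $\sum_j|\iskewfinal{j-1}^k|\le\sqrt{t^kS^k}$, and on $\alpha$ being chosen sufficiently large relative to $\beta$.
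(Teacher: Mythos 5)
Your proposal follows the same overall strategy as the paper: condition on $\windowpar{\prob}=k$, observe that $\Lambda(v)$ is constant on each window $\inserts{j}^k$, sum over windows using $\sum_j \iskewfinal{j}^k = \iskewfinal{\prob}$, substitute the definition of $\arrayskew{j}^k$, and then average over $k$ against $p_k$. The computation of $\E[\Lambda(v)]$ is identical. The only real divergence is in the low-order-term bookkeeping for the two inequalities. The paper applies $(a+b)^2 \le 2a^2+2b^2$ to get $(\Lambda^k(v))^2 \le \frac{8(\arrayskew{j}^k)^2 + 2(\iskewfinal{\prob})^2}{(\arraysize{\prob})^2}$ and then handles the constant term by the \emph{identity} $(\iskewfinal{\prob})^2 = S^0$, folding it directly into $\sum_{k\ge 0}S^k$ after noting $\arraysize{\prob}\ge\insertnumber{\prob}\beta^2/\alpha$. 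You instead expand $(\Lambda^k(v))^2$ into three pieces and estimate the $\iskewfinal{\prob}$-dependent ones via Cauchy--Schwarz ($\iskewfinal{\prob}^2\le t^k S^k$, $\sum_{\text{even }j}|\iskewfinal{j-1}^k|\le\sqrt{t^k S^k}$). Your route is valid and the constants do land under $8$ (given $\alpha\ge 100\beta$), but it is more work than necessary: the Cauchy--Schwarz bound $\iskewfinal{\prob}^2\le 2^k S^k$ is weaker than what you already know, namely that $\iskewfinal{\prob}^2$ literally equals $S^0$, which appears in the target sum $\sum_{k\ge 0}S^k$. Spotting $S^0=(\iskewfinal{\prob})^2$ as the absorbing mechanism (rather than treating $\iskewfinal{\prob}$ generically) is the one simplification you missed; everything else matches.
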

The proof of this lemma follows from straightforward calculations:
\begin{proof}

To bound $\E[\Lambda(v)]$, we first analyze $\E[\Lambda(v)]$ conditioned
on $\windowpar{\prob}=k$.  
We write $\Lambda^k(v)$ (resp.~$\varepsilon^k(v)$) for $\Lambda(v)$ (resp.~$\varepsilon(v)$) conditioned on  $\windowpar{\prob}=k$.  After this conditioning, the only remaining randomness is the uniform random choice of $v \in \inserts{\prob}$. For each window $\inserts{j}^k$, and for all $v \in \inserts{j}^k$, we have by definition that
    $\Lambda^k(v)=\iskewfinal{\prob}(v) \frac{2(1-\free{\prob})\arrayskew{j}^k-\iskewfinal{\prob}}{\arraysize{\prob}}$,
    so: 
    \[
    \sum_{v \in \inserts{j}^k} \Lambda^k(v)=\iskewfinal{j}^k\frac{2(1-\free{\prob})\arrayskew{j}^k-\iskewfinal{\prob}}{\arraysize{\prob}}.
    \]
    Therefore, for $v$ selected uniformly at random from $\inserts{\prob}$, we have
    \begin{align*}
    \E[\Lambda^k(v)] & =  \frac{1}{\insertnumber{\prob}}\sum_{j=1}^{\numwin{}^k} \iskewfinal{j}^k\frac{2(1-\free{\prob})\arrayskew{j}^k-\iskewfinal{\prob}}{\arraysize{\prob}}\\
    & = \frac{2(1-\free{\prob})}{\insertnumber{\prob}\arraysize{\prob}}\sum_{j=1}^{\numwin{}^k}\iskewfinal{j}^k\arrayskew{j}^k - \frac{(\iskewfinal{\prob})^2}{\insertnumber{\prob}\arraysize{\prob}} \tag{since  $\sum_{j=1}^{\numwin{}^k}\iskewfinal{j}^k=\iskewfinal{\prob}$} \\  
     & = \frac{2(1-\free{\prob})}{\insertnumber{\prob}\arraysize{\prob}}\sum_{\text{even }j \leq \numwin{}^k}\frac{\arraysize{\prob}}{\beta\window{\prob}^k}\iskewfinal{j-1}^k\iskewfinal{j}^{k} - \frac{(\iskewfinal{\prob})^2}{\insertnumber{\prob}\arraysize{\prob}} \tag{by definition of $\arrayskew{j}^k$} \\ 
 & = \frac{(1-\free{\prob})2^{k+1}\alpha}{\insertnumber{\prob}\arraysize{\prob} \beta} R^k - \frac{S^0}{\insertnumber{\prob}\arraysize{\prob}}.
    \tag{since $|\window{}^k|=\frac{\arraysize{\prob}}{\alpha 2^k}$ and $S^0=(\iskewfinal{\prob})^2$}\\
    \end{align*}

    Now averaging over the options for $k$, each of which occurs with probability $p_k = 2^{-(k + 1)} \cdot (1 + k/\imax)$,

    \begin{align*}
        \E[\Lambda(v)] & = \sum_{k=0}^{\imax} \frac{p_k (1-\free{\prob})2^{k+1}\alpha}{\insertnumber{\prob}\arraysize{\prob} \beta} R^k - \left(\sum_k p_k\right) \cdot \frac{S^0}{\insertnumber{\prob}\arraysize{\prob}}\\
        & =  \sum_{k=1}^{\imax} \frac{p_k (1-\free{\prob})2^{k+1}\alpha}{\insertnumber{\prob}\arraysize{\prob} \beta} R^k - \frac{S^0}{\insertnumber{\prob}\arraysize{\prob}} \tag{since $R^0=0$ and $\sum p_k = 1$}\\        
        & = \frac{\alpha(1-\free{\prob})}{\insertnumber{\prob}\arraysize{\prob}\beta} \sum_{k=1}^{\imax} \left(1+\frac{k}{\imax}\right) R^k - \frac{S^0}{\insertnumber{\prob}\arraysize{\prob}},  \\
    \end{align*}
    as claimed.

    Next, we analyze $\E[(\Lambda(v))^2]$. As above we start by analyzing the
    conditional expectation $\E[(\Lambda^k(v))^2]$.
    For each window $\inserts{j}^k$, and for each $v \in \inserts{j}^k$ we have: 
    \begin{align}
    \nonumber
    (\Lambda^k(v))^2 & = \frac{(2(1-\free{\prob})\arrayskew{j}^k-\iskewfinal{\prob})^2}{(\arraysize{\prob})^2}\\ 
    & \leq  2\frac{(2(1-\free{\prob})\arrayskew{j}^k)^2 +(\iskewfinal{\prob})^2}{(\arraysize{\prob})^2} \tag{by the inequality $(a + b)^2 \leq 2a^2 + 2b^2$} \\
    & \leq \frac{8(\arrayskew{j}^k)^2 +2(\iskewfinal{\prob})^2}{(\arraysize{\prob})^2}, 
    \label{eqn:Delta^2}
    \end{align}
    where the final step uses $F_\prob \in [0, 1]$. Therefore,
    \[
    \sum_{v \in \inserts{j}^k}(\Lambda^k(v))^2 \leq |\inserts{j}^k| \frac {8(\arrayskew{j}^k)^2+2(\iskewfinal{\prob})^2}{(\arraysize{\prob})^2}.
    \]
    It follows that, for a uniformly random $v \in \inserts{\prob}$, we have
    \begin{align*}
    \E[(\Lambda^k(v))^2]& =  \frac{1}{\insertnumber{\prob}}\sum_{j=1}^{\numwin{}^k}|\inserts{j}^k| \frac{8(\arrayskew{j}^k)^2+2(\iskewfinal{\prob})^2}{(\arraysize{\prob})^2} \\
    & =  \frac{8}{\insertnumber{\prob}(\arraysize{\prob})^2}\sum_{j=1}^{\numwin{}^k}|\inserts{j}^k| (\arrayskew{j}^k)^2 + \frac{2(\iskewfinal{\prob})^2}{\insertnumber{\prob}(\arraysize{\prob})^2}\sum_{j=1}^{\numwin{}^k}|\inserts{j}^k|\\
    & =  \frac{8}{\insertnumber{\prob}(\arraysize{\prob})^2} \sum_{\text{even }j\leq \numwin{}^k} |\inserts{j}^k|\frac{(\iskewfinal{j-1}^k)^2(\arraysize{\prob})^2}{(\window{}^k)^2 \beta^2} + \frac{2(\iskewfinal{\prob})^2}{(\arraysize{\prob})^2}
    \tag{by defn of $\arrayskew{j}^k$ and  $\sum |\inserts{j}^k| = \insertnumber{\prob}$}\\
    & \leq  \frac{8}{\insertnumber{\prob} \beta^2\window{}^k} \sum_{\text{even }j\leq \numwin{}^k} (\iskewfinal{j-1}^k)^2 + \frac{2(\iskewfinal{\prob})^2}{(\arraysize{\prob})^2} \tag{since $|\inserts{j}^k| \leq \window{}^k$} \\
    & \leq \frac{8\alpha 2^k}{\arraysize{\prob}\insertnumber{\prob}\beta^2} S^k + \frac{2(\iskewfinal{\prob})^2}{(\arraysize{\prob})^2} \tag{since $\window{}^k= \frac{\arraysize{\prob}}{\alpha 2^k}$} \\
    & \leq  \frac{8\alpha} {\arraysize{\prob}\insertnumber{\prob}\beta^2}\left(2^{k} S^k + \frac{1}{4}(\iskewfinal{\prob})^2\right). \tag{since $\alpha \ge \beta$ and thus $\arraysize{\prob}\geq \insertnumber{\prob}\alpha \geq \insertnumber{\prob} \beta^2 / \alpha$}
    \end{align*}

    Averaging over $k$ we get,
\begin{align*}
\E[(\Lambda(v))^2]& \leq  \frac{8\alpha} {\insertnumber{\prob}\arraysize{\prob}\beta^2}\left(\sum_{k=0}^{\imax}p_k2^{k}S^k
    +\sum_{k=0}^{\imax}p_k\frac{(\iskewfinal{\prob})^2}{4}\right) \\
    & \leq  \frac{8\alpha} {\insertnumber{\prob}\arraysize{\prob}\beta^2}\left(\sum_{k=1}^{\imax}S^k+\frac{S^0}{2}
    +\frac{(\iskewfinal{\prob})^2}{4}\right)\tag{since $p_k\leq 2^{-k}$, $p_0 \leq 1/2$ and $S^k \geq 0$} \\
    & \le  \frac{8\alpha}{\arraysize{\prob}\insertnumber{\prob}\beta^2}\sum_{k=0}^{\imax}S^k.\tag{since $S^0=(\iskewfinal{\prob})^2$}
\end{align*}
   
Finally, we analyze $\E[|\varepsilon(v)|]$. For each window $\inserts{j}^k$, and for each $v \in \inserts{j}^k$, we have by Claim \ref{clm:lambda} that
    
\begin{align*}
    |\varepsilon^k(v)| & \leq \frac{8|\arrayskew{j}^k|}{3\arraysize{\prob}}|\Lambda^k(v)|=\frac{8|\arrayskew{j}^k|\cdot |2(1-\free{\prob})\arrayskew{j}^k-\iskewfinal{\prob}|}{3(\arraysize{\prob})^2}\\
    &\leq 8\frac{2(\arrayskew{j}^k)^2 +|\arrayskew{j}^k||\iskewfinal{\prob}|}{3(\arraysize{\prob})^2} \tag{since $F_\prob \in [0, 1]$}\\
    &\leq 8\frac{\frac{5}{2}(\arrayskew{j}^k)^2 +\frac{1}{2}(\iskewfinal{\prob})^2}{3(\arraysize{\prob})^2} \hspace{.2 in} \tag{by the inequality $ab \leq (a^2+b^2)/2$}\\
    &\leq \frac{8(\arrayskew{j}^k)^2 +2(\iskewfinal{\prob})^2}{(\arraysize{\prob})^2},\\
\end{align*}
which is, quite fortuitously (and partly by design), the same as the upper bound on $(\Lambda^k(v))^2$ shown in (\ref{eqn:Delta^2}). Therefore, the exact same computation as for $\E[(\Lambda^k(v))^2]$ yields
the claimed bound.

\end{proof}

We now come to the critical part of the proof.
We have lower bounds on $\E[\Lambda(v)]$ in terms
of the sums $R^k$ and upper bounds
on $\E[(\Lambda(v))^2]$ and $\E[|\varepsilon(v)|]$
in terms of the sums $S^k$.  In order to complete
the proof we need to relate the quantities $R^k$
to the quantities $S^k$.  This connection
is provided by the following simple but crucial identity:

\begin{proposition}
\label{prop:tree}
For any $h < \ell$,
$$
S^h-S^{\ell} = 2\sum_{k=h+1}^{\ell} R^k.
$$
\end{proposition}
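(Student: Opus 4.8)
The plan is to prove the identity by relating the window structure at consecutive values of $k$ via the splitting observation noted just before the statement, namely $\inserts{j}^{k} = \inserts{2j-1}^{k+1} \cup \inserts{2j}^{k+1}$. It suffices, by telescoping over $k$ from $h+1$ to $\ell$, to prove the single-step identity
\[
S^{k} - S^{k+1} = 2 R^{k+1}.
\]
To see that telescoping works, note $S^h - S^\ell = \sum_{k=h+1}^{\ell}(S^{k-1} - S^{k}) = \sum_{k=h+1}^{\ell} 2 R^{k}$, which is exactly the claimed formula (after reindexing). So the entire content is in the single-step statement.

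\textbf{The single-step identity.} Fix $k$ and abbreviate $a_j = \iskewfinal{j}^{k}$ and, for the finer partition, $b_i = \iskewfinal{i}^{k+1}$. The splitting observation gives $a_j = b_{2j-1} + b_{2j}$ for every $j \le \numwin{}^{k}$ (treating any nonexistent finer window as contributing $0$). Now expand:
\[
S^{k} = \sum_{j} a_j^2 = \sum_{j} (b_{2j-1} + b_{2j})^2 = \sum_{j}\left(b_{2j-1}^2 + b_{2j}^2 + 2 b_{2j-1} b_{2j}\right).
\]
The first two terms, summed over $j$, account for $b_i^2$ over all finer windows $i$ (odd $i = 2j-1$ and even $i = 2j$), so they sum to $\sum_i b_i^2 = S^{k+1}$. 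The cross terms give $2 \sum_j b_{2j-1} b_{2j} = 2 \sum_{\text{even } i \le \numwin{}^{k+1}} b_{i-1} b_i = 2 R^{k+1}$, using the definition of $R^{k+1}$ as the sum of $\iskewfinal{i-1}^{k+1}\iskewfinal{i}^{k+1}$ over even $i$. Rearranging yields $S^{k} = S^{k+1} + 2 R^{k+1}$, as desired.

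\textbf{Bookkeeping and the main obstacle.} The one subtlety worth being careful about is the boundary case $j = \numwin{}^{k}$, where the finer window $\inserts{2\numwin{}^{k}}^{k+1}$ may be empty or may not exist at all, and the coarser window $\inserts{\numwin{}^k}^k$ may itself be a proper subset of $\inserts{2\numwin{}^{k}-1}^{k+1}$'s ``would-be'' span. I would handle this uniformly by adopting the convention that $\iskewfinal{}$ of an empty or nonexistent window is $0$, so that the relation $a_j = b_{2j-1} + b_{2j}$ holds verbatim for all $j$, and so that padding the finer partition with trailing empty windows changes neither $S^{k+1}$ nor $R^{k+1}$. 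With this convention the expansion above is a literal algebraic identity and there is no real obstacle; the ``hard part'' is simply making sure the index ranges for odd versus even $i$ line up so that $\sum_j(b_{2j-1}^2 + b_{2j}^2)$ is exactly $S^{k+1}$ and the cross terms are exactly the even-indexed pairs defining $R^{k+1}$. I do not expect this to require more than the one displayed computation above.
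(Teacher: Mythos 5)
Your proof is correct and takes essentially the same approach as the paper: expand $(\iskewfinal{j}^k)^2 = (\iskewfinal{2j-1}^{k+1} + \iskewfinal{2j}^{k+1})^2$, sum over $j$ to obtain the single-step identity $S^k - S^{k+1} = 2R^{k+1}$, and telescope. Your explicit treatment of the boundary window (via the convention that empty/nonexistent windows contribute $0$) is a mild elaboration of a convention the paper also adopts implicitly.
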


\begin{proof}
First we compute $S^{k}-S^{k+1}$. 
For each rebuild window $\inserts{j}^k = \inserts{2j-1}^{k+1} \cup \inserts{2j}^{k+1}$, since $\iskewfinal{j}^k = \iskewfinal{2j-1}^{k+1} + \iskewfinal{2j}^{k+1}$, we have
\[
(\iskewfinal{j}^k)^2=(\iskewfinal{2j-1}^{k+1})^2+(\iskewfinal{2j}^{k+1})^2+2\iskewfinal{2j-1}^{k+1}\iskewfinal{2j}^{k+1}.
\]
Summing both sides over $j$
yields $S^k= S^{k+1} + 2R^{k+1}$, so $S^{k}-S^{k+1}=2R^{k+1}$.  The desired equality follows by summing this equality for $k$ from $h$ to $\ell-1$.
\end{proof}

We note that this lemma is the reason that, in the specification of the algorithm, $\arrayskew{j}^k$ is defined to be 0 for odd
$j$. Had we applied the definition for even $j$ also to odd $j$, then in the lower bound of $\E[\Delta]$, instead of
$R^k = \sum_{\text{even } j} \iskewfinal{j-1}^{k}\iskewfinal{j}^{k}$ we would have to use ${R^k} = \sum_j \iskewfinal{j-1}^{k}\iskewfinal{j}^{k}$, where the sum is over all
$j$, not just even $j$. This change to the definition of $R^k$ would, in turn, cause the telescoping argument in  Proposition~\ref{prop:tree} to fail, and we would no longer be able
to relate the bound on $\E[\Lambda(v)]$ to the bounds on
$\E[(\Lambda(v))^2]$ and $\E[|\varepsilon(v)|]$.

We now manipulate the bound on $\E[\Delta(v)]$ to finish the proof of the lemma.
A key step is given by:
\begin{equation}
\label{eqn:key}
\sum_{k=1}^{\imax} kR^k = \sum_{k=1}^{\imax}\sum_{h=k}^{\imax}R^h = \sum_{k=0}^{\imax-1}\frac{S^k-S^{\imax}}{2} = \frac{1}{2}\sum_{k=0}^{\imax} S^k - \frac{\imax}{2} S^{\imax},
\end{equation}
where the second equality uses Proposition~\ref{prop:tree}.  As we will see below, this identity is the reason why $p_k$ was defined to be $2^{-(k + 1)} \cdot (1+k/\imax)$ rather than, say,
$2^{-k}$.  

We now lower bound $\E[\Lambda(v)]$ by
\begin{align}
    \E[\Lambda(v)] &= \frac{\alpha(1-\free{\prob})}{\insertnumber{\prob}\arraysize{\prob}\beta} \sum_{k=1}^{\imax} \left(1+\frac{k}{\imax}\right) R^k - \frac{S^0}
    {\insertnumber{\prob}\arraysize{\prob}} \tag{Prop.~\ref{prop:S^h}} \\
    & =\frac{\alpha(1-\free{\prob})}{\insertnumber{\prob}\arraysize{\prob}\beta} \left(\sum_{k=1}^{\imax}R^k +\frac{1}{\imax}\sum_{k=1}^{\imax}k R^k\right) - \frac{S^0}{\insertnumber{\prob}\arraysize{\prob}} \nonumber \\
    & =\frac{\alpha(1-\free{\prob})}{\insertnumber{\prob}\arraysize{\prob}\beta} \left(\frac{S^0-S^{\imax}}{2} +\frac{1}{2\imax}\sum_{k=0}^{\imax} S^k -  \frac{S^{\imax}}{2}\right) - \frac{S^0}{\insertnumber{\prob}\arraysize{\prob}}\tag{Prop.~\ref{prop:tree} and (\ref{eqn:key})} \\
    & \geq\frac{\alpha}{2\insertnumber{\prob}\arraysize{\prob}\beta} \left(\frac{S^0}{2} +\frac{1}{2\imax}\sum_{k=0}^{\imax} S^k\right) -  \frac{\alpha}{\insertnumber{\prob}\arraysize{\prob}\beta}S^{\imax} - 
    \frac{S^0}{\insertnumber{\prob}\arraysize{\prob}} \tag{$\free{\prob}\in [0,1/2]$, $S^k\geq 0$} \\
    & \geq \frac{\alpha}{2\insertnumber{\prob}\arraysize{\prob}\beta} \left(\frac{S^0}{2} +\frac{1}{2\imax}\sum_{k=0}^{\imax} S^k\right) -  \frac{\alpha}{\insertnumber{\prob}\arraysize{\prob}\beta}S^{\imax} - \frac{\alpha S^0}{8\beta \insertnumber{\prob}\arraysize{\prob}}\tag{since $\alpha\geq 8\beta$} \\
    & \geq \frac{\alpha}{2\insertnumber{\prob}\arraysize{\prob}\beta} \left(\frac{1}{2\imax}\sum_{k=0}^{\imax} S^k\right) -  \frac{\alpha}{\insertnumber{\prob}\arraysize{\prob}\beta}S^{\imax}  \tag{since $S^0 \ge 0$}\\
    & \geq \frac{\alpha}{2\insertnumber{\prob}\arraysize{\prob}\beta} \left(\frac{1}{2\imax}\sum_{k=0}^{\imax} S^k\right) -  \frac{1}{\beta}2^{-\imax}\tag{Prop.~\ref{prop:S^h}} \\
    & \geq \frac{\alpha}{4\insertnumber{\prob}\arraysize{\prob}\beta\imax} \sum_{k=0}^{\imax} S^k -  2^{-\imax}.  \label{eqn:Delta final}
 \end{align}

         We are now ready to complete the proof of Lemma \ref{lemma:main}. We have
\begin{align*}
    \E[\Delta(v)] & \geq  \E[\Lambda(v)]-\E[|\varepsilon(v)|] \tag{Prop.~\ref{prop:conversion}}\\
    & \geq  \frac{\alpha}{4\insertnumber{\prob}\arraysize{\prob}\beta\imax} \sum_{k=0}^{\imax} S^k -  2^{-\imax} - \frac{8\alpha } 
    {\arraysize{\prob}\insertnumber{\prob}\beta^2} \sum_{k=0}^{\imax}  S^k
    \hspace{.2 in} \tag{by \eqref{eqn:Delta final} and Lemma~\ref{lemma:expectation}}\\
    & = \left(\frac{\beta}{32\imax}- 1\right) \frac{8\alpha }{\arraysize{\prob}\insertnumber{\prob}\beta^2} \sum_{k=0}^{\imax}  S^k   -  2^{-\imax}  \\
    & \geq  \left(\frac{\beta}{32\imax}- 1\right)  \E[\Lambda(v)^2]   -  2^{-\imax}
    \tag{Lemma~\ref{lemma:expectation}}\\
    & \geq  \left(\frac{\beta}{32\imax}- 1\right)  \E[\Delta(v)^2]/2   -  2^{-\imax}
    \tag{Prop.~\ref{prop:conversion}}\\
    & \geq  \left(\frac{\beta}{100\imax}\right)  \E[\Delta(v)^2]   -  2^{-\imax},
    \tag{since $\beta \geq 1000 \imax$}
\end{align*}
which completes the proof of Lemma~\ref{lemma:main}.

\section{Proof of Claim \ref{claim:conc}}
\label{sec:claim}

In this section, we prove Claim \ref{claim:conc}, restated below:

\concclaim*

As notation, for a sequence $Z=Z_1,\ldots,Z_s$ of random variables, define
$\mu_1(Z),\dots,\mu_s(Z)$ and $V_1(Z),\ldots,V_s(Z)$ by:
\begin{eqnarray*}
\mu_j(Z) &=&\E[Z_j \mid Z_1,\ldots,Z_{j-1}]\\
V_j(Z)&=&\var[Z_j \mid Z_1,\ldots,Z_{j-1}].
\end{eqnarray*}
Also define $\mu(Z)=\sum \mu_j(Z)$, $V(Z)=\sum_j V_j(Z)$ and $\Sigma(Z)=\sum_j Z_j$. 
Note that $\mu_j(Z)$ and $V_j(Z)$ are random variables that are determined by the values of $Z_1,\ldots,Z_{j-1}$.  We will prove:
\begin{lemma} 
\label{lemma:conc}
Let $Z = Z_1, Z_2, \ldots, Z_{s}$ be random variables and suppose that $A$, $B$, and $C$ are positive real numbers
with $4A \leq B \leq 1$ such that,
with probability 1, we have for all $j  \in [1,s]$ that:
\begin{enumerate}
    \item $|Z_j| \le A$;
    \item $V_j(Z) \le  B \mu_j(Z) + C.$
\end{enumerate}
Then for $q \geq \max(2sC/B,16B)$, $\Pr[\Sigma(Z) < -q] \le 3e^{-q/16B} $.
\end{lemma}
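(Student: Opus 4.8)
The plan is to run a standard exponential-moment (Bernstein-type) argument on the supermartingale obtained by exponentiating $\Sigma(Z)$, but with the twist that the drift and variance are controlled by hypothesis~2 rather than assumed outright. Fix a parameter $\lambda > 0$ to be chosen later (of order $1/B$), and consider the process $M_j = \exp\!\bigl(-\lambda \sum_{i \le j} Z_i - \lambda \sum_{i \le j} g(\mu_i(Z))\bigr)$ for a suitable correction term $g$. The key one-step estimate is that, since $|Z_j| \le A \le B/4 \le 1/4$ and $\lambda A$ is bounded, we have $\E[e^{-\lambda Z_j} \mid Z_1,\ldots,Z_{j-1}] \le \exp\bigl(-\lambda \mu_j(Z) + \lambda^2 (V_j(Z) + \mu_j(Z)^2)\bigr)$, using the inequality $e^{-x} \le 1 - x + x^2$ valid for $x \ge -1/2$ and then $1+y \le e^y$. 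Now I would invoke hypothesis~2 to replace $V_j(Z)$ by $B\mu_j(Z) + C$, and bound $\mu_j(Z)^2 \le A\,|\mu_j(Z)| \le \tfrac14|\mu_j(Z)|$ (again using $|Z_j|\le A$). Splitting into the cases $\mu_j(Z)\ge 0$ and $\mu_j(Z) < 0$, one sees that the coefficient of $\mu_j(Z)$ in the exponent, namely $-\lambda + \lambda^2(B + \tfrac14) \pm (\text{small})$, can be made $\le 0$ by choosing $\lambda$ small enough (say $\lambda = 1/(4B)$, using $B\ge 4A$ and $B \le 1$), while the residual from the $C$ term contributes $\lambda^2 C$ per step.

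Carrying this through, the product $\prod_{j=1}^{s}\E[e^{-\lambda Z_j}\mid \cdots]$ telescopes (via the tower property) to give $\E[e^{-\lambda \Sigma(Z)}] \le \exp(\lambda^2 s C + \lambda \cdot O(A\, s)\cdot 0)$—more precisely, $\E[e^{-\lambda \Sigma(Z)}] \le e^{s\lambda^2 C}$ plus a lower-order correction absorbed by the constant~$3$. Then Markov's inequality gives, for any $q>0$,
\[
\Pr[\Sigma(Z) < -q] \;=\; \Pr[e^{-\lambda \Sigma(Z)} > e^{\lambda q}] \;\le\; e^{-\lambda q}\, \E[e^{-\lambda \Sigma(Z)}] \;\le\; e^{-\lambda q + s\lambda^2 C}.
\]
With $\lambda = 1/(4B)$ this is $e^{-q/(4B) + sC/(16B^2)}$; the hypothesis $q \ge 2sC/B$ forces $sC/(16B^2) \le q/(32B)$, so the exponent is at most $-q/(4B) + q/(32B) \le -q/(16B)$, yielding the bound $e^{-q/16B}$. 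The extra factors of $e^{-x}\le 1-x+x^2$ validity (needing $\lambda(Z_j + g(\mu_j)) \ge -1/2$) and the crude rounding of constants is where the slack constant $3$ (rather than $1$) and the condition $q \ge 16B$ come in: the latter guarantees $q/(16B)\ge 1$ so that various $O(1)$ corrections to the exponent can be charged against a $3e^{-q/16B}$ bound.

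The main obstacle I anticipate is handling the sign of $\mu_j(Z)$ cleanly. Unlike a classical martingale difference sequence where $\mu_j = 0$, here $\mu_j(Z)$ can be either positive or negative, and the correction $\mu_j(Z)^2$ in the second-order term is always nonnegative, so a naive bound would leave an uncontrolled $\lambda^2 \sum \mu_j(Z)^2$ floating around. The resolution—and the reason hypothesis~1 is stated as $|Z_j|\le A$ with $4A\le B$ rather than just a variance bound—is that $|\mu_j(Z)| \le A$, so $\mu_j(Z)^2 \le A|\mu_j(Z)|$ can be folded into the linear-in-$\mu_j$ term, whose total coefficient we have arranged to be nonpositive. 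Getting the constants to line up (so that $-\lambda + \lambda^2(B + \tfrac14 + \tfrac{B}{4}) \le 0$ and simultaneously $\lambda q$ dominates $s\lambda^2 C$) is the delicate bookkeeping, but it is purely a matter of the choices $B \le 1$, $4A \le B$, $q \ge 2sC/B$, and $q \ge 16B$, each of which is supplied by the statement.
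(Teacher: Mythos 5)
Your approach is genuinely different from the paper's. The paper proves this lemma by applying Freedman's martingale inequality to a \emph{truncated} process: it defines a stopping condition that freezes the process once the accumulated conditional variance exceeds a threshold $B\rho_2 + sC$, applies Freedman to the truncated process, and then union-bounds over dyadic intervals $[\rho_1,\rho_2]$ covering the (random) range of $\mu(Z)$. Your proposal instead runs a direct exponential-moment (Bernstein-type) argument: bound $\E[e^{-\lambda Z_j}\mid Z_1,\ldots,Z_{j-1}]$ step by step via $e^{-x}\le 1-x+x^2$, fold the one-step variance and squared-mean into the linear term, telescope, and apply Markov. Both are standard templates; yours avoids any stopping/truncation bookkeeping and any union bound over the range of $\mu(Z)$, which is a nice simplification.

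However there is a real gap in the one-step bound. You argue that after writing the exponent as $-\lambda\mu_j + \lambda^2(V_j(Z) + \mu_j^2)$, substituting $V_j(Z)\le B\mu_j + C$ and $\mu_j^2\le A|\mu_j|$, the coefficient of $\mu_j$ can be ``made $\le 0$'' so only the $\lambda^2 C$ residual survives. This works when $\mu_j\ge 0$. But when $\mu_j < 0$, a nonpositive coefficient multiplied by the negative $\mu_j$ is a \emph{positive} contribution to the exponent---in fact the exponent is then $\ge -\lambda\mu_j = \lambda|\mu_j| > 0$, not at most $\lambda^2 C$, so the claimed per-step bound $\E[e^{-\lambda Z_j}\mid\cdots]\le e^{\lambda^2 C}$ is simply false in this case, and the claim $\E[e^{-\lambda\Sigma(Z)}]\le e^{s\lambda^2 C}$ does not follow. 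What rescues the argument (and what your proposal never invokes) is the observation that hypothesis~2 together with nonnegativity of variance forces $\mu_j(Z)\ge -C/B$ pointwise, so the stray positive term $|\mu_j|\cdot\lambda(1-\lambda(B-A))$ is at most $\lambda C/B$. Feeding this through, the per-step exponent is bounded by a constant multiple of $\lambda C/B + \lambda^2 C$, and with $\lambda = 1/(4B)$ and $q\ge 2sC/B$ one still gets an exponent $\le -q/(16B)$, in fact with a constant better than $3$. (Note that using $|\mu_j|\le A$ here instead of $|\mu_j|\le C/B$ would \emph{not} suffice: the resulting error term $s\lambda A$ is of order $s/16$, which is not controlled by $q$.) So the route you chose is viable and arguably cleaner than the paper's, but as written the argument breaks at exactly the point you flagged as delicate---the sign of $\mu_j$---and the needed fix is the pointwise lower bound $\mu_j\ge -C/B$, which the paper uses in the aggregated form $\mu(Z)\ge -sC$ to anchor its interval decomposition.
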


Supposing $n$ is sufficiently large, Claim~\ref{claim:conc} follows from the lemma
with $q=0.2$, $s=r \le 2 \log n$, $A=\frac{3}{\beta}$, $B=100\imax/\beta$ and $C=2^{-\imax}=\frac{1}{\log^2n}$.  The hypothesis $4A \le B \le 1$ is satisfied since $\beta=\Omega((\log\log n)^2)$ and $\imax=2\log\log n$, and the hypothesis $q \geq \max(2sC/B,16B)$ is satisfied since $2sC/B=O((\log\log n) /\log n)$ and $16B=O(1/\log\log n)$
so both are smaller than $q=0.2$. The resulting
probability upper bound is $3e^{-.2\beta/1600\imax}=3e^{-\cbeta\log\log n /16000} \leq 3(\log n)^{-\cbeta/16000}$ and taking $\cbeta$ large enough this is less than $1/\log^3 n$, as required.

Recall that a martingale
is a sequence $X = X_1,X_2,\ldots,X_s$ of random variables such that, for any outcomes of $X_1, \ldots, X_{j - 1}$, we have $\mu_j(X) = \E[X_j \mid X_1, X_2, \ldots, X_{j - 1}] = 0$.
Lemma~\ref{lemma:conc} will be deduced from the following theorem of Freedman:

\begin{theorem}[Proposition 2.1 of \cite{Freed75}]
\label{thm:freedman}
Let $s\in \mathbb{N}$ and let $Y= Y_1,\ldots,Y_s$ be a martingale. Suppose  $D$ and
$v$ are positive real numbers such that, with probability 1:
\begin{enumerate}
    \item $|Y_j|\leq D$ for all $j$;
    \item $\sum_j V_j(Y)\leq v$.
\end{enumerate}
Then, for $\ell >0$,  $\Pr[\Sigma(Y) > \ell]\leq  e^{-\ell^2/2(v+D\ell)}$. 
\end{theorem}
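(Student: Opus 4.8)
Here I take the ``final statement'' to be Freedman's inequality (Theorem~\ref{thm:freedman}), and I will describe the standard exponential-supermartingale proof.

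The plan is to run the classical exponential-supermartingale (Chernoff-type) argument and then soften the resulting Bennett-type bound to the stated form. Write $\mathcal F_j := \sigma(Y_1,\dots,Y_j)$ (with $\mathcal F_0$ trivial), let $\Sigma_j := \sum_{i\le j} Y_i$ so that $\Sigma(Y)=\Sigma_s$, and let $W_j := \sum_{i\le j} V_i(Y)$. Since the martingale hypothesis $\mu_j(Y)=0$ gives $V_i(Y)=\E[Y_i^2\mid\mathcal F_{i-1}]$, each $V_i(Y)$ is $\mathcal F_{i-1}$-measurable, hence $W_j$ is $\mathcal F_{j-1}$-measurable, and hypothesis~2 says $W_s\le v$ with probability~$1$. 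The core estimate is a bound on the conditional moment generating function of a bounded mean-zero increment: for $\lambda>0$, setting $g(\lambda):=(e^{\lambda D}-1-\lambda D)/D^2\ge 0$, one has $\E[e^{\lambda Y_j}\mid\mathcal F_{j-1}]\le e^{g(\lambda)V_j(Y)}$. This follows from the pointwise inequality $e^{\lambda x}\le 1+\lambda x+g(\lambda)x^2$ valid for all $x\in[-D,D]$ — itself a consequence of the fact that $u\mapsto(e^u-1-u)/u^2$ (extended by $1/2$ at $0$) is nondecreasing, applied in the direction $\lambda x\le\lambda D$ — followed by taking conditional expectations and using $1+t\le e^t$.

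With the MGF bound in hand, define $Z_j:=\exp(\lambda\Sigma_j-g(\lambda)W_j)$ and $Z_0:=1$. Because $\Sigma_{j-1}$, $W_{j-1}$, and $V_j(Y)$ are all $\mathcal F_{j-1}$-measurable, $\E[Z_j\mid\mathcal F_{j-1}]=Z_{j-1}\,e^{-g(\lambda)V_j(Y)}\,\E[e^{\lambda Y_j}\mid\mathcal F_{j-1}]\le Z_{j-1}$, so $(Z_j)_{j=0}^s$ is a nonnegative supermartingale and $\E[Z_s]\le 1$. On the event $\{\Sigma(Y)>\ell\}$ we have (using $W_s\le v$ and $g(\lambda)\ge 0$) that $Z_s\ge e^{\lambda\ell-g(\lambda)v}$, so Markov's inequality yields, for every $\lambda>0$,
\[
\Pr[\Sigma(Y)>\ell]\;\le\;\frac{\E[Z_s]}{e^{\lambda\ell-g(\lambda)v}}\;\le\;e^{\,g(\lambda)v-\lambda\ell}.
\]

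It remains to choose $\lambda$ and simplify. Taking $\lambda=\tfrac1D\ln\!\bigl(1+\tfrac{D\ell}{v}\bigr)$, which minimizes $g(\lambda)v-\lambda\ell$, a short computation turns the exponent into $-\tfrac{v}{D^2}\,h\!\bigl(\tfrac{D\ell}{v}\bigr)$ where $h(u):=(1+u)\ln(1+u)-u$, i.e.\ Bennett's inequality $\Pr[\Sigma(Y)>\ell]\le\exp\!\bigl(-\tfrac{v}{D^2}h(D\ell/v)\bigr)$. Finally, the elementary inequality $h(u)\ge\tfrac{u^2}{2(1+u)}$ for all $u\ge 0$ — e.g.\ since $h$ and $\tfrac{u^2}{2(1+u)}$ agree together with their first derivatives at $u=0$ and $h''(u)=\tfrac1{1+u}\ge\tfrac1{(1+u)^3}$, the latter being the second derivative of $\tfrac{u^2}{2(1+u)}$ — together with $v+D\ell=v(1+D\ell/v)$ gives $\tfrac{v}{D^2}h(D\ell/v)\ge\tfrac{\ell^2}{2(v+D\ell)}$, which is precisely the claimed exponent, completing the proof.

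This is a standard argument, so I do not expect a genuine obstacle; the two steps that need a little care are (i) the pointwise MGF inequality $e^{\lambda x}\le 1+\lambda x+g(\lambda)x^2$ on $[-D,D]$ — the monotonicity of $(e^u-1-u)/u^2$ must be invoked in the correct direction, and this is the only place where boundedness of the increments enters — and (ii) the purely analytic reduction from Bennett's bound to Freedman's stated form via $h(u)\ge u^2/(2(1+u))$. A slightly shorter route that replaces (i)–(ii) is to bound $g(\lambda)\le\tfrac{\lambda^2/2}{1-\lambda D/3}$ for $\lambda D<3$ (from $e^t-1-t=\sum_{k\ge2}t^k/k!\le\tfrac{t^2}{2}\sum_{j\ge0}(t/3)^j$) and then take $\lambda=\ell/(v+D\ell/3)$, which directly yields the stronger Bernstein exponent $\ell^2/\bigl(2(v+D\ell/3)\bigr)$ and hence the stated bound a fortiori.
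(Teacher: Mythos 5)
The paper does not prove Theorem~\ref{thm:freedman}; it cites it directly as Proposition~2.1 of Freedman~\cite{Freed75}, so there is no in-paper proof against which to compare. Your argument is a correct rendition of the standard exponential-supermartingale proof, which is also essentially Freedman's own: the pointwise bound $e^{\lambda x}\le 1+\lambda x+g(\lambda)x^2$ on $[-D,D]$ via monotonicity of $(e^u-1-u)/u^2$, the supermartingale $Z_j=\exp(\lambda\Sigma_j-g(\lambda)W_j)$, Markov at the optimal $\lambda=\tfrac1D\ln(1+D\ell/v)$, and the softening $h(u)\ge u^2/(2(1+u))$ from Bennett to the stated Bernstein/Freedman form. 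I verified the details: the monotonicity is applied in the right direction, the measurability of $W_j$ with respect to $\mathcal F_{j-1}$ is correctly used, the optimization and the comparison of second derivatives in the $h$-inequality all check out, and the alternative Bernstein-style route at the end gives the slightly stronger constant $D\ell/3$, which indeed implies the stated bound a fortiori. No gaps.
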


A natural approach to proving Lemma~\ref{lemma:conc} is to define the martingale $X$ by $X_j=\mu_j(Z)-Z_j$. 
This would imply $V_j(X)=V_j(Z)$ for all $j$, and it would then suffice to upper bound
$\Pr[\Sigma(Z)<-q]=\Pr[\Sigma(X)  > \mu(Z)+q]$, which we might hope to do with Theorem~\ref{thm:freedman}.
However, the theorem cannot be applied directly because
the upper bound on $V_j(X)$ of $B\mu_j(Z)+sC$ implied by Hypothesis (2) of Lemma \ref{lemma:conc} is itself a random variable,
and the quantity
 $\mu(Z)+ q $ to which $\Sigma(X)$ is compared in the conclusion is also a random variable, while Theorem~\ref{thm:freedman} requires both to be fixed quantities.  To get around this we first prove:

\begin{proposition}
\label{prop:nice trick}
With $Z$  satisfying the hypotheses of Lemma~\ref{lemma:conc}, let $\rho_2>\rho_1$ be fixed real numbers.
For $q \in (-\rho_1,2\rho_2-\rho_1-\frac{2sC}{B}]$,
$$\Pr[\Sigma(Z)<-q \text{ and } \mu(Z) \in [\rho_1,\rho_2]] \leq e^{-(\rho_1+q)^2/(4B\rho_2)}.$$
\end{proposition}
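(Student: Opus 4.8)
The plan is to reduce to Freedman's inequality (Theorem~\ref{thm:freedman}) by working on the event where $\mu(Z)$ is pinned to the fixed interval $[\rho_1,\rho_2]$, which is exactly the device that converts the random bounds into fixed ones. First I would introduce the martingale $X = X_1,\dots,X_s$ defined by $X_j = \mu_j(Z) - Z_j$; this is a martingale with $\mu_j(X) = 0$, and $V_j(X) = V_j(Z)$, and $|X_j| \le |Z_j| + |\mu_j(Z)| \le 2A$ with probability $1$ (using Hypothesis (1) of Lemma~\ref{lemma:conc} twice). Note $\Sigma(X) = \mu(Z) - \Sigma(Z)$, so the event $\{\Sigma(Z) < -q\}$ is the event $\{\Sigma(X) > \mu(Z) + q\}$.

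The key trick is to stop the process. Define a stopping time $\sigma$ to be the first index $j$ such that $\sum_{i=1}^{j} \mu_i(Z) > \rho_2$ (and $\sigma = s$ if this never happens), and consider the stopped martingale $\widetilde X_j = X_{\min(j,\sigma)}$, equivalently $\widetilde X_j = X_j \cdot \mathbf{1}[\sigma \ge j]$ in increments. Since $\sigma$ is a stopping time with respect to the natural filtration, $\widetilde X$ is still a martingale with $|\widetilde X_j| \le 2A \le B/2 \le 1/2$. On the event $E := \{\mu(Z) \in [\rho_1,\rho_2]\}$ we have $\sum_{i=1}^{j}\mu_i(Z) \le \mu(Z) \le \rho_2$ for all $j$ (here I use that $\mu_i(Z) \ge 0$, which follows from Hypothesis (2): $V_i(Z) \ge 0$ forces $\mu_i(Z) \ge -C/B$, but more carefully one gets nonnegativity from $|Z_i|\le A \le B/4$ together with $V_i(Z)=\E[Z_i^2\mid\cdot]-\mu_i(Z)^2 \le B\mu_i(Z)+C$; I would check that $\mu_i(Z)$ cannot be appreciably negative, or absorb a tiny slack into $C$). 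Hence on $E$ the stopping never triggers before the end, $\widetilde X = X$ and $\Sigma(\widetilde X) = \Sigma(X)$. Moreover, \emph{deterministically} (not just on $E$), the stopped variance satisfies
\[
\sum_j V_j(\widetilde X) \;\le\; \sum_{i \le \sigma} V_i(Z) \;\le\; B \sum_{i\le\sigma}\mu_i(Z) + sC \;\le\; B\rho_2 + B\rho_2 \;=\; 2B\rho_2,
\]
where I bounded $\sum_{i\le\sigma}\mu_i(Z) \le \rho_2 + \max_i \mu_i(Z) \le \rho_2 + 2A \le 2\rho_2$ (the overshoot past $\rho_2$ at the stopping step is at most one increment, bounded by $2A \le B/2 \le \rho_2$ after noting $\rho_2 > \rho_1 > -q \ge$ something positive-ish — I'd need to be a little careful here, possibly using the hypothesis $q \le 2\rho_2 - \rho_1 - 2sC/B$ to guarantee $\rho_2$ is large enough; alternatively just carry the $+2A$ term), and $sC \le B\rho_2/2$-ish using $q \ge 2sC/B$ and the range constraint on $q$, OR simply keep $sC$ and note $sC \le B(\rho_2 - \rho_1 + \dots)$. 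The point is the variance bound becomes a \emph{fixed} number $v := C' B \rho_2$ for an absolute constant, and one arranges constants so $v \le 2B\rho_2$.

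Now apply Theorem~\ref{thm:freedman} to $\widetilde X$ with $D = 2A \le B/2$ and $v = 2B\rho_2$ and $\ell = \rho_1 + q > 0$ (positive by the hypothesis $q > -\rho_1$):
\[
\Pr[\Sigma(\widetilde X) > \rho_1 + q] \;\le\; \exp\!\left(-\frac{(\rho_1+q)^2}{2(v + D(\rho_1+q))}\right).
\]
On the event $E$ we have $\Sigma(Z) < -q \iff \Sigma(X) > \mu(Z)+q \ge \rho_1 + q$, and $\Sigma(X)=\Sigma(\widetilde X)$ on $E$, so
\[
\Pr[\Sigma(Z) < -q \text{ and } \mu(Z)\in[\rho_1,\rho_2]] \;\le\; \Pr[\Sigma(\widetilde X) > \rho_1+q].
\]
It remains to simplify the exponent to the claimed $(\rho_1+q)^2/(4B\rho_2)$. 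Here I would use $D(\rho_1+q) \le \tfrac{B}{2}(\rho_1+q)$ and the hypothesis $q \le 2\rho_2 - \rho_1 - 2sC/B \le 2\rho_2 - \rho_1$, which gives $\rho_1 + q \le 2\rho_2$, hence $D(\rho_1+q) \le B\rho_2$, so $2(v + D(\rho_1+q)) \le 2(2B\rho_2 + B\rho_2) = 6B\rho_2$ — I'd need to tune the earlier variance constant down (e.g. by being sharper about the $sC$ term using $q\ge 2sC/B$, which lets me replace $sC$ by $Bq/2 \le B\rho_2$ — no, that goes the wrong way; rather use $2sC/B \le \rho_2 - \rho_1$ from the $q$-range when $q \ge -\rho_1$, so $sC \le B(\rho_2-\rho_1)/2 \le B\rho_2/2$) to land exactly on denominator $4B\rho_2$. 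The main obstacle, and the part requiring genuine care rather than bookkeeping, is precisely this juggling of the stopping-time overshoot and the $sC$ term against the constraint $q \le 2\rho_2 - \rho_1 - 2sC/B$ so that everything collapses cleanly into the single clean constant $4$ in the exponent; the conceptual content (martingale $+$ stopping time to fix the random variance bound $+$ Freedman) is the easy part.
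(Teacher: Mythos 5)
Your proposal uses the same top-level strategy as the paper: truncate/stop the process so that the (a priori random) conditional-variance budget becomes a \emph{fixed} number, observe that on the event $\{\mu(Z)\in[\rho_1,\rho_2]\}$ the stopped process equals the original, and then invoke Freedman. The difference is \emph{what quantity you monitor} to decide when to stop. You stop when the running conditional mean $\sum_{i\le j}\mu_i(Z)$ first exceeds $\rho_2$; the paper instead monitors the running conditional variance $\sum_{i\le j}V_i(Z)$ and truncates $Z_j$ to $0$ on the (predictable) event that this sum exceeds $B\rho_2+sC$. The paper's choice is strictly nicer for exactly the two reasons you flag but do not resolve. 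First, $\sum_{i\le j}V_i(Z)$ is monotone nondecreasing (each $V_i\ge 0$), so on $\{\mu(Z)\le\rho_2\}$ the \emph{terminal} bound $\sum_i V_i(Z)\le B\mu(Z)+sC\le B\rho_2+sC$ automatically bounds every prefix and the truncation never fires; whereas $\sum_{i\le j}\mu_i(Z)$ is not monotone (you correctly note $\mu_i$ can be as negative as $-C/B$), so even when $\mu(Z)\le\rho_2$ a prefix can overshoot $\rho_2$, meaning your stopping can trigger on the very event you care about. Patching this by raising the threshold to $\rho_2+sC/B$ works, but adds an extra $sC$ to the variance bound. Second, because the paper's condition is checked \emph{before} including the $j$-th increment (it is $\mathcal{F}_{j-1}$-measurable), there is no overshoot; your stopping rule includes the increment at $\sigma$, adding a further $\approx BA$ to the variance.

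These are not conceptual errors, but they do prevent you from recovering the stated bound: after the two patches your variance budget is roughly $B\rho_2 + 2sC + BA$ instead of $B\rho_2 + sC$, and the final algebra (which in the paper uses the constraint $q\le 2\rho_2-\rho_1-\tfrac{2sC}{B}$ so tightly that the denominator collapses to \emph{exactly} $4B\rho_2$) no longer closes without loosening the constant. Concretely, you end up with a denominator on the order of $6B\rho_2$ rather than $4B\rho_2$. That is enough for the downstream uses (Lemma~\ref{lemma:conc} and Claim~\ref{claim:conc} only need such a bound up to constants in the exponent, and $C_\beta$ can be enlarged), but it does not prove the proposition as stated. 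If you switch the stopping/truncation variable from the running mean to the running variance with the threshold $B\rho_2+sC$, both problems vanish simultaneously and the constant $4$ comes out cleanly; this is precisely what the paper does.
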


\begin{proof}

Assume that $Z$ satisfies the hypotheses of Lemma~\ref{lemma:conc} and $q \in (-\rho_1,2\rho_2-\rho_1-\frac{2sC}{B}]$. We define sequences  $Z'$  and $X'$ which are modified versions of $Z$ and $X$. We'll be able to apply Theorem~\ref{thm:freedman} to $X'$.
Let $C_j(Z)$ denote the event:

$$\sum_{i = 1}^j V_i(Z) \le B\rho_2+sC,$$ 
and define
$Z'_1,\ldots,Z'_s$ to be given by:
$$Z'_j = \begin{cases}
    Z_j & \text{ if  $C_j(Z)$ holds,}\\
    0 & \text{ otherwise}.
\end{cases}$$

\begin{claim}
\label{claim:Z'}
$Z'$ has the following properties:
\begin{enumerate}
\item If $C_j(Z)$ holds
then $Z'_1,\ldots, Z'_j=Z_1,\ldots, Z_j$ and  $V_1(Z'),\ldots,V_j(Z')=V_1(Z),\ldots,V_j(Z)$.
\item $V(Z') \leq \rho_2B+sC$.
\item If $\mu(Z)\leq \rho_2$ then $C_j(Z)$ holds for all $j \in [1,s]$ and $Z'=Z$.
\end{enumerate}
\end{claim}
\begin{proof}
For the first claim, note that if $C_j(Z)$ holds, then 
conditioned on $Z_1,\ldots,Z_{j-1}$,
$Z'_j$ and $Z_j$ are equal as random variables and consequently
$V_j(Z')=V_j(Z)$.  Also, if $C_j(Z)$ holds then $C_i(Z)$ holds for all $i \leq j$,
since $V_i(Z) \geq 0$ for all $i$.   Therefore if $C_j(Z)$ holds $Z'_1,\ldots,Z'_j=Z_1,\ldots,Z_j$
and $V_1(Z'),\ldots,V_j(Z')=V_1(Z),\ldots,V_j(Z)$.

For the second claim, let $h$ be the least index for which $C_h(Z)$ fails, setting
$h=s+1$ if $C_j(Z)$ holds for all $j \leq s$.  Then $C_i(Z)$ fails for
$i \geq h$, and so conditioned on $Z_1,\ldots,Z_{h-1}$, for $i \geq h$,
$Z_i$ is identically 0, and so $V_i(Z)=0$. Therefore by part 1 of the claim:
\[
V(Z') \leq \sum_{i \leq s} V_i(Z') =\sum_{i\leq h-1} V_i(Z')=
\sum_{i \leq h-1} V_i(Z) \leq \rho_2B+sC,
\]
since $C_{h-1}(Z)$ holds.

For the third claim, if $\mu(Z) \leq \rho_2$  then by Hypothesis (2) of Lemma \ref{lemma:conc}, 
$\sum_{i\leq s}V_i(Z) \leq
\sum_{i=1}^s (B\mu_i(Z)+C) = B\mu(Z)+sC \leq B\rho_2+sC$ and so 
condition $C_s(Z)$ holds, which implies by the first part of the claim that $C_1(Z), \ldots, C_s(Z)$ all hold and that 
$Z'=Z$.
\end{proof}

Now define the martingale $X'$ by $X'_j=\mu_j(Z')-Z'_j$. Then:
\begin{align*}
\Pr[\Sigma(Z) < -q \text{ and } \mu(Z) \in  [\rho_1,\rho_2]] & \leq \Pr[\Sigma(Z') < -q \text{ and } \mu(Z') \in [\rho_1,\rho_2]] \tag{Part 3 of Claim~\ref{claim:Z'}}\\
& =  \Pr[\Sigma(X') > \mu(Z')+q \text{ and } \mu(Z') \in [\rho_1,\rho_2]] \\
&\leq \Pr[\Sigma(X') > \rho_1+q].
\end{align*}

We claim that the  hypotheses of Theorem~\ref{thm:freedman} are satisfied
with  $Y=X'$,  
$D=2A$, and $v=B\rho_2+sC$. For the first
hypothesis of Theorem \ref{thm:freedman}, $|Z_j|\leq A$ which 
implies $|\mu_j(Z)|\leq A$ and so $|X_j| \leq  |\mu_j(Z)|+|Z_j| \leq 2A$.
For the second hypothesis, we have 
\begin{align*}
V_j(X')&=\E[(\mu_j(X')-X'_j)^2 \mid X'_1,\ldots,X'_{j-1}] \\
&=\E[(\mu_j(Z')-Z'_j)^2 \mid X'_1,\ldots,X'_{j-1}] \\
&=\E[(\mu_j(Z')-Z'_j)^2 \mid Z'_1,\ldots,Z'_{j-1}]\\
&= V_j(Z') \leq B\rho_2+sC,  \tag{Part 2 of  Claim~\ref{claim:Z'}}
\end{align*}
where the 3rd equality holds because the sequences $Z'_1,\ldots,Z'_{j-1}$ and
$X'_1,\ldots,X'_{j-1}$ determine each other.

Since $\ell=\rho_1+q>0$ (by the hypothesis on $q$ in the current proposition),
we can apply Theorem~\ref{thm:freedman} to get:

\begin{align*}
\Pr\big[\Sigma(Z)<-q \text{ and } \mu \in [\rho_1,\rho_2]\big] &\leq \Pr\big[\Sigma(X') > \rho_1+q\big]\\ 
&\leq  \exp\left(-\frac{(\rho_1+q)^2}{2(B\rho_2+sC+2A(q+\rho_1))}\right) \tag{Theorem~\ref{thm:freedman}}\\
&\leq   \exp\left(-\frac{(\rho_1+q)^2}{2B(\rho_2+sC/B +(q+\rho_1)/2)}\right) \tag{since $4A \leq B$} \\
&\leq  \exp\left(-\frac{(\rho_1+q)^2}{4B\rho_2}\right), \tag{since $q\leq 2\rho_2-\rho_1-2sC/B$}
\end{align*}
as required.
\end{proof}

We can now finish the proof of Lemma~\ref{lemma:conc}.

\begin{proof}[Proof of Lemma~\ref{lemma:conc}]
Hypothesis 2 of the lemma implies $V(Z) \leq B\mu(Z)+sC$ and therefore
$\mu(Z) \geq -sC$,
since variance is nonnegative.
Cover the
interval $[-sC,\infty]$ by $[-sC,q] \cup \bigcup_{i \geq 1} [2^{i-1}q,2^iq].$  
For each of these intervals
we want to apply Proposition~\ref{prop:nice trick} with $[\rho_1,\rho_2]$ set
to that interval.
For $\rho_1=-sC$ and $\rho_2=q$, the hypothesis $q > -\rho_1$ holds because, by assumption, $q \ge 2sC / B > sC$; and the hypothesis $q \leq 2\rho_2-\rho_1-2sC/B = 2q+sC-2sC/B$ holds because $2sC/B\leq q$ by assumption.
For the interval $[\rho_1, \rho_2] = [2^{i - 1} q, 2^i q]$, the hypotheses of Proposition \ref{prop:nice trick} hold because $q > 0 > -\rho_1$ and  because $2\rho_2-\rho_1-2sC/B \geq 2\rho_2-\rho_1-q=(2^{i+1}-2^{i-1}-1)q \geq q$.  So, applying Proposition \ref{prop:nice trick} to each interval, we can conclude that:

\begin{align*}
\Pr[\Sigma(Z) < -q] &\leq  \Pr\big[\Sigma(Z) <-q  \text{ and } \mu(Z) \in [-sC,q]\big]+\sum_{i\geq 1} \Pr\left[\Sigma(Z) < -q \text{ and } \mu(Z) \in [q2^{i-1},q2^i]\right]\\
&\leq  \exp\left(-\frac{(q-sC)^2}{4Bq}\right)+\sum_{i\geq 1} \exp\left(-\frac{(q2^{i-1}+q)^2}{4\cdot2^iqB}\right) \tag{Prop.~\ref{prop:nice trick}}\\
&\leq  \exp\left(-\frac{(q-sC)^2}{4Bq}\right)+\sum_{i\geq 1} \exp\left(-\frac{(q2^{i-1})^2}{4\cdot2^iqB}\right) \\
 &\leq \exp\left(-\frac{(q/2)^2}{4Bq}\right)+\sum_{i\geq 1}\exp\left(-\frac{2^iq}{16B}\right) \tag{$q\geq 2sC/B$}\\
&\leq  \exp\left(-\frac{q}{16B}\right)+\sum_{j \ge 1}\exp\left(-\frac{jq}{16B}\right) \\
&\leq  \exp\left(-\frac{q}{16B}\right)+\sum_{j \ge 1}\exp\left(-\frac{q}{16B} - (j - 1)\right) \tag{$q>16B$}\\
&\leq \exp\left(-\frac{q}{16B}\right)+\exp\left(-\frac{q}{16B}\right)\sum_{j\geq 0}e^{-j} \\
&\leq  3 \exp\left(-\frac{q}{16B}\right) \tag{$q>16B$}.
\end{align*}
\end{proof}

\section{Related work}\label{sec:related}

In this section, we give a detailed discussion of related work on the list-labeling problem. To distinguish the different regimes in which one can study the problem, we will refer to $m = (1 + \Theta(1))n$ as the \defn{linear regime}, to $m = (1 + o(1))n$ as the \defn{dense regime}, to $m = n^{1 + \Theta(1)}$ as the \defn{polynomial regime}, and to $m = n^{\omega(1)}$ as the \defn{super-polynomial regime}. Although list labeling was originally formulated in the linear regime \cite{ItaiKoRo81}, the other regimes end up also being useful in many settings. 

\paragraph{Independent Formulations.}
There have been many independent formulations of list labeling under a variety of different names. 
The problem encapsulates several other scenarios beyond the maintenance of elements from an ordered universe in a sorted array.
Instead of elements coming from an ordered universe, one can think of elements coming from an unordered universe whose rank is determined relative to the elements in the current set at the moment of their insertion. 
This was the original formulation of Itai, Konheim, and Rodeh~\cite{ItaiKoRo81} who devised a sparse table scheme to implement priority queues.
Willard~\cite{Willard81} independently studied the \defn{file-maintenance problem} for maintaining order in a file as records are inserted and deleted.
Even more abstractly, one does not have to think of an array but of a linked list of items that are assigned labels from $\{1,\dots,m\}$,
and the natural order of the labels should correspond to the relative order of the items.
This view becomes relevant when $m$ is large relative to $n$ (the polynomial and super-polynomial regimes), and it was taken by Dietz~\cite{Dietz82}, Tsakalidis \cite{Tsakalidis84}, and Dietz and Sleator \cite{DietzSl87}, and Bender et al.~\cite{BenderCoDe02twosimplified} who (in some cases implicitly) applied both the polynomial and exponential regimes to the so-called \defn{order-maintenance problem}, which studies the abstract data-structural problem of maintaining ordered items in a linked list.
A problem similar to list labeling (in the polynomial regime) was studied in the context of balanced binary search trees by Andersson~\cite{Andersson89} and Andersson and Lai~\cite{AnderssonLa90}, as well as by Galperin and Rivest~\cite{GalperinR93} under the name \defn{scapegoat trees}. 
Raman~\cite{Raman99} formulated the problem in the linear regime in the context of building locality preserving dictionaries.
Hofri and Konheim~\cite{HofriK87} studied a sparse table structure that supports search, insert and deletion by keys in the linear and dense regimes.
Devanny, Fineman, Goodrich, and Kopelowitz~\cite{DevannyFiGo17} studied the \defn{online house numbering problem}, a version of list labeling where the goal is to minimize the maximum number of times that any one element gets moved (i.e., has its label changed).

\paragraph{Upper bounds.}
The most studied setting of the list-labeling problem is the linear regime, in which $m = (1 + \Theta(1)) n$. 
Itai, Konheim, and Rodeh\cite{ItaiKoRo81}, showed an upper bound of $O(\log^2 n)$ amortized cost per operation.
This was later deamortized to $O(\log^2 n)$ worst-case cost per operation by Willard~\cite{Willard82, Willard86, Willard92}.
Simplified algorithms for these upper bounds were provided by Katriel~\cite{Katriel02}, and Itai and Katriel~\cite{ItaiKa07} for the amortized bound and Bender, Cole, Demaine, Farach-Colton, and Zito~\cite{BenderCoDe02twosimplified} and Bender, Fineman, Gilbert, Kopelowitz, and Montes~\cite{BenderFiGi17} for the worst-case bound.
The upper bound of $O(\log^2 n)$ stood unimproved for four decades until Bender, Conway, Farach-Colton, Koml\'os, Kuszmaul, and Wein~\cite{BenderCFKKW22} showed an amortized $O(\log^{3/2}n)$ expected cost algorithm.
The same paper also proved an upper bound of $O(\log^{3/2}n / (\log^{1/2} \tau))$ for the \emph{sparse} regime where $m = \tau n$ for $\tau \leq n^{o(1)}$. The algorithm by Bender et al.~\cite{BenderCFKKW22} is history independent, and builds on techniques developed by an earlier $O(\log^2 n)$ expected-cost history-independent solution due to Bender, Berry, Johnson, Kroeger, McCauley, Phillips, Simon, Singh, and Zage~\cite{BenderBeJo16}.

In the polynomial regime, where $m=n^{1 + \Theta(1)}$, upper bounds of $O(\log n)$ have been shown~\cite{Kopelowitz12,Andersson89,GalperinR93}.
In the superpolynomial regime, where $m=n^{\omega(1)}$, Babka, Bul\'anek, \v{C}un\'at, Kouck\'y, and Saks~\cite{BabkaBCKS19} gave an algorithm with amortized $O(\log n/\loglog m)$ cost when $m=\Omega(2^{\log^k n})$, which implies a constant amortized cost algorithm in the pseudo-exponential regime where $m = 2^{n^{\Omega(1)}}$.

For the regime where $m=n$, Andersson and Lai~\cite{AnderssonLa90}, Zhang~\cite{zhang1993density}, and Bird and Sadnicki~\cite{BirdSa07} showed an $O(n\log^3n)$ upper bound for filling an array from empty to full (i.e., an insertion-only workload). This bound was subsequently improved to $O(n \log^{2.5} n)$ by Bender et al.~\cite{bender2022optimal}, and then to $\tilde{O}(\log^{2} n)$ in the current paper (Corollary \ref{cor:fillup}).

Finally, several papers (in the linear regime) have also studied forms of beyond-worst-case analysis. Bender and Hu~\cite{BenderHu07} provided an \emph{adaptive} solution, which has $O(\log n)$ amortized expected cost on certain common classes of instances while maintaining $O(\log^2n)$  amortized worst-case cost. McCauley, Moseley, Niaparast, and Singh \cite{mccauley2024online} study a setting in which one has access to a (possibly erroneous) prediction oracle, and give a solution that is parameterized by the oracle's error.

\paragraph{Lower bounds.}
In the linear regime, Dietz and Zhang~\cite{dietz1990lower} proved a lower bound of $\Omega(\log^2 n)$ for \defn{smooth} algorithms, which are restricted to rearrangements that spread a set of elements evenly across some subarray.
Bul\'anek, Kouck\'y, and Saks~\cite{BulanekKoSa12} later showed an $\Omega(\log^2 n)$ lower bound for deterministic algorithms.
Bender, Conway, Farach-Colton, Koml\'os, Kuszmaul, and Wein~\cite{BenderCFKKW22} showed a lower bound of $\Omega(\log^{3/2}n)$ for history-independent algorithms, where the notion of history independence that they used is that the set of slots occupied, at any given moment, should reveal nothing about the input sequence beyond the current number of elements.

In the polynomial regime, Dietz and Zhang~\cite{dietz1990lower} proved a lower bound of $\Omega(\log n)$ for smooth algorithms.
Dietz, Seiferas, and Zhang\cite{dietz2004tight}, and a later simplification by Babka, Bul\'anek, \v{C}un\'at, Kouck\'y, and Saks~\cite{BabkaBCKS12}, extended this to a lower bound of $\Omega(\log n)$ for general deterministic algorithms.
Finally, Bul\'anek, Kouck\'y, and Saks~\cite{BulanekKoSa13} proved an $\Omega(\log n)$ lower bound for general (including randomized) algorithms. This is also by extension the best known lower bound for randomized algorithms in the linear regime. 

In other regimes, Bul\'anek, Kouck\'y, and Saks~\cite{BulanekKoSa12} showed a deterministic lower bound of $\Omega(n\log^3n)$ for $n$ insertions into an initially empty array of size $m=n+n^{1-\epsilon}$. 
In the superpolynomial regime, Babka, Bul\'anek, \v{C}un\'at, Kouck\'y, and Saks~\cite{BabkaBCKS19} gave a deterministic lower bound of $\Omega\left(\frac{\log n}{\loglog m - \loglog n}\right)$ for $m$ between $n^{1+C}$ and $2^n$, which reduces to a bound of $\Omega(\log n)$ for $m=n^{1+C}$.

\paragraph{Other Theory Applications.}
In addition to the applications discussed above, list labeling has found many algorithmic applications in areas such as cache-oblivious data structures and computational geometry.
Many of these applications use \defn{packed-memory arrays}, which are list-labeling solutions in the linear (and dense) regimes with the added requirement that there are never more than $O(1)$ free slots in a row between consecutive elements. 
Various works show bounds of $O(\delta^{-1} \log^2 n)$ for this version of the problem~\cite{BenderDeFa05,BenderDeFa00,BenderFiGi05}. Improvements to list labeling in both \cite{BenderCFKKW22} and in this paper imply analogous improvements for packed-memory arrays (with our result bringing the bound down to $\tilde{O}(\delta^{-1}\log n)$). These improvements, in turn, imply immediate improvements to the bounds in many of the applications below. 

Packed-memory arrays have found extensive applications to the design of efficient cache-oblivious data structures.
Bender, Demaine, and Farach-Colton~\cite{BenderDeFa05} used the packed-memory array to construct a \defn{cache-oblivious B-tree}. Simplified algorithms for cache-oblivious B-trees were provided by Brodal, Fagerberg, and Jacob~\cite{BrodalFaJa02} and Bender, Duan, Iacono, and Wu~\cite{BenderDuIa04}. 
Bender, Fineman, Gilbert, and Kuszmaul~\cite{BenderFiGi05} presented \defn{concurrent} cache-oblivious B-trees and
Bender, Farach-Colton, and Kuszmaul~\cite{BenderFaKu06} presented \defn{cache-oblivious string B-trees}. 
All of these data structures use packed memory arrays. In each case, the list-labeling improvements in the current paper improve the range of parameters for which the above constructions are optimal, so that the restriction on the block-size $B$ goes from $B \ge \tilde{\Omega}(\log  \sqrt{\log n})$ (using the list-labeling solution from \cite{BenderCFKKW22}) to $B \ge \poly \log \log n$.

List labeling has also found applications in data structures for computational geometry problems.
Nekrich used the technique to design data structures for orthogonal range reporting~\cite{Nekrich07, Nekrich09} (these use the polynomial regime), the stabbing-max problem~\cite{Nekrich11} (this uses the linear regime), and a related problem of searching a dynamic catalog on a tree~\cite{Nekrich10} (this uses the linear regime). 
Similarly, Mortensen~\cite{Mortensen03} used the technique (in the linear regime) for the orthogonal range and dynamic line segment intersection reporting problems.

Additionally, Fagerberg, Hammer, and Meyer~\cite{FagerbergH019} use list labeling (implicitly, and in the linear regime) for a rebalancing scheme that maintains optimal height in a balanced B-tree. And Kopelowitz~\cite{Kopelowitz12} uses a generalization of the list-labeling problem (in the polynomial regime) to design an efficient algorithm for constructing suffix trees in an online fashion.

On the lower-bound side, Emek and Korman~\cite{EmekKo11} show how to make use of lower bounds for list labeling to derive lower bounds for the \emph{distributed controller problem}, which is a resource allocation problem in the distributed setting \cite{AfekAPS96}.

\paragraph{Practical Applications.}
Additionally, many practical applications make use of packed-memory arrays.  
Durand, Raffin and Faure~\cite{DurandRF12} use a packed-memory array in particle movement simulations to maintain sorted order for efficient searches. 
Khayyat, Lucia, Singh, Ouzzani, Papotti, Quian\'e-Ruiz, Tang and Kalnis~\cite{KLSOPQ0K17} handle dynamic database updates in inequality join algorithms using packed-memory arrays.
Toss, Pahins, Raffin and Comba~\cite{TossPRC18} constructed a \defn{packed-memory quadtree}, which supports large streaming spatiotemporal datasets. 
De Leo and Boncz~\cite{LeoB19pma} implement a \defn{rewired memory array}, which improves the practical performance of packed-memory arrays. 
\defn{Parallel} packed-memory arrays have been implemented in several works \cite{WheatmanX21, WheatmanX18, WheatmanB21, PandeyWXB21, LeoB21, LeoB19fastconcurrent} to store dynamic graphs with fast updates and range queries.

\section{Acknowledgements}

This work was supported by NSF grants 
CCF-2106999, 
CCF-2118620, 
CNS-1938180, 
CCF-2118832,  
CCF-2106827, 
CNS-1938709, and
CCF-2247577. 

\noindent Hanna Koml\'os was partially supported by the Graduate Fellowships for STEM Diversity.

\noindent Michal Kouck{\'{y}} carried out part of the work during an extended visit to DIMACS, with support from the National Science Foundation under grant number CCF-1836666 and from The Thomas C. and Marie M. Murray Distinguished Visiting Professorship in the Field of Computer Science at Rutgers University.
He was also partially supported by the Grant Agency of the Czech Republic under the grant agreement no. 24-10306S. 
This project has received funding from the European Union’s Horizon 2020 research and innovation programme under the Marie Skłodowska-Curie grant agreement No. 823748 (H2020-MSCA-RISE project CoSP). 

\noindent William Kuszmaul was partially supported by the Harvard Rabin Postdoctoral Fellowship.

\bibliographystyle{plain}
\bibliography{references}

\appendix
\section{Reducing Theorem \ref{thm:actualmain} to \ref{thm:nicetoprove}}\label{app:reductions}

To reduce Theorem \ref{thm:actualmain} to Theorem \ref{thm:nicetoprove}, we will make a series of (standard) simplifications that are each without loss of generality.

\paragraph{Ignoring deletions. }
We may assume without loss of generality that the sequence of operations includes only insertions.

\begin{proposition}\label{prop:deletions}
Any list-labeling solution that can start with (up to) $(1 - 3\gamma) m$ elements and support $\gamma m$ insertions with amortized expected cost $O(t(m, \gamma))$, can be modified to handle an arbitrary sequence of insertions/deletions, with up to $n = (1 - \delta)n$ elements present at a time, and with amortized expected cost $O(t(m, \delta/3) + 1 / \delta)$ per operation.
\end{proposition}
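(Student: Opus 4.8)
The plan is to apply the standard ``lazy deletions plus periodic global rebuild'' reduction, using the hypothesized insertion-only solution as a black box (call it the \emph{base algorithm}). Set $\gamma = \delta/3$, so that the base algorithm's allowed initial load $(1-3\gamma)m = (1-\delta)m$ is at least the maximum number $n$ of elements that can be present at a time. On the array $\arr{}$ of size $m$ we maintain, at all times, a sorted sequence consisting of the currently present elements together with \emph{tombstones} --- slots still occupied by previously deleted elements that have not yet been physically reclaimed --- along with an auxiliary dictionary mapping each present key to its slot. A deletion of $x$ just turns $x$'s slot into a tombstone and updates the dictionary, at cost $O(1)$, moving nothing. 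An insertion of $x$ that lands on an existing tombstone for $x$ (a delete-then-reinsert) revives that slot in place at cost $O(1)$; any other insertion is passed verbatim to the base algorithm, which treats tombstones exactly like ordinary elements.

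The base algorithm is run in \emph{epochs}. At the start of an epoch we (re)initialize a fresh copy on $\arr{}$ with the current set of present elements spread evenly and all tombstones discarded; the epoch then runs until the base algorithm has absorbed $\gamma m$ fresh insertions, at which point we perform a \emph{global rebuild} --- discard all tombstones, re-spread the present elements, re-initialize the base algorithm --- at cost $O(m)$, and begin the next epoch. First I would check that the base algorithm is always operated within its contract: at an epoch start the array holds at most $n \le (1-3\gamma)m$ items and they are all present elements, which is an allowed starting configuration; thereafter the base algorithm absorbs at most $\gamma m$ further items before the next rebuild, so the array never holds more than $(1-2\gamma)m \le m$ items. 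Hence every insertion passed down is legal, no overflow ever occurs (giving validity), and the ordering invariant holds because tombstones occupy order-consistent positions, so the present elements stay sorted both while tombstones are around and after a rebuild removes them.

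For the cost bound, fix a prefix of $j$ operations and let $I \le j$ of them be insertions. Deletions and in-place revivals cost $O(1)$ each, hence $O(j)$ in total. Inside one epoch the base algorithm processes at most $\gamma m$ insertions; by the definition of amortized expected cost, the expected total base-algorithm cost incurred by the first $j'$ insertions of an epoch is $O(j' \cdot t(m,\gamma))$, so summing over all epochs the base algorithm contributes $O(I \cdot t(m,\gamma))$ in expectation. There is at most one global rebuild per $\gamma m$ insertions, each of cost $O(m)$, for a total of $O\big((I/(\gamma m))\cdot m\big) = O(I/\gamma) = O(I/\delta)$ (and the first epoch needs no setup cost, since the structure simply starts in its promised initial configuration). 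Altogether the expected total cost of the first $j$ operations is $O\big(j\,(t(m,\delta/3) + 1/\delta)\big)$, i.e.\ amortized expected cost $O(t(m,\delta/3) + 1/\delta)$ per operation.

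I do not anticipate a real obstacle: this is a routine black-box reduction. The two points that need care are (i) choosing the rebuild frequency, $\Theta(\delta m)$ insertions, so that the base algorithm is always used within its regime (initial load at most $(1-3\gamma)m$, at most $\gamma m$ insertions per lifetime) while the $O(m)$ rebuild cost still amortizes down to $O(1/\delta)$; and (ii) handling delete-then-reinsert in place via the dictionary, which is what guarantees that each insert operation feeds at most one fresh item to the base algorithm and hence keeps both the physical item count and the per-epoch insertion count under control.
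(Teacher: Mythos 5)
Your proof is correct and is essentially the paper's reduction: set $\gamma = \delta/3$, replace deletions by tombstones so the base insertion-only algorithm is used as a black box, and periodically do an $O(m)$ global rebuild whose cost amortizes to $O(1/\delta)$. The one substantive difference is the rebuild trigger: the paper batches by $\gamma m$ \emph{deletions}, while you rebuild after $\gamma m$ \emph{fresh insertions}. Your choice is actually the cleaner one, since it is the insertions that consume the base algorithm's budget (deletions only convert existing occupants into tombstones and never grow the base algorithm's occupancy), so your invariant that the base structure never exceeds $(1-2\gamma)m$ items follows immediately; the paper's deletion-triggered batching, read literally, does not by itself cap the number of insertions fed to the base algorithm within a batch. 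Your extra details (the auxiliary dictionary and in-place revival of a same-key delete-then-reinsert) are harmless bookkeeping that the paper elides; just note that to detect a revival you need the dictionary to also track tombstoned keys, not only present ones.
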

\begin{proof}
Set $\gamma = \delta / 3$. We can collect deletions into batches of size $\gamma m$. As a batch forms, we ``pretend'' that the elements in the batch have not yet been deleted (i.e., we replace the deleted elements with tombstones, which we think of as elements).

Once a batch is fully formed, we rebuild the entire data structure from scratch, so that the deleted elements are cleared out. This rebuild increases the amortized expected cost by $O(1 / \delta)$ per operation. During each batch, we are supporting an insertion-only workload that starts with (up to) $(1 - \delta)m = (1 - 3\gamma) m$ elements and performs up to $\delta n / 3 = \gamma m$ insertions. The amortized expected cost per batch is therefore $O(t(m, \gamma))$.
\end{proof}

\paragraph{Reducing to $n = m/2$. }Our new task is to support a sequence of insertions that starts with up to $(1 - 3\delta)n$ elements, and performs up to $\delta n$ insertions. 

The following lemma reduces this problem to the problem of performing $n = m/2$ insertions in an initially empty size-$m$ array. 

\begin{lemma}\label{lemma:increment}
    Let there be a list labeling algorithm $A'$ that for every $n' \ge 1$, it can insert $n'$ items into an initially empty array of size $m'=2n'$ for amortized expected cost $t(n')$, where $t(n')\ge 1$ is a non-decreasing function.
    Then for every fixed $\delta \in (0,1/2]$ there is a list labeling algorithm $A$ that for every $m \ge 1$ can insert $\lceil \delta m /3 \rceil$ items into an array of size $m$ that already contains $\lfloor (1-\delta)m \rfloor$ items, and where the amortized expected cost is at most cost $O(t(m) / \delta + 1 / \delta)$.
\end{lemma}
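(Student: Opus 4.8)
To prove Lemma~\ref{lemma:increment}, the plan is to regard the $\lceil\delta m/3\rceil$ insertions as a small increment layered on top of a nearly static bulk of $\lfloor(1-\delta)m\rfloor$ old elements. First I would perform one global rebuild that packs the current $\lfloor(1-\delta)m\rfloor$ elements into $K=\Theta(\delta m)$ contiguous \emph{bulk blocks} of size $\Theta(1/\delta)$, fully packed, so that the remaining $\Theta(\delta m)$ free slots are grouped into clusters that sit, in rank order, between the bulk blocks. This rebuild costs $O(m)$, and since at least $\delta m/3$ insertions follow, it amortizes to $O(1/\delta)$ per insertion. Into the free clusters I would embed a single instance $\mathcal A'$ of $A'$, with array size $2\lceil\delta m/3\rceil$; this fits, since $\delta\le 1/2$ forces $2\lceil\delta m/3\rceil\le\delta m$ for large $m$. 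Routing everything ``new'' through one copy of $A'$ works because $A'$'s promised interface is exactly ``$n'$ insertions into an initially empty size-$2n'$ array'', and I will arrange that $\mathcal A'$ receives at most $\lceil\delta m/3\rceil$ insertions in total.

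Next I would specify the routing of an insertion $x$. If the rank of $x$ falls strictly between two bulk blocks, I insert $x$ directly into $\mathcal A'$; this is one $A'$-operation. If the rank of $x$ falls inside a bulk block $B$, I evict from $B$ the element closest in rank to a free cluster adjacent to $B$, insert that evicted old element into $\mathcal A'$, and slide $x$ into the slot of $B$ thereby freed; the slide costs $O(|B|)=O(1/\delta)$, and the eviction is again one $A'$-operation. Since each insertion triggers at most one insertion into $\mathcal A'$, the copy of $A'$ is used within spec, and by hypothesis each of its operations costs $O(t(\lceil\delta m/3\rceil))=O(t(m))$ amortized (using that $t$ is non-decreasing and $\lceil\delta m/3\rceil\le m$). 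Correctness --- sorted order and never overflowing --- follows because after the rebuild the bulk blocks never change in size or element count, the total space used is $\lfloor(1-\delta)m\rfloor+2\lceil\delta m/3\rceil\le m$, and $\mathcal A'$ never exceeds its capacity. Tallying: the rebuild contributes $O(1/\delta)$ amortized, the in-block slides contribute $O(1/\delta)$ amortized, and the $A'$-operations contribute $O(t(m))$ amortized, for a total of $O(t(m)+1/\delta)$, which in particular is $O(t(m)/\delta+1/\delta)$, as required.

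The step I expect to be the main obstacle is keeping the physical placement of $\mathcal A'$'s array consistent with rank order against an adversary that concentrates many new elements between the same two bulk blocks: a fixed, even spreading of $\mathcal A'$'s physical cells among the $\Theta(\delta m)$ inter-block gaps leaves only $\Theta(1)$ cells per gap, whereas up to $\Theta(\delta m)$ of them might be needed between a single pair of blocks. I would handle this by not fixing $\mathcal A'$'s physical footprint once and for all: whenever the current allocation of $\mathcal A'$'s cells to the gaps becomes inadequate for the observed distribution of insertions, I rebuild the whole layout, reallocating $\mathcal A'$'s cells (and repositioning the bulk blocks) in proportion to the insertions seen so far in each gap. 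The accounting fact that must be established is that the total cost of these reallocations, $O(m)$ apiece, amortizes to within $O(t(m)/\delta+1/\delta)$ over the $\delta m/3$ insertions --- which I would argue by a potential/charging argument that charges each reallocation to $\Omega(\delta^2 m)$ fresh insertions that forced it (or, alternatively, by a doubling scheme in which each gap's cell allocation at most doubles while the total allocation stays $\Theta(\delta m)$, capping the number of reallocations). The remaining points --- exactly which element of $B$ to evict so that sortedness is preserved, and the rounding in the choice of $K$ and of the block sizes --- are routine.
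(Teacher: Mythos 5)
There is a genuine gap, and it is precisely at the point you flag as ``the main obstacle,'' but I don't think either of your proposed fixes addresses it. The difficulty is not just that a gap may run out of cells; it is that $A'$ is a black box which decides, by its own internal logic, \emph{which slot} of its size-$2n'$ array each of its items occupies. Once you fix a monotone map from $A'$-slots to physical cells scattered among the inter-block gaps, you need the following to hold at all times: for every gap $g$ between bulk blocks $B_g$ and $B_{g+1}$, exactly those $\mathcal A'$-items with rank strictly between $\max B_g$ and $\min B_{g+1}$ occupy slots whose physical cells lie in gap $g$. But $A'$ only guarantees that its own items appear in sorted order within its own array; it is free to (and, after its internal rebuilds, typically will) shift items across the boundary between, say, slots mapped to gap $g$ and slots mapped to gap $g+1$. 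Since $A'$ knows nothing about the bulk blocks, nothing stops it from, e.g., packing all of its items into the left half of its array even though those items' ranks should be spread across all $K$ gaps. Your reallocation/doubling scheme resizes the \emph{number} of cells assigned to each gap based on observed insertion counts, but it does not (and cannot, without opening up $A'$) constrain $A'$'s slot assignments to respect the gap boundaries between reallocations; the sorted-order invariant for the full real array can therefore be violated immediately after any ordinary $A'$-operation. (There is also a secondary issue: the amortization of the reallocations is only sketched, and the ``each gap at most doubles'' version would naively permit $\Theta(\delta m \log n)$ reallocations at cost $O(m)$ each, which is far too much.)

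The paper resolves exactly this coupling problem by \emph{feeding a $\Theta(\delta)$-fraction of the initial items into $A'$ as well}: it designates roughly every $\lceil 3/\delta\rceil$-th initial item as ``visible'' and inserts it into the virtual $A'$ array along with all new items, while the remaining ``invisible'' items are kept as a block of at most $\lceil 3/\delta\rceil - 1$ elements glued immediately to the right of their visible anchor (with no free slot allowed between a visible item and its trailing invisible block). The invariant is that deleting the invisible items and their slots from the real array yields exactly $A'$'s virtual array. Then whenever $A'$ moves a visible item, $A$ moves it together with its $O(1/\delta)$ invisible followers, costing a $O(1/\delta)$ multiplicative overhead over $A'$'s cost --- which is where the $t(m)/\delta$ term in the lemma comes from, and which your approach would not pay if it worked. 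The key structural benefit is that $A'$'s placements now automatically respect the bulk structure, because the ``bulk'' is physically attached to items that $A'$ itself is managing, so there is no black-box mismatch to reconcile. Your proposal, by contrast, keeps the bulk entirely outside $A'$'s view, and that is what makes the interleaving unenforceable.
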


As the proof of Lemma \ref{lemma:increment} requires some care, we defer it to the end of the section.

\paragraph{Starting with $m / 4$ elements. }So far, we have reduced Theorem \ref{thm:actualmain} to the setting in which we wish to perform $n = m/2$ insertions in an initially-empty array of size $m$.  However, we can break these insertions into batches, where we fill the array from from $1 / 2^i$ full to $1 / 2^{i - 1}$ full, for some $i$, and we can implement each batch on an array of size $m / 2^{i - 2} \le m$. Thus, if we focus just on the task of implementing a batch, our final problem is: perform $m/4$ insertions in an array that initially contains $m/4$ elements. This is precisely the problem considered by Theorem \ref{thm:nicetoprove}, which completes the reduction from Theorem \ref{thm:actualmain}.

\paragraph{Proving Lemma \ref{lemma:increment}. }
We now prove Lemma \ref{lemma:increment}. 
\begin{proof}
For $\delta < 12/m$ the claim is trivial so we assume that $\delta \ge 12/m$.
Our algorithm $A$ with a \defn{real} array of size $m$ will simulate algorithm $A'$ on a \defn{virtual} array of size $m'=2n'$, where $n' = 2 \lfloor \delta m /3 \rfloor$.
Algorithm $A'$ will get to insert $n'$ items into its virtual array.
The first $n'/2$ items that will $A'$ get are selected from the initial items that are in the real array of $A$.
The next $n'/2$ items will be the items that $A$ should insert into its real array (except for the very last one depending on rounding).
The state of the real array during the latter $n'/2$ insertions will reflect the state of the virtual array.

$A$ will classify each of its items as either \defn{visible} or \defn{invisible}.
All items that will be inserted into the virtual array will be visible, all the other items will be invisible.
In particular, all the items newly inserted into the real array will be visible.
Initially, $A$ selects from the real array $n'/2$ items as the visible items and declares the remaining items as invisible.
The algorithm selects as visible each initial item of rank $1+ \lceil 3/\delta \rceil i$, for $i=0,1,\dots$,
together with additional items of the smallest rank so to have exactly $n'/2$ visible items.
(As the number of initial items is at least $\lfloor m/2 \rfloor \ge n'/2$, there are enough items to chose from.)

Algorithm $A$ will maintain the following two invariants: 
(1) No free slot in the real array can be immediately to the left of an invisible item, and
 (2) If we remove the invisible items together with their slots from the real array we get a copy of the current state of the virtual array. 
Since the left-most item in $A$ will be always visible, invariant (1) means that invisible items form blocks of invisible items that follow immediately a visible item.
After each block of invisible items there might be free slots followed by a visible item.
Because we initially select each item of rank  $1+ \lceil 3/\delta \rceil i$, for $i=0,1,\dots$, as visible
and newly inserted items will be also visible, each block of invisible items will always be of size at most  $\lceil 3/\delta \rceil - 1$.

To start the simulation, $A$ inserts the initial set of visible items into the virtual array using $A'$.
Then it will rearrange the real array to satisfy the two invariants.
This will move at most $m$ items in the real array.
Then we process new insertions into $A$.

For each newly inserted item $b$, $A$ proceeds as follows.
It passes $b$ to $A'$ as a new insertion.
In response to the insertion request, $A'$ might rearrange its items in the virtual array to prepare an appropriate free slot for $b$.
Then $A'$ inserts $b$ into the free slot.
Before $A'$ inserts $b$ into the free slot, $A$ rearranges its real array to satisfy invariant (2) (and also invariant (1)) as items in the virtual array might have moved.
Notice, the position of a particular visible item in the real array is given by the number of visible items to its left, together with the number of empty slots to its left,
and the number of invisible items to its left. 
Similarly, the position of the same visible item in the virtual array is given by the number of visible items to its left together with the number of empty slots to its left.
This implies that if an item in the virtual array retains its position during the rearrangement by $A'$, 
it should retain its position also in the real array during the rearrangement by $A$.
Also the block of invisible items following such an item will stay in place.

Thus $A$ will have to move at most  $\lceil 3/\delta \rceil$-times many items as $A'$ did in the virtual array in order to re-establish the invariants.
(It has to move the same number of visible items and each is followed by a block of at most  $\lceil 3/\delta \rceil - 1$ invisible items.)

After the rearrangement of the real array, $A$ will proceed to insert the item $b$.
Let $a$ be the closest visible item before $b$ in the virtual array.
Let $b$ be put into $i$-th empty slot following $a$ in the virtual array.
Let there be $\ell$ invisible items following $a$ in the real array.
Let $\ell'$ of those invisible items be smaller than $b$.
Algorithm $A$ will move the last $\ell-\ell'$ invisible items following immediately after $a$ in the real array $i$ positions to the right.
Then $A$ inserts $b$ into $(\ell'+i)$-th position after $a$, that is in the free slot immediately to the left of the moved invisible items.
This will re-establish the correspondence between the virtual and real array.
The cost of the additional moves is at most  $\lceil 3/\delta \rceil$.

The total number of moves done by $A'$ during its $n'$ insertions is $n' \cdot t(n')$.
(Although only half of the inserted items are new.)
Hence, the total number of moves done by $A$ during $n'/2$ new insertions is 
bounded by $m +  \lceil 3/\delta \rceil \cdot n' \cdot t(n') +  \lceil 3/\delta \rceil \cdot n'/2$.
Since  $\lceil 3/\delta \rceil \cdot \lfloor \delta m /3 \rfloor \le \frac{\delta m}{3} \cdot \frac{3+\delta}{\delta} \le 2m$,
the total cost can be bounded by $4m t(m) + 3m$.

We can accommodate an additional insert into $A$ for the cost of at most $m$,
hence inserting at least  $\delta m /3$ items
for the amortized expected cost $3(4m t(m) + 4m) / \delta m = 12(t(m)+1) / \delta$ as needed.
\end{proof}

Finally, it is worth pointing out one corollary of the lemma, which is the following claim about filling an array from empty to full:

\begin{corollary}
    If there is a list labeling algorithm $A'$ that for every $n' \ge 1$ can insert $n'$ items into an initially empty array of size $m'=2n'$ for amortized expected cost $t(n')$, where $t(n')\ge 1$ is a non-decreasing function
    then there is a list labeling algorithm $A$ that can insert $n$ items into an initially empty array of size $n$ with amortized expected cost $O(t(n) \log n)$ per insertion.
\end{corollary}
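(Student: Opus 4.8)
The plan is to fill the size-$n$ array in a sequence of phases, indexed by $i$, where during phase $i$ we raise the number of stored elements from roughly $(1-2^{-i+1})n$ to roughly $(1-2^{-i})n$. I would implement the very first phase (bringing the array from empty to half full) using $A'$ directly, and every later phase using the algorithm guaranteed by Lemma~\ref{lemma:increment}. The reason the first phase needs special treatment is that bringing the array from empty to half full via Lemma~\ref{lemma:increment} would require invoking it with $\delta$ close to $1$, whereas the lemma only allows $\delta \le 1/2$. Instead, observe that running $A'$ with parameter $n' = \lfloor n/2\rfloor$ on the first $2\lfloor n/2\rfloor \le n$ slots of the array inserts $\lfloor n/2\rfloor$ elements and leaves the array (essentially) half full, at amortized expected cost $t(\lfloor n/2\rfloor) \le t(n)$ per insertion; so the first $\lfloor n/2\rfloor$ insertions cost $O(t(n)\, n)$ in total.

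For phase $i \ge 2$, the array currently holds $a_{i-1} \approx (1-2^{-i+1})n$ elements and we wish to reach $a_i \approx (1-2^{-i})n$, i.e.\ to perform $\Theta(n/2^i)$ further insertions. I would pick $\delta = \delta_i = \Theta(2^{-i}) \le 1/2$ so that $\lfloor(1-\delta_i)n\rfloor = a_{i-1}$ (ignoring rounding, per the paper's conventions), and apply the algorithm from Lemma~\ref{lemma:increment} with array size $m = n$. One application inserts $\lceil \delta_i n/3\rceil = \Theta(n/2^i)$ elements at amortized expected cost $O(t(n)/\delta_i + 1/\delta_i) = O((t(n)+1)\,2^i)$ per insertion. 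Since $\lceil\delta_i n/3\rceil$ is within a constant factor of the number of insertions needed in phase $i$, a constant number of such calls, each with its own $\delta = \Theta(2^{-i})$ chosen to match the (growing) current element count and with the last call truncated so as not to overshoot $a_i$, suffices to finish the phase. Hence phase $i$ performs $\Theta(n/2^i)$ insertions, each of amortized cost $O((t(n)+1)\,2^i)$, for a phase total of $O((t(n)+1)\,n)$, and there are $O(\log n)$ phases in all.

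Summing, the total cost is $O(t(n)\,n) + \sum_{i=2}^{O(\log n)} O((t(n)+1)\,n) = O((t(n)+1)\,n\log n)$, which is $O(t(n)\,n\log n)$ because $t(n) \ge 1$; dividing by the $n$ insertions gives amortized expected cost $O(t(n)\log n)$ per insertion, as claimed. The two points requiring care are (a) the first phase, handled by calling $A'$ directly rather than Lemma~\ref{lemma:increment}, and (b) the rounding and bookkeeping needed to realize each later phase as a bounded number of calls to Lemma~\ref{lemma:increment} with admissible parameters $\delta=\Theta(2^{-i})$ while keeping the element count exactly on target; neither is difficult, and the geometric decay of phase sizes is what makes all the per-phase costs collapse to a single $O((t(n)+1)\,n)$ term.
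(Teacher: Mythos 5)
Your proposal is correct and takes essentially the same approach as the paper: fill the array to half full by calling $A'$ directly (precisely because Lemma~\ref{lemma:increment} requires $\delta \le 1/2$), then iterate the lemma with geometrically shrinking $\delta$, observing that each round costs $O((t(n)+1)n)$ and there are $O(\log n)$ rounds. The only cosmetic difference is that the paper lets each phase be a single call to Lemma~\ref{lemma:increment} (so the number of empty slots shrinks by a factor of at most $2/3$ per phase, giving $\log_{3/2} n$ phases), whereas you aim for exact halving and so allow a constant number of calls per phase; both bookkeeping schemes work.
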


\begin{proof}
First, apply the algorithm $A'$ to insert $\lfloor n/2 \rfloor$ items into the array for the total cost at most $n \cdot t(n)$.
Then proceed in phases. 
Each phase $i=1,\dots$, starts with $e_i\ge 1$ remaining empty slots.
It applies algorithm $A$ from Lemma~\ref{lemma:increment} for $\epsilon_i = e_i/n $ to insert next $\lceil e_i /3 \rceil$ items.
The algorithm stops once $n$ items are inserted.
The cost of each phase $i\ge 1$ is at most $\lceil \frac{e_i}{3} \rceil \cdot  12(t(n)+1) / \delta = \lceil \frac{e_i}{3} \rceil \cdot \frac{n}{e_i} \cdot 12(t(n)+1) \le 12n (t(n)+1)$.
Since $e_{i+1} \le 2 e_i/3$, there are at most $\log_{3/2} n$ phases.
Thus the total cost to fill in the array is bounded by $O(n \cdot t(n) \log n)$.
The lemma follows.
\end{proof}

Thus, one immediate consequence of Theorem \ref{thm:nicetoprove} is:

\fillup*

\section{Pseudocode for the See-Saw Algorithm} 
\label{app:pseudocode}

In this section, we give pseudocode for the See-Saw Algorithm. We assume parameters $\alpha = \calpha (\loglog n)^2$ and $\beta = \cbeta (\loglog n)^2$, where $\calpha$ and $\cbeta$ are positive constants selected so that $\calpha$, $\cbeta$, and $\calpha / \cbeta$ are all sufficiently large.

\paragraph{Variables to be used in pseudocode. }Before presenting the algorithm pseudocode, we list the relevant variables for \defn{subproblem} $\prob$. 
We emphasize that many of these variables are dynamically changing over time, i.e., are updated dynamically within the pseudocode. 

\begin{itemize}

    \item $\leftchild{\prob}$ and $\rightchild{\prob}$ are the \defn{left and right child} of $\prob$, respectively.
    
    \item $\arr{\prob}$ is an array such that $\arr{\prob} = \arr{\leftchild{\prob}} \oplus \arr{\rightchild{\prob}}$ (the concatenation of the arrays). 

    \item $\arrayskew{\prob}$ is the \defn{array skew}, such that 
$|\arr{\leftchild{\prob}}| = |\arr{\prob}|/2-\arrayskew{\prob}$ and $|\arr{\rightchild{\prob}}|= |\arr{\prob}|/2+\arrayskew{\prob}$.

    \item The \defn{pivot} $\pivot{\prob}$ partitions the insertions that go to the left and right children of $\prob$.
    Upon creation of a subproblem, $\pivot{\prob}$ will be set to be the largest element stored in the subarray of its left child.
    \footnote{It turns out that $\leftchild{\prob}$ is guaranteed to have at least one element, so $\pivot{\prob}$ is guaranteed to exist. Since we are not going to prove this explicitly, one can think of there as being an extra edge case (that will never occur) in which, if $\leftchild{\prob}$ has no elements, then insertions to $\prob$ always go to $\rightchild{\prob}$.}
    This element will remain the pivot until $\prob$ ends or is reset.
 
    \item The \defn{rebuild window size} $\window{\prob}$ is the number of insertions permitted between rebuilds.

    \item $\insertcount{\prob}$ is the number of insertions that have occurred during the current rebuild window.

    \item $\lifetimeinsertcount{\prob}$ 
    is the number of insertions that have occurred during the lifetime of $\prob$.

    \item $\insertskew{\prob}$ is the number of insertions that occurred during the current window that are greater than the pivot minus those that are less than the pivot.  This is called the \defn{insertion skew} of the window.

    \item $\windowcount{\prob}$ is a counter specifying which window we are in, starting with window 1. 

\end{itemize}
\smallskip

\paragraph{Pseudocode.} Below, we give pseudocode for both insertions and the subroutines used within an insertion. We assume that the subproblem tree is initialized (at the beginning of time, with $m/4$ initial elements) by a call to \textsc{CreateSubtree}.

\bigskip
\algo{CreateSubtree}($\arr{}',\mathcal{S}'$)
\begin{algorithmic}[1]
\State Move the items in $\mathcal{S}'$ so that they are uniformly spread out in array $\arr{}'$
\State \Return \textsc{AllocateBalancedSubproblems}($\arr{}'$) \Comment{Builds tree of subproblems on $\arr{}'$}
\end{algorithmic}
\label{alg:rebuild}

\bigskip
\algo{AllocateBalancedSubproblems}($\arr{}'$):
\begin{algorithmic}[1]
    \State Create a new subproblem $\prob$   
    \State $\arr{\prob} \gets \arr{}'$
    \If{density$(\prob) > 0.75$ or $|\arr{\prob}| \le 2^{\sqrt{\log n}}$} 
    \State Declare $\prob$ to be a leaf
    \State \Return $\prob$
    \EndIf
 
    \State $\window{\prob} \gets$ \textsc{PickWindowLength}($\prob$)
    \State $\insertcount{\prob} \gets 0$ ; $\insertskew{\prob} \gets 0$ ; $\windowcount{\prob} \gets 0$; $\lifetimeinsertcount{\prob} \gets 0$ 
    \State $L \gets$ the left half of $\arr{\prob}$, $R \gets$ the right half of $\arr{\prob}$
    \State $\pivot{\prob} \gets$ the largest element in $L$  
    \State $\leftchild{\prob} \gets$ \textsc{AllocateBalancedSubproblems}($\arr{\leftchild{\prob}}$) 
    \State $\rightchild{\prob} \gets$ \textsc{AllocateBalancedSubproblems}($\arr{\rightchild{\prob}}$)
    \State \Return $\prob$
\end{algorithmic}
\label{alg:initialization}
\algo{Insert}$(x,\prob)$:
\begin{algorithmic}[1]
\If{$\prob$ is a leaf}
    \State Insert $x$ into $\prob$ using the classical algorithm
    \State \Return
\EndIf
\If{$x \leq \pivot{\prob}$}
    \State \textsc{Insert}$(x,\leftchild{\prob})$
    \State $\insertskew{\prob} \gets \insertskew{\prob} - 1$
    \If{$\lifetimeinsertcount{\leftchild{\prob}} \ge |\arr{\leftchild{\prob}}|/\alpha$}
    \State $\leftchild{\prob} \gets $ \textsc{CreateSubtree}($\arr{\leftchild{\prob}}$, \textsc{Set}(${\leftchild{\prob}}$)
    \Comment{Reset $\leftchild{\prob}$}
    \EndIf
\Else   
    \State \textsc{Insert}$(x,\rightchild{\prob})$
    \State $\insertskew{\prob} \gets \insertskew{\prob} + 1$
    \If{$\lifetimeinsertcount{\rightchild{\prob}} \ge |\arr{\rightchild{\prob}}|/\alpha$}
    \State $\rightchild{\prob} \gets $ \textsc{CreateSubtree}($\arr{\rightchild{\prob}}$, \textsc{Set}(${\rightchild{\prob}}$)
    \Comment{Reset $\rightchild{\prob}$}
    
    \EndIf
\EndIf
\State $\insertcount{\prob} \gets \insertcount{\prob} + 1$
\State $\lifetimeinsertcount{\prob} \gets \lifetimeinsertcount{\prob} + 1$
\If{$\insertcount{\prob} = \window{\prob}$} \Comment{End of rebuild window}
    \State \textsc{SkewRebuild}($\prob$) 
    \State $\windowcount{\prob} \gets \windowcount{\prob} + 1$;
     $\insertcount{\prob} \gets 0$;
     $\insertskew{\prob} \gets 0$
\EndIf

\If{$\prob = \mathrm{root}$ and $\insertcount{\prob} = m/\alpha$}
    \State root $\gets$ \textsc{CreateSubtree}($\arr{},\textsc{Set}(\mathrm{root})$) \Comment{Reset the root}
\EndIf
\end{algorithmic}
\label{alg:insert}

\bigskip

\algo{SkewRebuild}$(\prob)$:
\begin{algorithmic}[1]
\State $\arrayskew{\prob} \gets $ \textsc{PickArraySkew}($\prob$)
\State $\tempset{L} \gets $\textsc{Set}(${\leftchild{\prob}}$) 
\State $\tempset{R} \gets $\textsc{Set}(${\rightchild{\prob}}$)
\Comment{Keep the items stored in the left and right children the same}
\State $L \gets$ the array consisting of the first $|\arr{\prob}| - \arrayskew{\prob}$ slots in $\arr{\prob}$ 
\State $R \gets$ the array consisting of the first $|\arr{\prob}| + \arrayskew{\prob}$ slots in $\arr{\prob}$
\State $\leftchild{\prob} \gets $ \textsc{CreateSubtree}($L$, $\tempset{L}$)
\State $\rightchild{\prob} \gets $\textsc{CreateSubtree}($R$, $\tempset{R}$)
\end{algorithmic}
\label{alg:skewrebuild}

\bigskip

\algo{PickArraySkew}($\prob$):
\begin{algorithmic}[1]
\If{$\windowcount{\prob}$ is odd}
   \State \Return 0
\Else   
   \State \Return $|\arr{\prob}| \cdot \frac{\insertskew{\prob}}{\beta \window{\prob}}$ 
\EndIf
\end{algorithmic}
\label{alg:pickarrayskew}
\algo{PickWindowLength}($\prob$):
\begin{algorithmic}[1]
\State $\imax \gets 2 \log \log n$
\State For $k \in [1, \imax]$, $p_k \gets 2^{-(k + 1)} (1 + k/\imax)$
\State $p_0 \gets 1 - \sum_{k = 1}^{\imax} p_i$
\State Draw $\windowpar{\prob}$ so that $\Pr[\windowpar{\prob} = k] = p_k$
\State \Return $|\arr{\prob}|/(\alpha 2^{\windowpar{\prob}})$
\end{algorithmic}
\label{alg:pickwindowlength}

\bigskip
\algo{Set}($\prob$):
\begin{algorithmic}[1]
\State \Return $\{y \mid y \textrm { is stored in } \arr{\prob} \}$
\end{algorithmic}
\label{alg:set}

\end{document}